\def\bbordermatrix#1{\begingroup \m@th
  \global\let\perhaps@scriptstyle\scriptstyle
  \@tempdima 4.75\p@
  \setbox\z@\vbox{%
    \def\cr{%
      \crcr
      \noalign{%
        \kern2\p@
        \global\let\cr\endline
        \global\let\perhaps@scriptstyle\relax
      }%
    }%
    \ialign{$\make@scriptstyle{##}$\hfil\kern2\p@\kern\@tempdima
      &\thinspace\hfil$\perhaps@scriptstyle##$\hfil
      &&\quad\hfil$\perhaps@scriptstyle##$\hfil\crcr
      \omit\strut\hfil\crcr
      \noalign{\kern-\baselineskip}%
      #1\crcr\omit\strut\cr}}%
  \setbox\tw@\vbox{\unvcopy\z@\global\setbox\@ne\lastbox}%
  \setbox\tw@\hbox{\unhbox\@ne\unskip\global\setbox\@ne\lastbox}%
  \setbox\tw@\hbox{$\kern\wd\@ne\kern-\@tempdima\left[\kern-\wd\@ne
    \global\setbox\@ne\vbox{\box\@ne\kern2\p@}%
    \vcenter{\kern-\ht\@ne\unvbox\z@\kern-\baselineskip}\,\right]$}%
  \null\;\vbox{\kern\ht\@ne\box\tw@}\endgroup}
\def\make@scriptstyle#1{\vcenter{\hbox{$\scriptstyle#1$}}}
\newtheorem{theorem}{Theorem}[section]
\newtheorem{proposition}[theorem]{Proposition}
\newtheorem{definition}[theorem]{Definition}
\newtheorem{lemma}[theorem]{Lemma}
\newtheorem{assumption}[theorem]{Assumption}
\def\pilbar{{\underline\pi}}  \def\piubar{{\overline\pi}}
\def\sigmaubar{{\overline\sigma}}
\def\Pibar{{\overline\Pi}}  \def\alphalbar{{\underline\alpha}}
\def\betalbar{{\underline\beta}}
    \def\epsilonbar{{\overline\epsilon}}
   \def\bigO{{\sf O}}
\def\alphalbar{{\underline\alpha}}  \def\alphaubar{{\overline\alpha}}
\def\pbar{{\overline p}}
\def\Expect{{\sf E}}
\def\Prob{{\sf P}}
\long\def\notes#1{\ifinner
           {\tiny #1}
           \else
           \marginpar{\parbox[t]{\noteWidth}{\raggedright\tiny #1}}
       \fi\typeout{#1}}
\def\betalbar{{\underline\beta}}
\def\epsilonhat{{\hat\epsilon}}
\def\epsilonhat{{\hat\epsilon}}  \def\epsilonbar{{\overline\epsilon}}
\def\Expect{{\sf E}}
\def\Prob{{\sf P}}
\long\def\notes#1{\ifinner
           {\tiny #1}
           \else
           \marginpar{\parbox[t]{\noteWidth}{\raggedright\tiny #1}}
       \fi\typeout{#1}}
\def\betalbar{{\underline\beta}}
\def\epsilonhat{{\hat\epsilon}}
\begin{document}

\title[Collusive Outcomes Without Collusion]{Collusive Outcomes Without Collusion: \\ Algorithmic Pricing in a Duopoly Model}

\author{In-Koo Cho}
\address{ Department of Economics \\ Emory University}
\author{ Noah Williams}
\address{Miami Herbert Business School \\University of Miami }

\date{\today}

\begin{abstract}
We develop a model of algorithmic pricing which shuts down every channel for explicit or implicit collusion, and yet still generates collusive outcomes. We analyze the dynamics of a duopoly market where both firms use pricing algorithms consisting of a parameterized family of model specifications. The firms update both the parameters and the weights on models to adapt endogenously to market outcomes. We show that the market experiences recurrent episodes where both firms set prices at collusive levels. We analytically characterize the dynamics of the model, using large deviation theory to explain the recurrent episodes of collusive outcomes. Our results show that collusive outcomes may be a recurrent feature of algorithmic environments with complementarities and endogenous adaptation, providing a challenge for competition policy.
\end{abstract}

\maketitle

\section{Introduction}
In early February 2024, Senator Amy Klobuchar (D-MN), Chairwoman of the Senate Judiciary Subcommittee on Competition Policy, Antitrust, and Consumer Rights, introduced the Preventing Algorithmic Collusion Act.
\begin{quotation}
  ``Price fixing is illegal under our antitrust laws, but the development of automated price-setting algorithms can create loopholes in current law that could be used to unfairly raise prices on everything from rent to rideshares,'' said Klobuchar. ``My bill will strengthen antitrust law and guarantee needed transparency to prevent companies from using algorithms to fix prices to ensure consumers are able to get the full benefits of competition.'' \citep{Klobuchar2024}
\end{quotation}
The proposed legislation focuses on algorithms which use non-public competitor data, which could potentially be used by pricing algorithms to engage in a form of tacit collusion, without needing an explicit agreement or communication. While there seems to be agreement that the current legal and regulatory framework in the United States cannot address algorithmic collusion which arises in the absence of communication, this particular proposed legislation is just one example of the increasing interest in adapting the current framework.\footnote{See \citet*{Harrington2018}, \citet*{EzrachiStucke2020}, \citet*{MackayWeinstein2022}, and \citet*{Stucke2023} for some discussions. For example, \citet*{Stucke2023} states ``...[P]ricing algorithms continuously monitor and respond to competitors' online pricing, thereby inhibiting rivals from gaining sales through discounts. Detecting and challenging such collusion is challenging, as conscious parallelism alone does not violate the Sherman Act. There must be proof of an agreement.''} Similar legal and regulatory changes to address algorithmic collusion have also been discussed in the EU and the OECD.\footnote{See \citet{Calzolari2021} and \citet{OECD2017}, for example.}

The legal and regulatory interest in pricing algorithms draws on both empirical and theoretical work in economics showing that algorithmic pricing may lead to firms to set supra-competitive prices.\footnote{See \citet*{Deng2024} for a good recent survey. We discuss the related literature more below.}  While some types of algorithms and some forms of algorithmic interactions may be amenable to legal or regulatory sanction, in this paper we illustrate the limits of competition policy. We develop an example that rigorously eliminates all possible channels of collusion between firms, but yet
generates collusive outcomes. Coordinated price increases to collusive levels in markets populated by algorithmic price setters therefore cannot be used as evidence of collusion, unless augmented by other evidence such as the inside information about the algorithm itself.

Any legal or regulatory framework must distinguish between outcomes and the mechanisms that lead to them. We define collusive outcomes as prices above the competitive level, which are also commonly known as supra-competitive outcomes. We use the former term because the collusive outcome has precise meaning in our model, as the cartel or joint monopoly outcome. As for the mechanisms that generate outcomes, \citet{Harrington2018} provides a useful definition: ``Collusion is when firms use strategies that embody a reward–punishment
scheme which rewards a firm for abiding by the supra-competitive outcome and punishes it for departing from it.''

In order to achieve implicit or explicit collusion, we typically need one or more of the following conditions to hold:
\begin{itemize}
\item Communication between firms. Even though explicit communication is blocked by law, the firms can implicitly communicate with each other via rational expectations about their opponent's strategy.
\item Agreement. Firms must agree over the target outcome and the punishment needed to enforce the agreement.   Agreement can be explicit or tacit.
\item Patient players.   Unless a firm cares about future profits, no punishment is effective.  Without a credible punishment, no collusive agreement can be sustained.
\item Ability to handle complex agreements.  The punishment requires changing actions according to the history of outcomes.  An oligopolistic firm may have to predict the response of the competing firm in response to violation of the agreement, conditioned possibly on a long history of actions to implement a credible punishment.
\end{itemize}

We construct a model of algorithmic interaction that shuts down every one of these possible channels of collusion, yet recurrently generates collusive outcomes. We show that two myopic firms that independently choose an algorithmic pricing specification that they update over time can be led to jointly increase prices to collusive levels. As we discuss, our model relies on the combination of endogenous algorithmic selection and adaptation with an endogenous data generating process.

Our paper is related to the previous work in economics analyzing algorithmic collusion. Some of that work has shown that algorithmic pricing entails an aspect of commitment, where prices are set more frequently than algorithms are revised. This allows for algorithms to learn opponents' strategies, as in  \citet*{Salcedo2015} and \citet*{LambaZhuk2023}, or for asymmetries in frequency or commitment to lead to collusive outcomes as in \citet*{BrownandMacKay2023}.\footnote{\citet{Leisten2024} studies the interaction between algorithmic pricing and human overriding, showing that it can lead to Edgeworth price cycles similar to the outcomes in our model.} Our model lacks such commitment: firms do not learn each other's strategies, but instead treat the price process of the competing firm as an exogenous process.

More closely related to our paper is work that has illustrated through simulations how collusive outcomes can arise, seemingly spontaneously, through the interaction of firms using algorithms that adapt prices over time. Some important papers here include \citet*{WaltmanandKaymak2008}, \citet*{HansenMisraandPai2021}, \citet*{CalvanoetalAER2020}, and \citet*{AskerFershtmanandPakes2023}.\footnote{While the other papers consider versions of profit maximization, \citet*{HansenMisraandPai2021} assumes regret minimization behavior, which is identical to the aspiration learning behavior.   \citet*{ChoandMatsui05} examined aspiration learning dynamics in the prisoner's dilemma game to show that the collusive outcome is the unique stable solution under a general condition.}
Although they differ in details, these papers share certain features:
\begin{enumerate}
\item They consider duopoly markets with strategic complements.
The best response of one firm is an increasing function of the
action of the other firm.\footnote{\citet*{BanchioandSkrzypacz2022} consider a related model of algorithmic competition in auctions.}
\item They impose an exogenous algorithmic specification.
The authors make explicit assumptions on how each firm ``models the behavior of the other firm,'' which in turn determines the dynamics.\footnote{A reinforcement learning algorithm (or Q-learning algorithm) is a popular specification. Other learning rules such as multi-armed bandit problems are used as well.}
\item Their analysis is based on simulations. Simulations of the algorithms produce an outcome that deviates from one shot Nash equilibrium with a positive frequency.
\end{enumerate}
Recent work by \citet*{BanchioandMantegassa2023} and \citet*{Possnig2023} provides some analytic results in related settings, which we discuss in more detail below.

We keep feature (1), but change (2) and (3). Rather than maximizing a discounted profit stream, we consider two duopoly firms that are myopic optimizers, which makes collusion even more challenging.  Instead of imposing a particular algorithmic structure, we allow the firms to adapt the structure of their algorithms over time. We want to understand why a firm chooses a particular specification of the opponent’s behavior. Finally, instead of numerical analysis, we pursue an analytic approach to understand precisely how these features generate collusive outcomes between algorithmic duopolists, offering new insight beyond what the numerical analysis in previous work can provide.

We focus on a pricing algorithm, as defined in the Congressional legislation discussed above:
\begin{quotation}
The term ``pricing algorithm'' means any computational process, including a computational process derived from machine learning or other artificial intelligence techniques, that processes data to recommend or set a price or commercial term that is in or affecting interstate or foreign commerce. \citet*{USCongress2024}
\end{quotation}
Instead of the reinforcement learning algorithms studied in the previous literature, we assume that each firm has a family of parametric specifications of its opponent's pricing reaction function.
Our model is built on \citet{Williams00}, who first observed collusive outcomes generated by duopolists who are learning a linear reaction function of the competing firm. But instead of imposing a particular model, we assume here that a duopolist entertains two possible specifications: a ``Nash reaction'' where the competing firm does not respond to its price, and a ``linear reaction'' where the competing firm's reaction function is a linear function of its own price.  Facing model uncertainty, a duopolist averages the two specifications by assigning them a probability weight.
The firm optimizes static profits under each specification, and chooses a weighted best response.  Then, based on observations of realized prices and sales, each firm updates both the parameters of each specification and the probability weights. We study the dynamics of the adjustment process where prices, parameters, and specifications all evolve endogenously over time.

\begin{figure}
\centerline{\includegraphics[width=0.7 \textwidth]{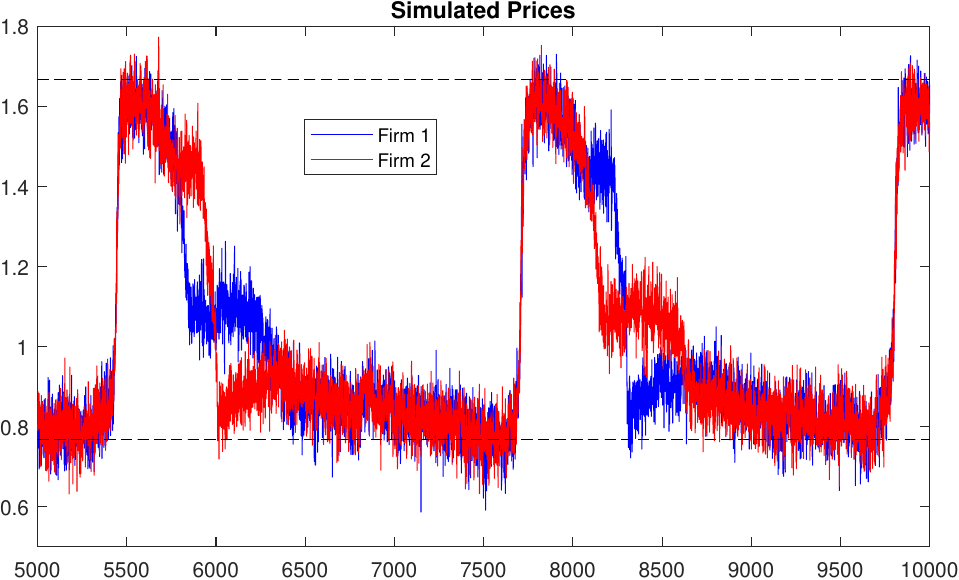}}
\caption{Simulated prices in the duopoly model, with gain $\lambda=0.01$ and shock variance $\sigma^2=0.0025$.}\label{fig-sim-prices}
\end{figure}

Our specification allows us to obtain precise analysis of both the limits and time paths of the evolving duopoly market. The simple nature of the algorithms employed and the market structure of our model make it easy to identify and eliminate any possible channels of collusion.

We start with an observation that the static Nash equilibrium (or competitive outcome) is the unique stable stationary point of the learning dynamics.  Thus, if the economy is initialized at the Nash equilibrium, the prices should stay in a small neighborhood of the Nash equilibrium with probability close to 1 in the limit.  We also show that if the duopolists are committed to the ``Nash specification,'' recurrent escapes to the collusive outcome do not arise.

But the algorithmic interactions lead to outcomes which are very different from these Nash limit predictions.
As in the sample simulation shown in Figure \ref{fig-sim-prices},  the model has ``rare'' but recurrent episodes where prices increase to collusive levels. These outcomes arise in the algorithmic specifications that allow for non-trivial feedback. In these episodes, each duopolist's ``linear reaction'' specification captures that the sample paths of the pair of prices are highly correlated, which leads to the collusive outcome.

Since each duopolist chooses a specification and its price independently, the sample path of the two prices should be independent most of the time.  A collusive outcome is not an equilibrium of the algorithmic dynamics, as it requires highly correlated pricing rules.  Yet, it is a recurrently observed event.  Importantly, the subjective weight assigned to  ``linear reaction'' specifications converges to one, as the algorithm learns from the market data.   Thus recurrent episodes of collusive prices are a robust feature of our model.

Although the collusive outcome is an integrated part of the algorithmic dynamics, such outcomes are not explained by common limit results applied
in the literature. These methods focus on analysis of the associated deterministic dynamics of the stochastic system, such as the fluid equation or the associated ordinary differential equation (ODE).\footnote{See \citet{Benaim96} for a discussion of the deterministic dynamics of stochastic approximation and \citet{Meyn07} for an overview of fluid limits. Related methods have been widely used in economics, for example in \citet{MarcetandSargent89a}, \citet{EvansandHonkapohja00}, and \cite{FudenbergandLevine98}.}  The deterministic component of the algorithmic outcomes converge to the competitive, static Nash prices in the limit, implying that the static Nash prices are the stable stationary point of the learning dynamics.

Collusive outcomes occur for ``non-limit'' sample paths.  Our goal is to understand how the optimal forecasting scheme and the interaction between the data generating process and the algorithm trigger the collusive outcome.   We show that the large deviation properties of the stochastic approximation precisely identify the mechanism of the escape from the competitive outcome to the collusive outcome.\footnote{Our approach explains why the episode of the escape to a collusive outcome vanishes as the number of oligopolistic firms increases.   \citet*{BanchioandMantegassa2023} investigated the dynamics ``away'' from the stable point but fell short of formally explaining the switching mechanism from the stable point to the (rare) escape path.  Large deviation theory explains the ``switching mechanism,'' which is crucial to our analysis.}   Since the data generating process is endogenous, we use stochastic approximation to analyze its asymptotic properties. The collusive outcomes is a ``rare'' event, whose probability vanishes in the limit.   Thus the analysis of the deterministic limit dynamics cannot characterize such events. By examining the large deviation properties of the algorithm as in \citet*{ChoWilliamsandSargent02}, \citet*{ChoandKasa17}, and \citet{Williams2019}, we explain the frequency of collusive outcomes, and why the outcomes differ across model specifications. We also characterize the ``most likely'' trajectory from the competitive outcome to the cartel outcome, which helps us understand the mechanism driving collusive outcomes.

As in Figure \ref{fig-sim-prices}, we find that in the algorithmic market both prices and profits are higher on average than under competition, but the time paths display significant fluctuations. Apparently collusive episodes of high prices are followed by a series moves where firms undercut the opponent's price. These outcomes are reminiscent of the repeated game punishment strategies in price wars like \citet*{GreenandPorter84} or cartel equilibria in \citet*{APS86}.  \citet*{CalvanoetalAER2020} suggested that the collusive outcomes in their model were sustained by similar punishment strategies. However the mechanism supporting high prices is quite different in our environment, where collusive outcomes and reversion to competitive outcomes emerge endogenously as a result of learning by firms that are unable to implement complex repeated game strategies.

In our model firms' prices are driven by independent shock processes. However
the firms will get correlated shock realizations with a small positive probability.   As firms update their reaction function estimates, they will start to believe that their opponents will respond in kind to increases in prices. The perceived changes lead them to jointly increase prices, which reinforces their perceptions that the competing firm reacts to its own price on a one-to-one basis. This process leads the firms to rapidly boost prices to collusive levels near the joint monopoly outcome.  At the collusive outcome, the feedback process stops and the correlation between the two prices vanishes. Eventually, the independent idiosyncratic shocks pull them back to the Nash outcomes.  Together, this produces a pattern of collusion and reversion similar to the paths of play from repeated game strategies, but entirely driven by the learning dynamics.

Beyond our application to algorithmic pricing, our analysis highlights the important theoretical point that characterizing convergence and average limit outcomes are not sufficient to characterize behavior, as emphasized by \citet*{Sargent99}. Averaging over stochastic paths misses some important structure and leads to different predictions. \citet*{BanchioandMantegassa2023} and \citet*{Possnig2023} both use related stochastic approximation results to characterize the convergence of reinforcement learning rules to study the emergence of collusive outcomes.
\citet*{BanchioandMantegassa2023} study the fluid limit of
two interacting myopic reinforcement learners. Similar to our model, they emphasize how independent experimentation by players can produce correlated outcomes through the algorithmic dynamics. However their mechanism and limit results are different. They show that if the algorithms do not employ sufficient experimentation, they may converge to limits that sustain non-competitive outcomes. \cite*{Possnig2023} studies the stability of interactions of agents using a more sophisticated actor-critic algorithm. He shows that by expanding the state space that firms condition on, firms in a Cournot duopoly may converge to a collusive outcome. Since both of these papers focus on the deterministic limit dynamics, they cannot explain the recurrent stochastic dynamics observed in simulations.

The rest of the paper is organized as follows. In the next section we lay out the baseline model and the choice of model specification, all in a static environment with fixed beliefs. Section 3 then introduces algorithm and its dynamics. We characterize the limit outcomes, showing that all specifications converge to the competitive outcome. However the different model specifications have vastly different time series outcomes, which we illustrate and discuss in Section 4, which motivates the rest of the paper. Section 5 then turns to the analysis of the model, presenting our main analytic results. We provide a full formal analysis of belief updating and model specification, showing providing precise analytic results characterizing the limit outcomes and large deviation properties, and showing that the weight on the model with feedback converges to one. Section 6 delves further into the details underlying these results, analyzing dynamics
assuming that firms maintain a fixed, symmetric specification. Here we show that while the limit outcomes of the specifications are the same, the large deviation properties are very different.  Section 7 provides a brief conclusion. The appendix contains intermediate calculations, formal proofs, background results, and further illustrations.

\section{Baseline Model and Specification Choice}

We study a simple environment with strategic interaction, a textbook Bertrand duopoly model. In this environment we suppose that the firms face uncertainty about their opponents' strategies, that we parameterize into different specifications. Later we develop an algorithmic approach to pricing, which considers both specification choice and updating.

\subsection{Basic Setup}
We start with a Bertrand duopoly market with differentiated products where two firms are myopic optimizers.
The demand for good $i\in\{1,2\}$ produced by firm $i$ is
\begin{equation}
q_i = A - Bp_i + Cp_j \qquad i\ne j\in\{1,2\}
\label{eq: demand curve}
\end{equation}
where $A,B,C>0$.   Since $C>0$, two goods are strategic complements. We assume that $B-C>0$ (that is, the demand of good $i$ is more elastic to the price of good $i$ than the price of good $j\ne i$) to make the Nash equilibrium and the cartel outcome well defined. The best response $b_i(p_j)$ of firm $i$ against firm $j$'s price $p_j$ is
\begin{equation}
  b_i(p_j) =\frac{A+Cp_j}{2B}
\label{eq: best response}
\end{equation}
The unique Nash equilibrium price is
\[
  p^N =\frac{2 A}{2B-C}.
\]
Let $\Pi^N$ be the Nash equilibrium payoff of a duopolist.

The second benchmark is the cartel outcome $(p^C_1,p^C_2)$ that solves the joint maximization problem:
\[
\max_{(p_1,p_2)}p_1q_1+p_2q_2.
\]
A simple calculation shows
\[
p^C_1=p^C_2=\frac{A}{2(B-C)}.
\]
Let $\Pi^C$ be the payoff from the cartel outcome.  Clearly we have
\[
\Pi^C > \Pi^N.
\]

\subsection{Model Specification}

While most of the previous learning and algorithmic pricing literature considers a fixed subjective model specification, we embed the specification choice as part of the interaction.  The behavioral assumption behind a Nash equilibrium is that firm $i$ treats firm $j$'s action as a constant and optimizes against the constant function:
\begin{equation}
p_j=\alpha^0_{i}.
\label{eq: perceived law 0}
\end{equation}
Let ${\mathcal M}^0$ be the set of all functions represented as \eqref{eq: perceived law 0}, where each firm perceives its opponent's price as an unobserved constant.
Given $\alpha^0_{i}$, firm $i$ chooses its best response $b^0_{i}$ by solving
\[
\max_{p_i}p_i(A-B p_i+ C\alpha^0_{i}),
\]
which gives the best response:
\[
b^0_{i} = \frac{A+C\alpha^0_{i}}{2B}.
\]
We let $\Pi^0_i$ denote the optimized profit for firm $i$ under model $\mathcal{M}^0$.

Alternatively, firm $i$ may perceive that firm $j(\ne i)$ reacts to firm $i$'s price as in \eqref{eq: best response}.   Let ${\mathcal M}^1$ be the set of all linear reaction functions
\begin{equation}
p_j =\alpha^1_{0i}+\alpha^1_{1i}p_i  \qquad 
i\ne j\in\{1,2\}.
\label{eq: perceived law 1}
\end{equation}
We define the belief vector as $\alpha^1_i=(\alpha^1_{0i},\alpha^1_{1i})$.  Duopolist $i$ then chooses its best response $b^1_{i}=b(\alpha^1_{i})$ by solving
\[
\max_{p_i}p_i(A-B p_i +C( \alpha^1_{0i}+\alpha^1_{1i}p_i)).
\]
For any vector $\alpha_i=(\alpha_{0i},\alpha_{1i})$, we define the best response function:
\begin{equation}\label{eq: br}
 b(\alpha_i)   = \frac{A + C\alpha_{0i}}{2(B - C\alpha_{1i})}.
\end{equation}
Then we have $b^1_i=b(\alpha_i^1)$, and we can also interpret $b^0_i=b((\alpha^0_i,0))$.

Clearly ${\mathcal M}^0$ is a subset of ${\mathcal M}^1$, and each represents a plausible subjective specification.   $\mathcal{M}^0$ depicts a common derivation of Nash equilibrium which optimizes against a fixed strategy, while $\mathcal{M}^1$ depicts a common derivation in a duopoly setting, where the Nash equilibrium is at the intersection of firms' reaction functions.

We consider a game where each duopolist first chooses a specification from ${\mathcal M}_i\in\{{\mathcal M}^0,{\mathcal M}^1\}$ $i\in\{1,2\}$.  For any specification, each firm $i$ will employ a forecast $f_i$ of the other firm $j$'s price, as  $f_i = \alpha^0_i$ for ${\mathcal M}^0$ and $f_i=\alpha^1_{0i} +\alpha^1_{1i}p_i$
for $\mathcal{M}^1$, then choose the corresponding response $b(\alpha_i) \in \{b_i^0, b_i^1\}$. Our equilibrium concept, self-confirming equilibrium, is a situation where
each firm optimizes given a specification and associated beliefs about the other firm, and each firm's forecasts are correct in equilibrium.
\begin{definition}
  $({\mathcal M}_1,{\mathcal M}_2)$ is a self-confirming equilibrium specification, if ${\mathcal M}_1,{\mathcal M}_2\in\{{\mathcal M}^0,{\mathcal M}^1\}$, and   $\exists(\alpha_1,\alpha_2)\in({\mathcal M}_1,{\mathcal M}_2)$ such that:
\begin{eqnarray*}
b_i &=& b(\alpha_i) \ \text{ as in (\ref{eq: br})},  \qquad\forall i \in\{1,2\} \\
b_j &=&f_i(b_i) \qquad\forall i\ne j\in\{1,2\}.
\end{eqnarray*}
\end{definition}
To be an equilibrium, the price of firm $i$ must be an optimal choice given its perceived law of motion $p_j=f_i(p_i)$ about how firm $j$ responds to firm $i$, and the belief must be confirmed at the optimal choice.   The notion of equilibrium is weaker than Nash equilibrium, as we admit misspecified beliefs about the opponent's response outside the equilibrium price.

The problem is that a pair of specifications can admit multiple self-confirming equilibria.  In Appendix \ref{app: SCE} we show that for the specification $({\mathcal M}^1,{\mathcal M}^1)$, a self-confirming equilibrium is $\alpha^1_i=(p^N,0)$, which supports that Nash equilibrium outcomes. However the belief $\alpha^1_i=(0,1)$ is also a self-confirming equilibrium, and it supports the collusive outcome $(p^C,p^C)$.
We need a criterion to select a self-confirming equilibrium for each pair of specifications, which we turn to next. Further, the equilibria that we just constructed were not arbitrary. We show below that the  two symmetric equilibria with $(\alpha_{1,i}=0,p_i=p^N)$ and $(\alpha_{1,i}=1, p_i=p^C)$ are important reference points in the algorithmic dynamics that we analyze.

\section{Dynamics of the Algorithm}

In this section we describe the structure of the algorithmic pricing that the firms employ.  We suppose that, given a particular specification of the opponent's behavior, each firm learns in real time. Each firm sets its price based on its current perception of its opponent's behavior, then after
observing its opponents price, it updates its beliefs. Similar models of adaptive learning are widely used both as stability and selection criteria for equilibria and as models to explain observed behavior.\footnote{See \citet*{FudenbergandLevine98}, \citet*{EvansandHonkapohja00}, and \citet*{Sargent99} for an overview and examples.}

Instead of imposing a particular belief specification, we consider learning with the endogenous choice of model specification. Rather than selecting a single specification, we suppose that each firm engages in model averaging. Such techniques are widely used in applied and empirical work, and are common in economic forecasting.\footnote{See \citet*{Steel2020} for a recent overview and references.}  Facing model uncertainty, the duopolist wants to hedge against the wrong choice of the specification by mixing the two specifications. The weight on the different specifications will then be updated over time depending on average forecast profits. We study the endogenous dynamics of model weights on models and beliefs within in a model.

\subsection{Layout}\label{sec: layout}

As described above, each duopolist entertains two specifications for the behavior of the opponent.   Instead of committing to a particular specification, a duopolist averages the two specifications.
Let $\pi_{i,t}$ be the probability assigned to specification ${\mathcal M}^1$ by duopolist $i$ in period $t$, which duopolist $i$ updates according to the average forecast profit of ${\mathcal M}^1$ over the average forecast profit of ${\mathcal M}^0$.   Ultimately, duopolist $i$ assigns a larger probability to a specification forecasting a larger long-run average profit.

At the beginning of period $t$, the state of the algorithm of duopolist  $i\in\{1,2\}$ is given by the vector:
\begin{equation}\label{eq: state vector}
\left(\pi_{i,t-1};\alpha^0_{0i,t-1}; \alpha^1_{0i,t-1},\alpha^1_{1i,t-1};\Pibar^0_{i,t-1},\Pibar^1_{i,t-1}\right)
\end{equation}
Each firm's behavior is described as an algorithm that updates this vector recursively, where the dating convention captures the information available.
Here $\pi_{i,t}$ is the probability assigned to the hypothesis that the opponent's behavior is specified as $\mathcal{M}^1$, $\alpha^0_{0i,t}$ is the coefficient for the specification of $\mathcal{M}^0$
as in (\ref{eq: perceived law 0}),
$\alpha^1_{i,t}=(\alpha^1_{0i,t},\alpha^1_{1i,t})$ is the coefficient vector of the linear reaction function in $\mathcal{M}^1$ as in (\ref{eq: perceived law 1}),
and $\Pibar^k_{i,t}$ is the average payoff of duopolist $i$ at the end of period $t$
when duopolist $i$ is using specification ${\mathcal M}^K$ where $k\in\{0,1\}$.

The timing of the model is such that each firm enters with a belief specification and model weight, chooses its price, and then based on its observations, updates the parameters.  We denote $b^0_{i,t}=b(\alpha^0_{i,t-1})$ and $b^1_{i,t}=\alpha^1_{i,t-1}$ as the recommended actions by $\mathcal{M}^0$ and $\mathcal{M}^1$, respectively. Duopolist $i$ chooses the best response in period $t$ by mixing the recommendations according to $\pi_{i,t-1}$:
\begin{equation}
b_{i,t}=(1-\pi_{i,t-1})b^0_{i,t}+\pi_{i,t-1} b^1_{i,t}.
\label{eq: best response}
\end{equation}
From this target best response, we suppose that the actual price includes some small noise:
\[
p_{i,t}=b_i+\epsilon_{i,t}
\]
where $\epsilon_{i,t}$ is an i.i.d. white noise with mean zero and variance $\sigma^2_{i,t}$. The shock
$\epsilon_{i,t}$ induces experimentation in prices, with the degree of experimentation measured by the second moment. This randomness could also arise from cost shocks, which naturally lead to variation in prices.
To simplify the notation, we assume that $\sigma^2_{i,t}$ is independent over ${i,t}$, and
\[
\sigma^2=\sigma^2_{i,t} \qquad i\in \{1,2\},\ t\ge 1.
\]
Our main results remain valid as long as the ratio of $\sigma^2_{1,t}/\sigma^2_{2,t}$ is uniformly bounded.

After observing $(p_{1,t},p_{2,t})$, duopolist $i$ updates the coefficients of each specification. The coefficients for each specification under ${\mathcal M}^0$ and ${\mathcal M}^1$ are updated according to recursive least squares.    If duopolist $i$ opts for $\mathcal{M}^0$, he estimates the average price of duopolist $j\ne i$:
\begin{equation}
\alpha^0_{0i,t}=\alpha^0_{0i,t-1}+{\lambda_t}\left( p_{j,t}-\alpha^0_{0i,t-1} \right).
\label{eq: recursive Nash}
\end{equation}
Here $\lambda_t$ is the gain sequence, which represents the weight on new information. We mostly focus on the case where $\lambda_t=1/t$, which means that each firm believes its opponent is playing a fixed strategy. Such a setting means that the estimate is simply the sample mean of the opponent's price:
\[
\alpha^0_{i,t-1}=\frac{1}{t-1}\sum_{s=1}^{t-1}p_{j,s}.
\]
Some of our illustrations use the constant gain case where $\lambda_t =\lambda $ is a small positive constant.   A constant gain is appropriate when a firm believes its opponent's strategy drifts over time. The same basic techniques and characterizations apply for the decreasing and constant gain cases, and only the sense of convergence differs.

If duopolist $i$ opts for $\mathcal{M}^1$, he estimates the slope $\alpha^1_{1i,t}$ and the intercept $\alpha^1_{0i,t}$:
\begin{eqnarray}
  \left[
    \begin{matrix}
      \alpha^1_{0i,t}\\
      \alpha^1_{1i,t}
      \end{matrix}
    \right] & = &
      \left[
    \begin{matrix}
      \alpha^1_{0i,t-1}\\
      \alpha^1_{1i,t-1}
      \end{matrix}
    \right] +{\lambda_t} R^{-1}_{i,t-1}
      \left[
    \begin{matrix}
      1\\
      p^1_{i,t}
      \end{matrix}
    \right]
    \left( p_{j,t}-\alpha^1_{0i,t-1}-\alpha^1_{1i,t-1}p^1_{i,t}    \right)
  \label{eq: recursive least square}   \\
  R_{i,t}& = & R_{i,t-1}+{\lambda_t}\left( \left[ \begin{matrix}
1   &   p^1_{i,t} \\
p^1_{i,t} &  (p^1_{i,t})^2
\end{matrix}
    \right] -R_{i,t-1}\right). \nonumber
\end{eqnarray}
Here $R_{i,t}$ is an estimate of the second moment matrix of the regressors. In the specification ${\mathcal M}^0$. the only regressor is a constant, so we can dispense with $R_{i,t}$. But under ${\mathcal M}^1$, the second moment matrix adjusts the speed of update of the intercept and slope parameters, as more volatile regressors naturally convey less information.

We can calculate the profit forecast by specification ${\mathcal M}^0$
\[
\Pi^0_{i,t}= p^0_{i,t}(A -B p^0_{i,t}+ C p_{j,t})
\]
and by specification ${\mathcal M}^1$
\[
\Pi^1_{i,t}= p^1_{i,t}\left(A -B p^1_{i,t}+ C p_{j,t}\right)
\]
from which we update the average expected payoff
\begin{eqnarray*}
\Pibar^0_{i,t} & = & \Pibar^0_{i,t-1}+{\lambda_t} \left( \Pi^0_{i,t}-\Pibar^0_{i,t-1}  \right) \\
\Pibar^1_{i,t} & = & \Pibar^1_{i,t-1}+{\lambda_t} \left( \Pi^1_{i,t}-\Pibar^1_{i,t-1}  \right)
\end{eqnarray*}
and
\[
\pi_{i,t}=\pi_{i,t-1} +{\lambda_t} \left( {\mathbb I}\left( \Pibar^1_{i,t}>\Pibar^0_{i,t}\right)-\pi_{i,t-1} \right).
\]
This completes the specification of the updating process of the algorithm.

\subsection{Implications for Collusion}

Before we turn to analysis of the model, we note several of its important features. In particular, we have shut down all avenues for either explicit or implicit collusion.

First, there is no communication between firms, which would be the most obvious and overt way to implement a collusive agreement.  Here each duopolist behaves independently, without any explicit or implicit communication. Moreover, each firm chooses its model specification independently each period, and given its specification, employs independent exploration in setting its price.

In the repeated game oligopoly literature, including the classic works of \citet*{GreenandPorter84} and \citet*{APS86}, collusion can also be sustained in equilibrium by patient firms who implement rather complex strategies. \citet*{CalvanoetalAER2020} suggest that the collusive outcomes they found were due to the algorithms learning repeated game strategies. Our specification eliminates such possibilities in two ways.  First, each duopolist is myopic, maximizing its one-period payoff. This shortcuts the possibility of implementing repeated game strategies with punishment phases, which are key to sustaining collusion. On top of that, we consider an algorithmic specification with fairly minimal complexity. Each duopolist $i$ only remembers the few parameters in its state vector (\ref{eq: state vector}), and does not know anything about duopolist $j$'s parameters or algorithm at all. Our duopolists are not capable of handling a complex repeated game strategy, which is essential for sustaining implicit collusion.

Other avenues for collusion rely on rational expectations, where firms correctly their forecast opponent's behavior. But in our setting, duopolist $i$'s algorithm does not know the internal state of the algorithm of the competing firm. Consequently, duopolist $i$'s algorithm cannot predict how the competing firm $j$ would react if $i$ were to deviate. Thus we differ from the algorithmic collusion literature where algorithms imply a form of  commitment where firms can learn opponents' strategies, as in  \citet*{Salcedo2015} and
\citet*{LambaZhuk2023}, or where asymmetries in frequency or commitment to
lead to collusive outcomes, as in \citet*{BrownandMacKay2023}.

Even more, in our setting both firms subjective models are misspecified. Duopolist $i$ does not know the probabilistic properties of its opponent's price $p_{j,t}$, treating it as an exogenous stochastic process. Our notion of self-confirming equilibrium imposes weak, on-path restrictions on firms' beliefs, which hold in the limit of the learning process.

\subsection{Summary of Stability under Learning}

We now analyze the long-run behavior of the algorithmic interactions. Here we briefly summarize the key results on the limit of beliefs, which illustrates the ineffectiveness of standard stability conditions based on the analysis of the associated continuous time dynamics. We apply standard stochastic approximation results in the learning literature, providing more detail below.

Given a pair of specifications $({\mathcal M}_1,{\mathcal M}_2)\in \{ {\mathcal M}^0,{\mathcal M}^1\}^2$, let $\Pi^{k\ell}_{i,t}$ be the profit of firm $i$ in period $t$ if duopolist 1 chooses ${\mathcal M}^k$ and duopolist 2 chooses ${\mathcal M}^\ell$ where $k,\ell\in \{0,1\}$.

\begin{proposition}\label{prop-converge}   Suppose that firm $i$ chooses ${\mathcal M}^1$.   Then, firm $i$ uses the linear reaction function to forecast firm $j$'s response
  \[
p_j=\alpha_{0i,t}+\alpha_{1i,t}p_i
\]
by estimating the coefficients $(\alpha^1_{0i,t},\alpha^1_{1i,t})$, and
\[
\lim_{t\rightarrow\infty}(\alpha^1_{0i,t},\alpha^1_{1i,t})=(p^N,0)
\]
with probability 1.   Moreover, for any combination $({\mathcal M}^k,{\mathcal M}^\ell)$ of specifications, the average profit $\Pibar^{k\ell}_{i,t}$ of firm $i$ converges to the Nash equilibrium profit:
\[
\lim_{t\rightarrow\infty} \Pibar^{k\ell}_{i,t}=\Pi^N
\]
with probability 1.
\end{proposition}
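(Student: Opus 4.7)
The plan is to apply the ordinary-differential-equation (ODE) method of stochastic approximation (see \citet{Benaim96}, \citet{EvansandHonkapohja00}) to the joint state $(\alpha^0_{0i,t},\alpha^1_{0i,t},\alpha^1_{1i,t},R_{i,t})_{i=1,2}$. I would first cast the recursions (\ref{eq: recursive Nash}) and (\ref{eq: recursive least square}) in the canonical Robbins--Monro form $\theta_t=\theta_{t-1}+\lambda_t H(\theta_{t-1},\epsilon_t)$, where $\epsilon_t=(\epsilon_{1,t},\epsilon_{2,t})$ is the i.i.d.\ mean-zero exploration noise. Since each firm's action $p_{i,t}=b_{i,t}(\theta_{t-1})+\epsilon_{i,t}$ depends only on current beliefs and the current noise, the mean-field $h(\theta)=\Expect[H(\theta,\epsilon)]$ is available in closed form.

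Next I would identify the putative rest point of the associated ODE $\dot\theta=h(\theta)$ at $(\alpha^0_{0i},\alpha^1_{0i},\alpha^1_{1i})=(p^N,p^N,0)$ for $i=1,2$, with $R_i$ equal to the stationary second-moment matrix under $p_{i,t}=p^N+\epsilon_{i,t}$. At this point both recommendations coincide with $p^N$: by the definition of the Nash price, $b((p^N,0))=(A+Cp^N)/(2B)=p^N$, so $b_{i,t}=p^N$ regardless of $\pi_{i,t-1}$. The induced data-generating process is then $p_{i,t}=p^N+\epsilon_{i,t}$ with $p_1$ and $p_2$ independent; the orthogonality conditions for the recursive-least-squares projection of $p_j$ onto $(1,p_i)$ yield intercept $p^N$ and slope $0$, and the sample-mean projection yields $p^N$ for $\alpha^0_{0i}$, confirming the rest point. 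Local asymptotic stability then follows from linearization: the $\alpha^0$ block is a pure contraction, while the $(\alpha^1_{0i},\alpha^1_{1i})$ block takes the standard Marcet--Sargent expectational-stability form, whose stability condition here reduces to $B-C>0$, which is assumed.

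Given local stability and the martingale-difference structure of the noise, almost-sure convergence follows from the standard ODE theorem once the iterates are shown to stay in a compact set (a projection facility, or a quadratic Lyapunov bound on the belief coordinates, suffices). The profit statement is then a direct consequence: once beliefs converge, $b^0_{i,t},b^1_{i,t}\to p^N$, so $p_{i,t}\to p^N+\epsilon_{i,t}$ and the flow profit $\Pi^{k\ell}_{i,t}=p_{i,t}(A-Bp_{i,t}+Cp_{j,t})$ converges a.s.\ to $\Pi^N$; since $\Pibar^{k\ell}_{i,t}$ is itself a Robbins--Monro recursion with $\lambda_t=1/t$ driven by this sequence, a Toeplitz averaging argument gives $\Pibar^{k\ell}_{i,t}\to\Pi^N$ a.s.

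The principal obstacle I expect is upgrading local stability to global convergence: the excerpt already flags in Appendix~\ref{app: SCE} a second self-confirming equilibrium at $\alpha^1_i=(0,1)$ supporting the cartel price $p^C$. To rule this out one must check that the Jacobian of $h$ at the cartel point, computed under the endogenous data-generating process it induces, has an unstable direction so that the ODE does not attract the stochastic iterates along a set of positive probability---consistent with the paper's broader message that collusive episodes are a rare, large-deviation phenomenon rather than an attractor of the deterministic limit dynamics.
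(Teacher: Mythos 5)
Your approach matches the paper's: both cast the recursions in Robbins--Monro form, pass to the associated ODE, identify $(p^N,0)$ (and $(p^N,p^N)$ for the $\mathcal{M}^0$ block) as the rest point, and verify stability by linearization, with the profit claim following by averaging (the paper's Appendix~\ref{app:M0-conv} and~\ref{app:M1-conv} carry out exactly this program, and the $\Pibar$ convergence is essentially the Toeplitz argument you invoke).

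One correction to your final paragraph is worth flagging, because it matters for the logic. You worry that the cartel belief $\alpha^1_i=(0,1)$ is a competing attractor of the ODE and that one must exhibit an unstable eigendirection of the Jacobian there. But for $\sigma^2>0$ the point $(0,1)$ is not a rest point of the mean dynamics at all: the paper's $g_i$ evaluated there gives $G_{0i}=0$ in the first component but $G_{0i}\,b(\alpha^1_i)-\alpha^1_{1i}\sigma^2=-\sigma^2\neq 0$ in the second. The cartel belief is a self-confirming equilibrium of the static game (Appendix~\ref{app: SCE}), not a stationary point of the learning ODE; equation~(\ref{eq cond}) in Appendix~\ref{app:M1-conv} makes this explicit and shows $(p^N,0)$ is the \emph{unique} rest point once $\sigma^2>0$. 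So the linearization argument at $(0,1)$ is not the right object. What actually remains to close the almost-sure global convergence claim — and this gap is present in the paper as well, which only verifies local stability and invokes "standard stochastic approximation results" — is (i) boundedness or recurrence of the iterates (your projection/Lyapunov device is the right idea) and (ii) absence of other $\omega$-limit sets of the ODE (periodic orbits, etc.), not an instability calculation at a point that isn't even stationary. The multiplicity you have in mind only materializes in the $\sigma^2\to 0$ limit, where a whole surface $\alpha_0=b(\alpha)-\alpha_1 b(\alpha)$ becomes stationary; this is precisely why the paper's subsequent large-deviation analysis shows the dynamics linger near that surface even though $(p^N,0)$ is the sole attractor for any fixed $\sigma^2>0$.
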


As long as $\epsilon_{1,t}$ and $\epsilon_{2,t}$ are independent, the slope of the estimated reaction function must be zero on average.  If we select a self-confirming equilibrium, which is a stable stationary solution of the least square learning dynamics, the payoff matrix of the specification game is written in terms of the payoff of duopolist 1:
\begin{equation}
  \bordermatrix{      & {\mathcal M}^0 & {\mathcal M}^1 \cr
    {\mathcal M}^0    &   \Pi^N,\Pi^N  &  \Pi^N,\Pi^N \cr
    {\mathcal M}^1    &   \Pi^N,\Pi^N  &  \Pi^N,\Pi^N \cr
  }.
    \label{eq: unperturbed game}
\end{equation}

The stability of the mean dynamics of the learning algorithm is a widely used selection criterion.  Here stability under learning results reduces the set of the self-confirming equilibria to a unique equilibrium where the duopolists play the Nash equilibrium regardless of the specification. For any combination of specifications, the stable stationary payoff from the learning dynamics is the Nash equilibrium payoff.
The mean dynamics cannot explain why a firm chooses a particular specification of the response of the other firm.  We need a refinement of self-confirming equilibrium.

\section{Illustrations and Motivation}

While the limit of all combinations of specifications is the Nash outcome $p^N$, we saw above in Figure \ref{fig-sim-prices} that the algorithmic price dynamics display sample path properties that are quite different. The time series feature recurrent escapes from the Nash price $p^N$ to an outcome near the collusive price $p^C$. Here we illustrate these features and discuss the mechanisms driving these dynamics. While the limits of the symmetric specifications where both firms use $\mathcal {M}^0$ and both use ${\mathcal M}^1$ are the same, $({\mathcal M}^0,{\mathcal M}^0)$ and $({\mathcal M}^1,{\mathcal M}^1)$ imply very different ``non-limit outcomes.''  Below we show that these differences are due to different large deviation properties.

\begin{figure}
\centerline{\includegraphics[width=0.8 \textwidth]{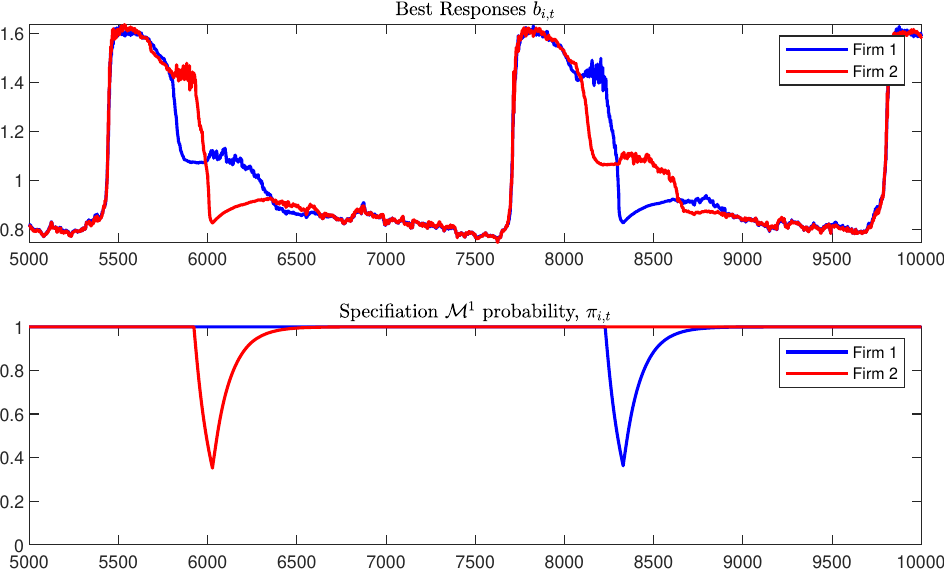}}
\caption{Top panel: simulated best responses $b_{i,t}$. Bottom panel: probability on specification $\mathcal{M}^1$, $\pi_{i,t}$.}\label{fig-simb}
\end{figure}

\subsection{Simulated Algorithmic Dynamics}

We have already seen the striking behavior of the algorithmic prices in Figure \ref{fig-sim-prices} above. Figure \ref{fig-simb} plots some of components underlying the price dynamics. For all of our illustrations, we use the following parameters: $A=1$, $B=1$, $C=0.7$.  With this parametrization, the Nash price is
$p_N = A/(2B-C)= 1/(2-0.7) \approx 0.77$, while the cartel price is $p^C=A/(2(B-C))=1/0.6\approx 1.67$.
We assume the price exploration shocks $\epsilon_{i,t}$ are i.i.d.\ normal, independent across firms, with mean zero and stand deviation $\sigma^2=0.025$. Later we explore the consequences of varying the degree of price exploration through $\sigma^2$, which has a key difference across the model specifications. Most of our plots use a constant gain $\lambda=0.01$. As we discuss below, as the gain $\lambda$ shrinks, the time between the episodes of price changes increases exponentially, consistent with convergence to the Nash outcome in the small gain limit.

The top panel of Figure \ref{fig-simb} plots the ``intended'' part of the price series, $b_{i,t}$, for the two firms in the same simulation as in Figure \ref{fig-sim-prices} above. This includes the averaged best responses across the two model specifications, $(1-\pi_{i,t})b^0_{i,t} + \pi_{i,t} b^1_{i,t}$, but not the exploration component $\epsilon_{i,t}$ of prices.  Thus it gives a less noisy version of the price dynamics that we saw in Figure \ref{fig-sim-prices} above. In both figures, we plot the last 5,000 periods of a 10,000-period simulation. The bottom panel of Figure \ref{fig-simb} plots the weight $\pi_{i,t}$ that each firm puts on the reaction function specification $\mathcal{M}^1$.

The convergence results in Proposition \ref{prop-converge} suggest that, at least in the long run, we would expect to see small fluctuations around the Nash equilibrium price. But instead we observe recurrent episodes where both firms increase their prices to collusive levels. The price increases occur roughly simultaneously over just a few periods from the Nash equilibrium price $p^N$ to near the collusive, joint monopoly price $p^C$.\footnote{The average price during these episodes falls just short of $p^C$, for reasons we will describe below.}  Prices remain near the joint monopoly levels for a while, then firms cut prices in a staggered fashion, and gradually both prices return to the Nash equilibrium levels.  Prices remain low for a while, until another episode where they increase prices again.

As we discuss below, the escapes from the Nash price $p^N$ to higher prices occur under the reaction-function specification $\mathcal{M}^1$, but not under the constant specification $\mathcal{M}^0$. Moreover these ``collusive episodes'' (without collusion) of higher prices are clearly beneficial to firms, allowing them to earn higher profits. Thus, as shown in the bottom panel of Figure \ref{fig-simb}, both firms put nearly all weight on specification $\mathcal{M}^1$, with $\pi_{i,t} \approx 1$ for most of the sample. The deviations from a model probability near unity occur during the asymmetric episodes of price cutting, with the second-moving firm finding temporarily higher profits under the constant $\mathcal{M}^0$ specification. Such asymmetric deviations are short-lived, and we show below that $\pi_{i,t}$ converges to one in the limit.

While the price dynamics are stochastic, they display clear and regular features. The timing of the episodes of price increases is random, and when the gain $\lambda$ is smaller, the dynamics slow and firms spend more time near the Nash level. But the collusive episodes recur and have similar character: each time, both firms act nearly in unison to increase prices to the same levels. This correlation in price increases is striking, as firms operate, choose algorithms, update, and experiment independently.

The correlation in prices breaks down once the firms approach the cartel price, and the price-cutting episodes are asymmetric. One firm initially undercuts the other, dropping part of the way back down toward the Nash price. While one firm initially remains at a higher price, it later responds by further undercutting down to near the Nash level. Figure \ref{fig-simb} shows that the price-cutting episodes are regular, but there is also a random component in determining which firm cuts price first.

\subsection{Differences Across Specifications}
\begin{figure}
\centerline{\includegraphics[width=0.8 \textwidth]{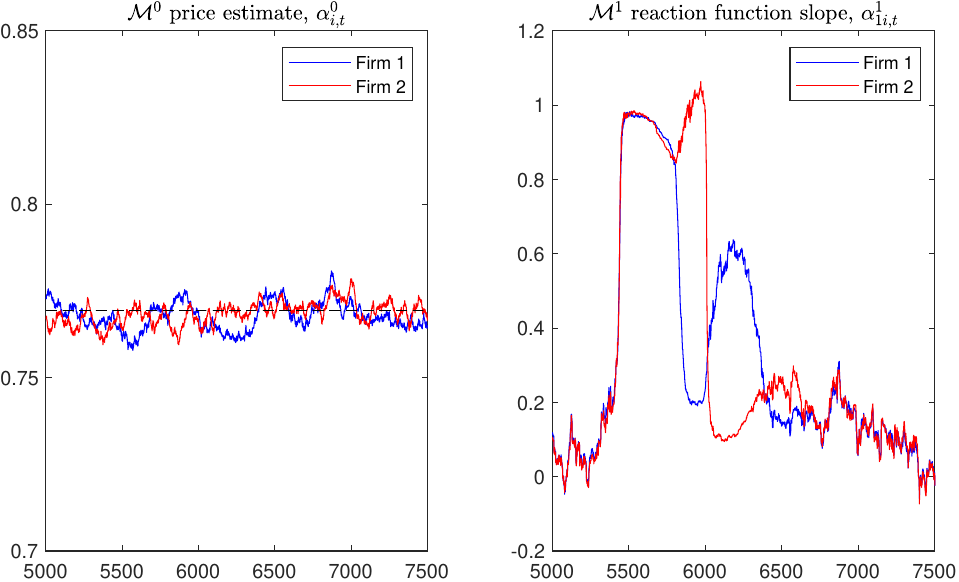}}
\caption{Left panel: $\mathcal{M}^0$ predicted price $\alpha^0_{i,t}$. Right panel: $\mathcal{M}^1$ reaction function slope $\alpha^1_{1i,t}$}\label{fig-sime}
\end{figure}

We find that the sample average payoff from ${\mathcal M}^1$ is larger than that from ${\mathcal M}^0$, even though both average payoffs converge to the same limit $\Pi^N$.
 Firms endogenously and independently find that entertaining a reaction function specification leads to higher profits, which in turn increases the probability weight assigned to ${\mathcal M}^1$.   In the end, $\pi_{i,t}\rightarrow 1$, so the price dynamics are dictated by $({\mathcal M}^1,{\mathcal M}^1)$, generating recurrent excursions from the Nash to the collusive outcome.

We now describe in more detail the dynamics of beliefs and prices in each specification.
Figure \ref{fig-sime} shows the time series of the key components of beliefs in the two specifications. Here we focus on the first half of the sample from the figures above, to isolate a single episode of price increases and decreases. The left panel of the figure shows the time series of the estimated coefficients $\alpha^0_i$ under $\mathcal{M}^0$ for each of the two firms. When each firm uses $\mathcal{M}^0$, the estimated price exhibits small fluctuations around the Nash equilibrium
price $p^N$, shown with a dashed line.  Note that the scale is smaller here than on the price dynamics plots above in order to make the belief changes visible. Since each firm estimates that its opponent plays close to $p^N$, the firms' best responses $b^{0}_{i,t}$ will also be close to $p^N$, and most price fluctuations come from the price shocks $\epsilon_{i,t}$ rather than revisions of beliefs.
There also appears to be no correlation in the deviations of firms' beliefs from the equilibrium levels, so prices would be uncorrelated as well. When the gain $\lambda$ shrinks, the distribution of beliefs becomes more tightly concentrated on the Nash equilibrium price.

However very different outcomes emerge when each firm uses the reaction-function specification $\mathcal{M}^1$. Firms now estimate an intercept and a slope, and we saw
in Proposition \ref{prop-converge} that $\alpha_{i,t}$ converges to $(p^N,0)$. The right panel of Figure \ref{fig-sime} shows the slope coefficients $\alpha^1_{1,it}$ for each of the two firms.  These are the key parameters which drive the price dynamics in Figures \ref{fig-sim-prices} and \ref{fig-simb} above. There are large fluctuations
in these estimated slopes, which increase in near unison from near zero to
near one during the episodes of price increases, then decline asymmetrically
and non-monotonically back to zero during the episodes of price cuts.

\subsection{Objectives}

The central idea of stochastic approximation, which we applied in Proposition \ref{prop-converge}, is to approximate the stochastic dynamics by deterministic dynamics represented by an ordinary differential equation. Relying on a law of large numbers, this approximation keeps track of the dynamics of the mean of the stochastic process, thus called the mean dynamics.   The conventional learning literature focuses on the stable stationary point of the mean dynamics.   Our exercise reveals that as we approximate a stochastic process by deterministic dynamics, we ``wash out'' important information about the stochastic process.  While every combination of the specifications shares the Nash equilibrium outcome as the unique stable stationary point, the  analysis of ${\mathcal M}^0$ and ${\mathcal M}$ reveals that the sample path of prices under the two specifications show dramatically different properties.

In our analysis, we examine the center of the distribution over the sample path, namely the mean dynamics, as well as the tail of the distribution where a rare event, such as the escape from the neighborhood of the stable stationary outcome, occurs.
By using large deviation theory to analyze the tail portion of the probability distribution over sample paths, we explain several features the mean dynamics and its stability analysis cannot explain.

\begin{itemize}
\item Why are the escapes from the stable point significantly more frequent under
  $({\mathcal M}^1,{\mathcal M}^1)$ than $({\mathcal M}^0,{\mathcal M}^0)$?

\item Why does the escape path always point to the collusive outcome?

\item How can we explain the almost perfect coordination of the escape time and path between the two duopolists, even though the selection of the specification and the exploration are independent?
\end{itemize}

An essential feature of our model is that the specification of the model is endogenously determined. In conventional learning models, the specification of the model is imposed by an outsider, and the players inside of the model can learn about the parameters. Instead, we assume that each firm chooses a specification from $\{ {\mathcal M}^0, {\mathcal M}^1\}$ probabilistically, according to the average performance of each specification.

\subsection{Discussion of the Mechanism Driving the Dynamics}\label{sec: discussion}

\begin{figure}
\centerline{\includegraphics[width=0.6 \textwidth]{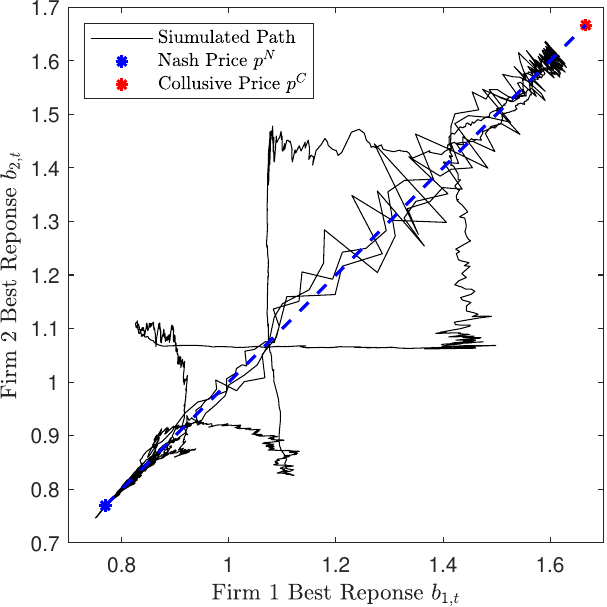}}
\caption{Simulated best responses for the two firms, $b_{1,t}$ versus $b_{2,t}$. Also shown are the Nash price $p^N$, the collusive price $p^C$, and the 45 degree line connecting them. }\label{fig-simf}
\end{figure}

Before proceeding to the analysis of the algorithm, we provide an overview and discussion of the key mechanism of the dynamics which drive our results.
Figure \ref{fig-simf} plots the best responses for the two firms, $b_{1,t}$  and $b_{2,t}$ as defined in \eqref{eq: best response}. In Figure \ref{fig-simb} above we saw these time series, but now we plot them against each other. The figure also shows the self-confirming and Nash limit outcome $p^N$ along with the collusive price $p^C$. The episodes of price increases are rapid, correlated movements nearly along the 45 degree line. The price-cutting episodes are slower, asymmetric movements where prices drop back down to $p^N$. The correlation and lack thereof are key elements we now describe.

We have seen that the key features of the time series are driven by the specification $\mathcal{M}^1$.  In the self-confirming equilibrium, which is the limit of learning, the slope of each firm's estimated reaction function is zero, so each firm believes that its opponent will set prices independently of its own price. With fixed beliefs, the firms' prices are just i.i.d.\ fluctuations around the  fixed best response function
$b(\alpha^1_i)$. Thus the equilibrium of the learning dynamics is the same in $\mathcal{M}^1$, when firms estimate a reaction function, as in $\mathcal{M}^0$ when they forecast their opponent's price as an average of past prices.

But even though the experimentation processes $\{\epsilon_{1,t}\}$ and $\{\epsilon_{2,t}\}$ are mutually independent, there will occasionally be correlated sequences of realizations.
These sequences are rare, and thus do not affect the mean of the limit distribution, as the events are ``washed out'' in calculating the mean dynamics by invoking a law of large numbers.  However tail events matter substantially and drive the key price dynamics, if the duopolists respond to the rare events which in turn reinforce the beliefs of the duopolists.   It takes a relatively short sequence of correlated shocks to trigger a sequence of changes in beliefs which are reinforced by the endogenous best response dynamics, leading to rapid movement away from the limit point.

In particular, since firms under $\mathcal{M}^1$ are estimating reaction functions, these correlated shock realizations -- which make prices move together -- will increase the perceived slope.   Once a firm starts to believe its opponent's price is be positively correlated with its own, $\alpha_{1,i}>0$, it will increase its price.  The perceived positive slope in the reaction function effectively makes the firm's residual demand less elastic. Firm $i$ thinks that firm $j$ will respond to an increase in $p_i$ with an increase in $p_j$. Then firm $i$'s perceived demand will be less responsive to its own price, and it will optimally increase its price.

Thus the correlated realizations lead to an increase in the perceived slope $\alpha^1_{1,it}$ and a corresponding increase in the expected price $b(\alpha^1_{i,t})$.  The increase in the average price is a notable deviation from recent observations, which were mostly clustered around the Nash price. The new data point is an influential observation affecting the firm's estimated reaction function. Thus instead of just the random movements in $\epsilon_{i,t}$, both firms increase $b^1_{i,t}$.  Firms' prices increase together, and they rapidly adjust their estimated reaction function to account for this observed correlation, which leads to further price increases. Thus the initial movement in beliefs is locally self-reinforcing, with an increase in perceived correlation leading to correlated behavior, which takes firms farther away from the Nash limit outcome.

As the price increases continue and approach the joint monopoly price $p^C$, the price increases slow down. Once firms' reaction function slopes are $\alpha^1_{i,t} =1$ they set the collusive price $p^C$, but at that point do not further increase prices. With little or no change in the expected price $b(\alpha^1_{i,t})$, the realized price movements then come to be driven more by the independent shocks $\epsilon_{i,t}$.

Thus the perceived correlation starts to decline, and so do prices. The
escape dynamics leading to the price increases are symmetric, driven by correlated behavior, but the price cuts which lead back to the Nash equilibrium level are asymmetric, with firms undercutting each other. After remaining near the joint monopoly price for a while, both firms start to cut prices slightly, then one undertakes a larger price cut. The other firm keeps its price high for a while, then undercuts its opponent to a lower price, which is later matched or further undercut.  The path of price cuts is asymmetric, and which firm starts process with the larger initial cut is random. Eventually, the price correlations decay back to zero, and prices approach the Nash equilibrium level.

As we have emphasized, our model shuts down all avenues for collusion between the firms.
The episodes of ``collusive outcomes'' with high prices are ``non-limit events'' driven by correlated shock realizations and the strategic complementarity of the environment. All that is required is that firms simply be minimally reactive to their observations of their opponent's behavior, which allows them to realize higher profits.

The remainder of the paper provides analytic backing to the description above.  We show that the specifications $\mathcal{M}^0$ and $\mathcal{M}^1$ have the same limit outcome but very different large deviation properties. For the specification $\mathcal{M}^1$, small deviations from the limit equilibrium -- in a particular direction, toward price coordination -- are self-reinforcing, and lead to the recurrent episodes of joint price increases.

\subsection{Perturbed Game}

Before proceeding to the analysis, we consider one further illustration, which sheds light on the dynamics of the probability weight assigned to ${\mathcal M}^1$. We have seen that in the limit all specification choices give rise to the same expected payoffs. But since firms' specification weights evolve endogenously along a sample path, what matters is not the limit \eqref{eq: unperturbed game}, but  a perturbed version of  obtained by replacing the payoff vector with the sample average.
\begin{equation}
  \bordermatrix{      & {\mathcal M}^0 & {\mathcal M}^1 \cr
    {\mathcal M}^0    &   \Pibar^{00}_{1,t},\Pibar^{00}_{2,t}  &  \Pibar^{01}_{1,t},\Pibar^{01}_{2,t} \cr
    {\mathcal M}^1    &   \Pibar^{10}_{1,t},\Pibar^{10}_{2,t}  &  \Pibar^{11}_{1,t},\Pibar^{11}_{2,t} \cr
  }
    \label{eq: perturbed game}
\end{equation}
Since $\Pi^{k\ell}_{i,t}\rightarrow\Pi^N$ $\forall k,\ell\in\{0,1\}$ and $\forall i\in\{1,2\}$, it is not straightforward to differentiate one average payoff from another.  Nevertheless, the numerical simulations reveal a useful pattern as we've seen. Evaluating the average payoffs in the simulations leads to:
\begin{equation}
\bordermatrix{    & {\mathcal M}^0    &  {\mathcal M}^1 \cr
{\mathcal M}^0    &   0.5913, 0.5913  &  0.5909, 0.5911 \cr
{\mathcal M}^1    &   0.5911, 0.5909  &  0.6796, 0.6796 \cr
  }
  \label{eq: perturbed example}
\end{equation}
While ${\mathcal M}^1$ is not a dominant specification over ${\mathcal M}^0$, it is a risk dominant specification with a very small $p$ value in the long run.   We shall formalize the intuition.

We have seen above that the pair of specifications $({\mathcal M}^0,{\mathcal M}^0)$ generates sample paths that stay in a small neighborhood of Nash equilibrium price. On the other hand, the pair of specifications $({\mathcal M}^1,{\mathcal M}^1)$ generates sample paths which entail repeated excursions to the collusive price with almost perfect coordination, returning to the Nash equilibrium in a less coordinated manner.  Thanks to the repeated excursions to the collusive outcome, the average profit for firm $1$ is strictly higher than the Nash equilibrium payoff $\Pi^N$.
\[
\Pibar^{11}_{1,t}\simeq 0.6796 > 0.5913 \simeq\Pibar^{00}_{1,t}\simeq \Pi^N.
\]
If firm $1$ uses ${\mathcal M}^0$, his payoff is mainly independent of duopolist 2's specification, remaining close to Nash equilibrium payoff $\Pi^N$.
\[
\Pibar^{00}_{1,t}\simeq 0.5913 \simeq 0.5909 \simeq \Pibar^{01}_{1,t}.
\]
By the same token, if firm 2 uses ${\mathcal M}^0$, duopolist 1's average profit changes little by switching from ${\mathcal M}^0$ to ${\mathcal M}^1$.   Consequently,
${\mathcal M}^1$ becomes the risk dominant strategy.  Since
\[
  \Pibar^{11}_{1,t}-\Pibar^{10}_{1,t}=0.0885 \gg -0.0004 =
  \Pibar^{01}_{1,t}-\Pibar^{00}_{1,t},
\]
it takes only a small probability of firm 2 choosing ${\mathcal M}^1$ to make ${\mathcal M}^1$ become the best response to duopolist 2's choice of (mixed) specifications.

\section{Analysis}\label{sec: analysis}

We use stochastic approximation to analyze the asymptotic limits and the large deviation properties of the endogenous variables in our model. In this section we establish three main analytic results:
\begin{enumerate}
\item under specification $\mathcal{M}^0$ each firm's expected profits are asymptotically equivalent to the Nash profits, while
\item under $\mathcal{M}^1$ the firms will recurrently visit a neighborhood of the collusive outcome and so earn higher profits, which together imply
\item in the long run both firms will put probability one on specification $\mathcal{M}^1$.
\end{enumerate}
We state the main results in the text, but the proofs and some supporting calculations are rather involved and are reported in Appendix \ref{app: analysis}.

\subsection{Large Deviations}
While we work with different processes, we take reaction function coefficients as an example.  Given a discrete time stochastic process $\{\alpha^1_{i,t}\}_{t=1}^\infty$, we construct a continuous time process by linearly interpolating the values at $t$ and $t-1$.  Since we are interested in the tail behavior of the algorithm, as described in the appendix we work with the left-shifted continuous time process $\alpha^{1,K}_i(\tau)$, which is the right tail of the interpolated sample path of $\{\alpha^1_{i,t}\}$ starting from time $K$.

We saw above that in the limit the algorithm converges to the Nash outcomes.
But we need to go beyond the convergence and stability results.  The focus of our investigation is the probability that the parameter estimates escape from a small neighborhood of stable point.  To do this, we use large deviation theory, which characterizes events with  limit probability zero.  More detail on the background and calculations is provided in Appendix \ref{app: LD background}, and in Section \ref{details-M0} below.

For simplicity, we assume that
\[
\sigma^2_{1,t}=\sigma^2_{2,t}=\sigma^2 \qquad\
\forall t\ge 1.
\]
The main results of the paper remains valid if $\sigma^2_{2,t}\ne\sigma^2_{1,t}$, as long as the ratio is uniformly bounded away from zero and from above. We assume that the exploration probability does not decrease too quickly. We define $m(K+\tau)$ as the smallest number $T$ satisfying $\sum_{t=t_K}^T\lambda_t\ge\tau$.

\begin{assumption}
  $\forall \tau>0$,
  \[
\lim_{K\rightarrow\infty}\frac{\sigma^2_{m(K+\tau)}}{\sigma^2_{t_K}}=1.
  \]
\end{assumption}
Roughly, $\sigma^2_t$ can converge to 0 at the same rate as $1/t^\gamma$ for some $\gamma\le 1$. Let
\[
  \epsilonhat_{i,t}=\epsilon_{i,t}/\sigma_{i,t}
\]
be the normalized exploration probability.

Let $I: X\rightarrow [0,\infty]$ be a non-negative extended real valued function.   In case of $\alpha^1_{i,t}\in {\mathbb R}^2$, $X\subset {\mathbb R}^2$.   Let
\[
\epsilonbar_{i,t}=\frac{1}{t}\sum_{k=1}^t\epsilonhat_{i,t}
\]
be the average of $\epsilonhat_{i,t}$. To avoid technical issues, we impose mild regularity conditions satisfied by virtually all popular distributions.

\begin{assumption}
  $\epsilonhat_{i,t}$ has a good rate function if for any closed subset $F$ and for any open subset $G$ of Borel set $X$,
  \begin{eqnarray}
-\inf_{x\in F}I(x) & \ge & \limsup_{t\rightarrow\infty}\frac{1}{t}\log \Prob \left( \epsilonbar_{i,t}\in F \right)  \label{eq: upper bound} \\
   & \ge &  \liminf_{t\rightarrow\infty}\frac{1}{t}\log \Prob \left( \epsilonbar_{i,t}\in G \right) \ge -\inf_{x\in G}I(x).  \label{eq: lower bound}
  \end{eqnarray}
  In addition, we assume that $\inf_{x\in G} I(x)<\infty$ and $\inf_{x\in F}I(x)<\infty$.
\end{assumption}

This is an example of a large deviation statement, providing exponential bounds on rare events.  We do not assume that $I(x)$ is continuous, although most popular distributions have continuous rate functions.   Instead, we require that $I(x)$ is a finite good rate function.

The core of the exercise is to calculate the rate function $\inf I(x)$ and the sample path that achieves the inequality as an equality in \eqref{eq: upper bound} and \eqref{eq: lower bound}.  We extend this result, which is a large deviation principle for  a random variable, to cover the sample paths of the parameters of the algorithm.  We are particularly interested in the probability of escape from a $\mu$-neighborhood $\mathcal{N}_\mu$ of the stable stationary point $\bar \alpha=(p^N,0)$. For such events, the rate function can be characterized as:
\[
\bar S(\mu)
=-\lim_{t\rightarrow\infty}\frac{1}{t}\log\Prob \left(
  (\alpha^1_{1,\tau},\alpha^1_{2,\tau})\not\in {\mathcal N}_\mu (\bar \alpha, \bar \alpha) \text{ for some $\tau \le t$} \right)
\]
when the limit is well defined.  Otherwise, we take $\limsup_{t\rightarrow\infty}$.

We use a simple result to check the rate function of a special class of random variables.
\begin{lemma}\label{lemma-infinite}
Suppose a sequence of bounded random variables $x_t$ with $\Expect x_t=0$ converges to 0 with probability 1.   Let ${\overline x}_t=\frac{1}{t}\sum_{s=1}^t x_s$ be the sample average.   Then, the rate function of ${\overline x}_t$ is infinite:
  \[
\limsup_{t\rightarrow\infty}\frac{1}{t}\log \Prob \left( \epsilonbar_{i,t}\in F \right)  = -\infty.
  \]
\end{lemma}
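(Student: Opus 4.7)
\emph{Plan.} My approach is to reduce the event $\{\overline{x}_t \in F\}$, for a closed set $F$ bounded away from $0$, to an event about the number of indices $s$ at which $x_s$ is not small, and then apply a Chernoff bound for sums of independent Bernoulli variables whose mean grows sublinearly in $t$. Fix $F$ closed with $0 \notin F$ and choose $\delta>0$ so that $F \subset \{\,|y|\ge \delta\,\}$. Since $|x_s|\le M$ for some constant $M$, the triangle inequality shows that whenever at most $k$ of the indices $s\le t$ satisfy $|x_s|>\delta/2$, one has $|\overline{x}_t|\le kM/t + \delta/2$. Setting $c=\delta/(2M)$ and $N_t = \sum_{s=1}^t \mathbb{I}\{|x_s|>\delta/2\}$, we obtain
\[
\{\overline{x}_t \in F\} \;\subset\; \{\,|\overline{x}_t|\ge\delta\,\} \;\subset\; \{\,N_t \ge c t\,\}.
\]

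\emph{Step two.} The hypothesis that $x_s\to 0$ almost surely together with boundedness gives, by dominated convergence, $p_s := \Prob(|x_s|>\delta/2)\to 0$, hence $\Expect N_t = \sum_{s=1}^t p_s = o(t)$ by Ces\`aro. In the setting in which this lemma is applied, the $x_s$ are measurable functions of the independent exploration shocks $\{\epsilon_{i,s}\}$, so the indicators $\mathbb{I}\{|x_s|>\delta/2\}$ are independent Bernoullis with means $p_s$. Chernoff's bound for sums of independent Bernoullis then yields
\[
\Prob(N_t \ge c t) \;\le\; \exp\!\left(-c t \log\frac{c t}{\Expect N_t} + c t - \Expect N_t\right).
\]
Dividing by $t$, the exponent per unit time equals $-c\log(ct/\Expect N_t) + c - \Expect N_t / t$; since $\Expect N_t / t\to 0$, we have $ct/\Expect N_t\to\infty$ and the leading term $-c\log(ct/\Expect N_t)$ diverges to $-\infty$. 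Combining with the inclusion above delivers $\limsup_{t\to\infty}\tfrac{1}{t}\log\Prob(\overline{x}_t\in F) = -\infty$, which is the claim.

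\emph{Main obstacle.} The one delicate point is the use of independence to factor the moment generating function in the Chernoff step. If the $x_s$ are only adapted to a filtration $\{\mathcal{F}_s\}$ rather than mutually independent, I would replace the unconditional Chernoff bound with an exponential supermartingale / Azuma--Hoeffding argument: the a.s.\ decay of $x_s$ transfers to the conditional probabilities $p_s^{\mathrm{cond}} = \Prob(|x_s|>\delta/2\mid\mathcal{F}_{s-1})\to 0$ in probability, and a stopping-time truncation at the first index where $p_s^{\mathrm{cond}}$ exceeds a vanishing threshold reduces the dependent case to the independent one. Apart from this dependence-handling issue, the proof is a standard truncation-plus-Chernoff calculation, and the essential content is simply that bounded a.s.\ convergence drives the exceedance mean $\Expect N_t$ to be $o(t)$, which by Chernoff forces super-exponential concentration of $\overline{x}_t$ at the origin.
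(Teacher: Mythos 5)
Your main argument is correct under the assumption that the $\{x_s\}$ are independent, and it takes a genuinely different route from the paper's. The paper works through the recursion $\bar x_t = \bar x_{t-1} + \tfrac{1}{t}(x_t-\bar x_{t-1})$, sets $\xi_t = x_t - \bar x_{t-1}$, and shows that the limiting log moment generating functional $H$ of the accumulated increments is identically zero, using an Egorov-type argument (boundedness plus a.s.\ convergence to zero gives, for each $\nu>0$, a set of probability $\ge 1-\nu$ on which the increments are uniformly of size $\le\nu$); the rate function is then $+\infty$ off the origin by Legendre duality. You instead truncate, count exceedances $N_t=\sum_{s\le t}{\mathbb I}\{|x_s|>\delta/2\}$, pass from $\{\bar x_t\in F\}$ to $\{N_t\ge ct\}$ with $c=\delta/(2M)$, and apply a Poissonian Chernoff bound. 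Under independence both give the same conclusion; yours avoids the $H$/duality machinery and makes the super-exponential rate explicit. A small strengthening worth noting: with independence, $x_s\to 0$ a.s.\ combined with the second Borel--Cantelli lemma gives $\sum_s p_s<\infty$, which is strictly stronger than your Ces\`aro bound $\Expect N_t = o(t)$; either suffices, but BC--II makes it immediate that the exponent $-c\log(ct/\Expect N_t)$ diverges like $-c\log t$.

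The dependence issue you flag at the end is not, however, a ``delicate point'' to be engineered around: without independence (or something playing the same role) the lemma as stated is false, and the Azuma/stopping-time repair you sketch cannot save it. Let $T^*$ satisfy $\Prob(T^*\ge n)=1/n$, let $\{Z_s\}$ be i.i.d.\ uniform $\pm 1$ independent of $T^*$, and set $x_s = M Z_s\,{\mathbb I}\{s\le T^*\}$. Then $|x_s|\le M$, $\Expect x_s=0$, and $x_s\to 0$ a.s.\ since $T^*<\infty$ a.s., so every hypothesis of the lemma holds. Yet for $\delta\in(0,M)$,
\[
\Prob(\bar x_t\ge\delta)\;\ge\;\Prob(T^*\ge t)\,\Prob\Bigl(\tfrac{1}{t}\sum_{s=1}^t Z_s\ge\tfrac{\delta}{M}\Bigr)\;\asymp\;\tfrac{1}{t}\,e^{-t\,I_Z(\delta/M)},
\]
so $\tfrac{1}{t}\log\Prob(\bar x_t\ge\delta)\to-I_Z(\delta/M)>-\infty$, where $I_Z$ is the Cram\'er rate for a single sign: the rate function of $\bar x_t$ is finite. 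Crucially, in this example $p_s^{\mathrm{cond}}=\Prob(|x_s|>\delta/2\mid\mathcal{F}_{s-1})=\tfrac{s-1}{s}\,{\mathbb I}\{T^*\ge s-1\}$ does converge to zero in probability (and a.s.), exactly as your repair assumes, yet the conclusion still fails: on the only polynomially unlikely event $\{T^*\gtrsim t\}$, a linear fraction of the indicators fire and $p_s^{\mathrm{cond}}$ stays near one throughout, so the stopping-time truncation discards precisely the paths that drive the tail. Independence of the $\{x_s\}$ --- or, in the paper's intended use, of the underlying exploration shocks from which they are built --- is what excludes such coordinated rare events, and it should be regarded as a genuine additional hypothesis of the lemma rather than a technicality.
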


\subsection{Properties of $\Pibar^0_{i,t}$ and $\Pibar^1_{i,t}$}
In this section we show that the expected profits under specification $\mathcal{M}^0$ are ``essentially'' asymptotically equal to the Nash equilibrium
profits.  To do so, we make use of the following notion of asymptotic equivalence of random variables, following \citet*{DemboandZeitouni98}.
\begin{definition}
A random variable $x_t$ is exponentially equivalent to $0$ if :  $\forall\mu>0$,
\[
  -\lim_{t\rightarrow\infty}\frac{1}{t}\log\Prob\left(\left| x_t\right| >\mu \right) =\infty.
\]
\end{definition}
If a random variable is exponentially equivalent to a number, we can treat the random variable as equal to the number in the long run in any practical sense.   Abusing notation, if $x_t$ is exponentially equivalent to $m$, we write $x_t=m$ because we deal with a large $t$.

A numerical simulation shows that under $\mathcal{M}^0$, prices were very close to the Nash equilibrium level $p^N$. The following result, proved in Appendix \ref{app: pi0} makes the intuition precise.

\begin{lemma}\label{lemma-pi0}
$\Pibar^0_{i,t}$ is exponentially equivalent to $\Pi^N$.
\end{lemma}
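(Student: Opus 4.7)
The plan is to apply Lemma \ref{lemma-infinite} piece by piece after expanding $\Pi^0_{i,t}$ about the symmetric Nash point $(p^N,p^N)$, using Proposition \ref{prop-converge} to supply the almost-sure convergence required by that lemma. Using the Nash first-order condition $A - 2Bp^N + Cp^N = 0$ (equivalently $A = (2B-C)p^N$ and $\Pi^N = B(p^N)^2$), a direct expansion with $u_{i,t} = p^0_{i,t} - p^N$ and $v_{j,t} = p_{j,t} - p^N$ produces the exact identity
\[
\Pi^0_{i,t} - \Pi^N \;=\; Cp^N\, v_{j,t} \;-\; B\, u_{i,t}^2 \;+\; C\, u_{i,t}\, v_{j,t}.
\]
Under $\lambda_t = 1/t$, the update \eqref{eq: recursive Nash} makes $\alpha^0_{0i,t-1}$ the running sample mean of past opponent prices, so $p^0_{i,t} = b(\alpha^0_{i,t-1})$ is an affine function of that mean; Proposition \ref{prop-converge} then gives $u_{i,t}, u_{j,t} \to 0$ almost surely, and with $p_{j,t} = b^0_{j,t} + \epsilon_{j,t}$ one further decomposes $v_{j,t} = u_{j,t} + \epsilon_{j,t}$.

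For the quadratic and cross terms, both $B u_{i,s}^2$ and $C u_{i,s} v_{j,s}$ are uniformly bounded (confining the iterates to a compact set by a soft projection if needed) and converge almost surely to $0$. Subtracting off the conditional mean of each, which is a deterministic function of past data that itself vanishes almost surely, leaves a centered, bounded, mean-zero sequence converging a.s.\ to $0$, and Lemma \ref{lemma-infinite} then shows that each such sample average has infinite rate function. The same treatment applied to $u_{j,s}$ inside the linear term shows that $\frac{Cp^N}{t}\sum_{s\le t} u_{j,s}$ is likewise exponentially equivalent to $0$.

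The residual piece is the i.i.d.\ shock sample average $Cp^N\,\bar\epsilon_{j,t}$, and this is the main obstacle: for strictly constant variance $\sigma^2>0$, Cram\'er's theorem gives only the finite rate function $\mu^2/(2C^2(p^N)^2\sigma^2)$ at deviation $\mu$, falling short of the infinity required by the definition of exponential equivalence. I would close this by working within the variance regime allowed by the paper's asymptotic variance assumption, in which $\sigma_t^2\downarrow 0$ slowly (for instance as $1/t^\gamma$ with $\gamma\le 1$): then $\epsilon_{j,t}$ is bounded, mean zero, and tends to $0$ almost surely, so a final application of Lemma \ref{lemma-infinite} gives exponential equivalence of its sample average to $0$. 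Summing the three exponentially-equivalent-to-zero contributions yields $\Pibar^0_{i,t}-\Pi^N$ exponentially equivalent to $0$, which is precisely the statement of the lemma.
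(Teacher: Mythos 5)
Your expansion around $(p^N,p^N)$ is exact and correct, and you have surfaced a genuine issue that the paper's own proof sidesteps. Starting from the text's definition $\Pi^0_{i,t}=p^0_{i,t}(A-Bp^0_{i,t}+Cp_{j,t})$, the per-period excess profit has a linear raw-shock contribution $Cp^N\epsilon_{j,t}$ coming from $p_{j,t}$. For a fixed variance $\sigma^2>0$, averaging this leaves $Cp^N\overline\epsilon_{j,t}$ inside $\Pibar^0_{i,t}-\Pi^N$, which by Cram\'er has a finite rate function, so exponential equivalence in the \citet*{DemboandZeitouni98} sense would fail. You are right that this is the load-bearing term, and right that the regime $\sigma^2_t\downarrow 0$ (exactly the hypothesis in Proposition \ref{pr: first main result}) is what rescues the claim. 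The paper's appendix argument instead writes $\Pi^0_{i,t}=\max_p p(A-Bp+C\alpha^0_{i,t})$ — a quadratic in the running mean $\alpha^0_{i,t}=p^N+\overline\epsilon_{j,t}$ only — which silently replaces the realized price $p_{j,t}$ by the estimate $\alpha^0_{i,t}$ and so drops the dominant $Cp^N\epsilon_{j,t}$ term. Under that substitution $\Pi^0_{i,t}\to\Pi^N$ a.s.\ even for constant $\sigma^2$ and one application of Lemma \ref{lemma-infinite} closes; under the model's stated profit (yours) that step needs $\sigma_t^2\to 0$. Your route is more faithful to the definition in Section 3.1 and more transparent about where the hypothesis is used.

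Two small gaps worth tightening, both of which the paper also leaves implicit. First, the decomposition $v_{j,t}=u_{j,t}+\epsilon_{j,t}$ should use $u_{j,t}=b_{j,t}-p^N$ with $b_{j,t}$ the mixture $(1-\pi_{j,t-1})b^0_{j,t}+\pi_{j,t-1}b^1_{j,t}$ rather than $b^0_{j,t}$ alone; this is cosmetic since Proposition \ref{prop-converge} (together with the paper's argument for $b^0,b^1\to p^N$ a.s.) covers the mixture. Second, Lemma \ref{lemma-infinite} is stated for a \emph{bounded} sequence $x_t$, and its proof uses $M=\sup_t|x_t|<\infty$; for Gaussian $\epsilon_{j,t}$ the inputs you (and the paper) feed it are not uniformly bounded, so a truncation or sub-Gaussian tail bound is needed to justify the exponential-moment estimates. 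Relatedly, passing from $\sigma^2_t\to 0$ to $\epsilon_{j,t}\to 0$ a.s.\ requires a Borel--Cantelli argument, i.e.\ summability of $\Prob(|\epsilon_{j,t}|>\delta)$, which holds for sub-Gaussian shocks with $\sigma^2_t\sim t^{-\gamma}$, $\gamma>0$, but should be stated. With these two points made explicit, your proof is complete and, in my view, a cleaner account of where the result actually comes from than the paper's own.
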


By contrast, under $\mathcal{M}^1$, prices recurrently visit a neighborhood of the collusive price level $p^C$. In Appendix \ref{app: pi1} we construct an ``unusual'' sequence of $(\epsilon_{1,t},\epsilon_{2,t})$ where the two shocks are almost perfectly positively correlated.  While rare, such an event has a small but positive probability, and leads firms to increase their slope coefficient estimates  $\alpha^1_{11,t}$ and $\alpha^1_{12,t}$ simultaneously at the same rate. With a sufficient increase, the beliefs then enter a region where the expected direction of motion leads away from
the stable point. As firms start to move prices together, they increase the estimated slope in the reaction function, which leads to a further increase in prices. What is crucial for such a result to persist is that the threshold to enter the ``self-reinforcing'' region is small enough. In Lemma \ref{lm: threshold} in Appendix \ref{app: pi1} we show that the threshold gets smaller as $\sigma^2$, or the size of the experimentation gets smaller.

In general, as $\sigma^2\rightarrow 0$, the size of exploration decreases, and it becomes more difficult from the small neighborhood of the stable stationary point. In the case of specification ${\mathcal M}^0$, the lower bound of the rate function goes to infinity as $\sigma^2\rightarrow 0$, which is a crucial step in showing that $\Pibar^0_{i,t}$ is exponentially equivalent to $\Pi^N$.  In contrast, $\bar S(\mu;\sigma^2)$, the large deviation rate function now emphasizing its dependence on the exploration variance, is uniformly bounded from above even if $\sigma^2\rightarrow 0$.

\begin{lemma}\label{lemma-pi1}
  \[
\limsup_{\sigma^2\rightarrow 0} \bar S (\mu;\sigma^2)<\infty
  \]
\end{lemma}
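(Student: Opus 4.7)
The plan is to bound $\bar S(\mu;\sigma^2)$ from above by exhibiting a single explicit escape trajectory whose rate functional value is uniformly bounded as $\sigma^2\to 0$. By the contraction principle applied to the stochastic approximation driving $(\alpha^1_{1,t},\alpha^1_{2,t})$, $\bar S(\mu;\sigma^2)$ equals the infimum over normalized shock paths $\{(\hat\epsilon_{1,t},\hat\epsilon_{2,t})\}$ of the accumulated rate function value, subject to the induced trajectory of $(\alpha^1_{1,\tau},\alpha^1_{2,\tau})$ leaving the $\mu$-neighborhood of $(\bar\alpha,\bar\alpha)$. It therefore suffices to produce one admissible escape path whose cost remains bounded in $\sigma^2$.

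First I would invoke Lemma \ref{lm: threshold} of Appendix \ref{app: pi1}: there is a threshold $\delta(\sigma^2)\to 0$ as $\sigma^2\to 0$ such that once both slope estimates $\alpha^1_{1i,t}$ exceed $\delta(\sigma^2)$, the mean dynamics already drift away from $\bar\alpha$ toward the self-reinforcing region that produces the collusive outcome. Past this threshold the escape trajectory can be chosen to follow the mean flow, contributing zero cost to the rate functional. The entire cost of escape is therefore concentrated in a short initial segment that must push both slope estimates above $\delta(\sigma^2)$.

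For that initial segment I would prescribe perfectly correlated normalized shocks $\hat\epsilon_{1,t}=\hat\epsilon_{2,t}=c$ of a fixed constant magnitude $c$ on a stretched-time window $[0,T]$ with $T$ independent of $\sigma^2$. Starting near $\bar\alpha$ the regressor $p^1_{i,t}$ has variance of order $\sigma^2$, so $R^{-1}_{i,t-1}$ scales like $1/\sigma^2$; pairing this inverse with the demeaned regressor $p^1_{i,t}-p^N$ of size $\sigma\hat\epsilon_{i,t}$ and the residual $p_{j,t}-\alpha^1_{0i,t-1}-\alpha^1_{1i,t-1}p^1_{i,t}$ of size $\sigma\hat\epsilon_{j,t}$ reduces the slope update in \eqref{eq: recursive least square} to approximately $\Delta\alpha^1_{1i,t}\approx\lambda_t\,\hat\epsilon_{i,t}\hat\epsilon_{j,t}$, a quantity independent of $\sigma^2$. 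Over the window $[0,T]$ the slope estimates then accumulate a positive movement of order $c^2 T$, which exceeds $\delta(\sigma^2)$ for all sufficiently small $\sigma^2$. The rate functional cost of this segment is at most $2TI(c)$, a fixed constant. Appending the zero-cost deterministic continuation past the threshold yields an escape trajectory of uniformly bounded cost.

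The main obstacle will be justifying the scaling $R^{-1}_{i,t}\sim 1/\sigma^2$ over the initial segment itself, since $R_{i,t}$ need not have populated to its stationary scale at the start of the window. Two remedies are available: either initialize the algorithm at its asymptotic RLS state, or prepend a brief preliminary burn-in using i.i.d.\ normalized shocks of unit magnitude, whose rate function cost is $O(1)$ independent of $\sigma^2$ and whose sole role is to build the empirical regressor variance up to order $\sigma^2$. Either modification preserves uniform boundedness of the total cost, which is exactly what the lemma claims: $\limsup_{\sigma^2\to 0}\bar S(\mu;\sigma^2)<\infty$.
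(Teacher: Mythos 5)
Your proposal is correct in spirit but follows a genuinely different route from the paper's proof, so the comparison is worth spelling out. The paper's argument is event-based and probabilistic: it first reparametrizes the slope estimate as a ratio of shock moments, $\alpha^1_{1i,t}\approx \pi_j \sum_t \epsilon_{1t}\epsilon_{2t} / \sum_t \epsilon_{it}^2$, and then shows (after conditioning on entering the wedge $\mathcal W(\alphalbar,\betalbar)$) that the escape event reduces to a constraint on the ratio $\pi_1\sum\epsilon_{1t}^2/(\pi_2\sum\epsilon_{2t}^2)$ lying in $[1,\betalbar/\alphalbar]$. Since $\alphalbar\sim\sigma^2$ and $\betalbar\sim\sigma$, the admissible region for the normalized second moments $(\bar x_1,\bar x_2)$ grows as $\sigma^2\to 0$; the sets are nested in $\sigma$, so the associated Cram\'er rate is monotone and its limsup is finite (and in fact zero when $\pi_1\geq\pi_2$, since the mean $(1,1)$ is then in the set). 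Your approach instead works directly on the variational side: you exhibit a single admissible escape trajectory — constant correlated normalized shocks on a fixed window $[0,T]$ followed by the zero-cost mean flow past the threshold — and bound its action. Both proofs ultimately hinge on the same two scalings, namely $R^{-1}\sim 1/\sigma^2$ near $\bar\alpha$ (so that the slope update is $O(1)$ in the normalized shocks) and $\alphalbar\sim\sigma^2$ from Lemma \ref{lm: threshold}, but the paper converts the escape into a statement about relative shock magnitudes while you carry out the cost calculation on an explicit path. Your constructive version is arguably more transparent and more obviously an upper bound; the paper's version avoids having to verify properties of a prescribed path (e.g., that the intercepts stay consistent with the reparametrization and that the mean flow carries the trajectory fully out of $\mathcal N_\mu$), but at the cost of a somewhat delicate ``equivalence of events'' step.

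Two gaps in your write-up deserve a comment, though neither is fatal. First, the $R^{-1}\sim 1/\sigma^2$ scaling applies to the stationary $\bar R$, and you correctly flag that $R_{i,t}$ near the start of the escape window need not be at that scale; the paper avoids this entirely by passing to the explicit regression-quotient form of $\alpha^1_{1i,t}$, which already has the normalization built in. Your proposed burn-in repair is plausible but would need care: its rate cost is $O(1)$ only if the burn-in window is bounded, and you should check that a bounded window suffices to drive the empirical second-moment matrix to within a constant factor of $\bar R$. Second, your prescribed path overshoots $\alphalbar$ to a level $c^2T$ that is $O(1)$, not $O(\sigma^2)$; this is fine precisely because Lemma \ref{lm: threshold} also shows $\alphaubar\to\pi_j\geq\pilbar>0$ as $\sigma^2\to 0$, so for small $\sigma^2$ the self-reinforcing region contains the whole interval $[c^2T,\pilbar)$, but that fact should be cited explicitly rather than left implicit in the phrase ``past this threshold.''
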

\subsection{Dynamics of $\pi_{i,t}$}

Recall that $\pi_{i,t}$ is the probability assessment of specification $\mathcal{M}^1$ at time $t$, which evolves according to
\[
\pi_{i,t}=\pi_{i,t-1}+{\lambda_t} \left( {\mathbb I}(\Pibar^1_{i,t}>\Pibar^0_{i,t}) -\pi_{i,t-1}\right).
\]
Since we know that $\Pibar^0_{i,t}$ is exponentially equivalent to $\Pi^N$, we can write the recursive form as
\[
\pi_{i,t}=\pi_{i,t-1}+{\lambda_t} \left( {\mathbb I}(\Pibar^1_{i,t}>\Pi^N)-\pi_{i,t-1} \right),
\]
whose associated ODE is
\[
{\dot \pi}_i = \Prob\left( \Pibar^1_{i,t}>\Pi^N \right) -\pi_i.
\]

Our key result, proved in Appendix \ref{app: pi}, shows that in the limit each duopolist will put probability one on specification $\mathcal{M}^1$.
This result is a consequence of our previous analysis, that profits are higher under the specification with feedback than the constant specification.

\begin{proposition} \label{pr: first main result}
For any $\sigma^2_t\rightarrow 0$,
\[
  \lim_{t\rightarrow\infty}\Prob\left( \Pibar^1_{i,t}>\Pi^N \right)=1.
\]
\end{proposition}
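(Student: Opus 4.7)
The plan is to combine Lemmas \ref{lemma-pi0} and \ref{lemma-pi1} to show that the long-run average profit generated under specification \(\mathcal{M}^1\) strictly exceeds \(\Pi^N\) with probability tending to one. The update
\[
\Pibar^1_{i,t}=\Pibar^1_{i,t-1}+\lambda_t\bigl(\Pi^1_{i,t}-\Pibar^1_{i,t-1}\bigr)
\]
is a stochastic approximation with associated ODE \(\dot{\bar\Pi}^1_i=\Expect[\Pi^1_{i,t}]-\bar\Pi^1_i\), so it suffices to show that, under the quasi-stationary distribution over prices induced by the learning dynamics in \(\mathcal{M}^1\), the expected per-period profit is bounded below by \(\Pi^N+\delta\) for some \(\delta>0\). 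Standard stochastic approximation then delivers \(\Pibar^1_{i,t}\to\Expect[\Pi^1_{i,t}]>\Pi^N\) in probability, which is the claim.

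First I would split every sample path into a stable regime, in which \((\alpha^1_{1,t},\alpha^1_{2,t})\) stays in a small neighborhood \({\mathcal N}_\mu(\bar\alpha,\bar\alpha)\) of the stable point \((p^N,0)^2\), and an excursion regime, in which the beliefs have escaped. Inside the stable regime the realized profit \(\Pi^1_{i,t}\) concentrates exponentially around \(\Pi^N\) by essentially the argument of Lemma \ref{lemma-pi0}. For the excursion regime I would use the explicit construction of the most-likely escape path in Appendix \ref{app: pi1}: a positively correlated shock sequence pushes both \(\alpha^1_{11,t}\) and \(\alpha^1_{12,t}\) simultaneously toward one, carrying prices along the \(45^\circ\) line toward the collusive outcome \((p^C,p^C)\). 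Because the escape path is directed at the collusive region, there exists \(\Pi^\ast\in(\Pi^N,\Pi^C]\) and a positive-duration subinterval of each excursion on which \(\Pi^1_{i,t}\ge\Pi^\ast\).

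Second, I would show that the long-run fraction of time spent in the excursion regime is bounded away from zero uniformly as \(\sigma^2_t\to 0\). This is the crucial use of Lemma \ref{lemma-pi1}: because \(\bar S(\mu;\sigma^2)\) is uniformly bounded from above, the expected interarrival time between successive escapes does not blow up in the small-exploration limit, so the excursion frequency does not collapse. This is in sharp contrast to \(\mathcal{M}^0\), where the analogous rate function diverges and Lemma \ref{lemma-pi0} pins \(\Pibar^0_{i,t}\) exponentially to \(\Pi^N\). Combining the positive time-fraction in excursions with the profit premium \(\Pi^\ast-\Pi^N>0\) earned along them yields \(\Expect[\Pi^1_{i,t}]\ge\Pi^N+\delta\), which through the ODE above delivers the stated probability limit.

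The main obstacle will be converting the single-escape large-deviation bound of Lemma \ref{lemma-pi1} into a quantitative lower bound on the cumulative time that the process spends in the collusive region. This requires jointly controlling the expected interarrival time between escapes, the expected duration of each excursion (including the slow, asymmetric return through the price-cutting phase described in Section \ref{sec: discussion}), and the expected profit earned along them, and showing that the product is bounded away from \(\Pi^N\) along any admissible schedule \(\sigma^2_t\to 0\). The uniform upper bound \(\limsup_{\sigma^2\to 0}\bar S(\mu;\sigma^2)<\infty\) is exactly the tool that prevents the frequency of collusive excursions from vanishing faster than the per-excursion profit contribution to the running average.
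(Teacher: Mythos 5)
Your overall plan is in the right spirit: combine Lemmas \ref{lemma-pi0} and \ref{lemma-pi1}, split each sample path into a ``stable'' phase near $(p^N,0)$ and an ``excursion'' phase where the estimated slopes have escaped, and argue that the excursions carry a profit premium. The paper's proof is built from these same ingredients. However, there is a concrete gap at the pivotal step, and it is exactly the step you yourself flag as the ``main obstacle.''

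You claim that because $\bar S(\mu;\sigma^2)$ is uniformly bounded as $\sigma^2\to 0$, the interarrival time between escapes does not blow up, so the long-run fraction of time in excursions stays bounded away from zero. This does not follow in the decreasing-gain regime $\lambda_t=1/t$ that the proposition is stated in. Under decreasing gain, escape from a \emph{fixed} $\mu$-neighborhood over the interval $[t_K,m(t_K+\tau)]$ has probability on the order of $e^{-t_K S^*(\mu,\sigma^2)}$, and this goes to zero as $t_K\to\infty$ even when $S^*(\mu,\sigma^2)$ is bounded. A bounded rate function does not prevent escapes from becoming vanishingly rare; it merely prevents them from becoming rare \emph{super}-exponentially fast. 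So the time-fraction your argument needs collapses to zero, and the product of (vanishing frequency) $\times$ (bounded profit premium) does not stay away from $\Pi^N$ without further work.

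The paper's resolution is a simultaneous double-limit that your proposal omits: choose a sequence $\mu_K\downarrow 0$ (keeping the threshold point $(\alphalbar,\alphalbar)$ inside the neighborhood) so that $S^*(\mu_K,\sigma^2_K)\downarrow 0$, and then set $t_K=\lceil -\log\nu'/S^*(\mu_K,\sigma^2_K)\rceil$ so that the escape probability over $[t_K,m(t_K+\tau)]$ is held fixed at $\nu'$ for all $K$. Shrinking $\mu_K$ does two things at once: it keeps the escape probability from vanishing, and it makes the profit deficit inside the stable neighborhood (which is $O(\mu_K)$) vanish. Then
\[
\Expect\bigl[\Pi^1_{i,t}\bigr]-\Pi^N \;\ge\; \nu'\,\zeta(\tau)\;-\;O(\mu_K)\bigl(1-\nu'\bigr)\;\longrightarrow\;\nu'\,\zeta(\tau)>0,
\]
where $\zeta(\tau)>0$ is the profit bump from following the outward mean dynamics after escape. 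Your ``stable regime cost $\approx 0$'' and ``excursion premium $\ge\Pi^*-\Pi^N$'' pieces are both present in the paper, but the balancing act between them is only achieved by letting the neighborhood and the escape-probability calibration shrink \emph{together} with $\sigma^2_K$. Without that coupling, the decomposition you propose cannot close the argument.
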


Thus each duopolist, who chooses prices and experiments independently, and independently chooses a model specification, will in the long run choose the specification $\mathcal{M}^1$. Under this specification, the two firms will recurrently and in almost perfect unison, increase prices to near the cartelized, joint monopoly level, as first observed in \citet{Williams00}.  We have shut down all avenues for explicit or implicit collusion, and yet the pricing algorithms recurrently lead to prices near collusive levels. Thus the observation of parallel price increases to supra-competitive levels cannot on its own be used as evidence of collusion.  In the remainder of the paper we
further explore the key features of the model and its structure which underlie these results.

\section{Analysis of Fixed Specifications}

While a key aspect of our model is the endogenous specification choice, it is instructive to analyze cases with fixed, symmetric specifications.  The comparison of  environments with $({\mathcal M}^0, {\mathcal M}^0)$ and $({\mathcal M}^1, {\mathcal M}^1)$ shows the fundamental difference in the stochastic properties of the two different pairs of specifications. For simplicity, we refer here to $\mathcal{M}^i$ as the symmetric pair $(\mathcal{M}^i,\mathcal{M}^i)$.  In addition to being simpler, studying a fixed specification allows us to draw on existing results from the learning literature, including \citet*{ChoWilliamsandSargent02} and \citet*{Williams2019}.

The contrast in the large deviation properties of the two cases highlights features that conventional analysis of the stability of the learning dynamics cannot capture.  While both specifications converge to the Nash outcomes, the convergence in specification $\mathcal{M}^1$ is less robust, in a sense we make clear.   Applying the large deviation results allows us to characterize the collusive outcomes which arise without collusion.

\subsection{Analysis of $\mathcal{M}^0$}\label{details-M0}

As described in Section \ref{sec: layout}, under ${\mathcal M}^0$, firm $i$ treats the opponent's action as an unobserved parameter and estimates the parameter by averaging the past actions.
Let $\alpha^0_{i,t}$ be the estimate of duopolist $i$ about duopolist $j$'s action based on $t$ periods.  Given $\alpha^0_{i,t-1}$, firm $i$ chooses its best response $b^0_{i,t}$ in period $t$, with the actual price incorporating the exploration shock.
Given $(p_{1,t},p_{2,t})$, duopolist $i$ receives profit $\Pi^{00}_{i,t}$ in period $t$, and updates its estimate of duopolist $j$'s price according to the recursive algorithm \eqref{eq: recursive Nash}.

As discussed in Appendix \ref{app:M0-conv}, as $\lambda_t \rightarrow 0$, we
can approximate the dynamics from (\ref{eq: recursive Nash}) by the solution of the ordinary differential equation:
\[
  {\dot \alpha}^0_i(t) =  b(\alpha^{0}_j(t))-\alpha^{0}_i(t)
\]
The stable point of this ODE system for $i,j=1,2$ is the limit point of beliefs. Thus we can show that \eqref{eq: recursive Nash} converges to the Nash equilibrium price, which is a unique stable stationary point of the differential equation:
\[
\lim_{t\rightarrow\infty}\alpha^0_{i,t}= p^N
\]
with probability 1, where
\[
p^N=\frac{A}{2B-C}.
\]
Thus as $t \rightarrow \infty$ the beliefs $\alpha^0_{i,t}$ will converge to the stable point where expected prices $b_{i,t}$ are equal to the Nash equilibrium price $p^N$ and observed prices $p_{i,t}$ are i.i.d. fluctuations around this constant price level. This is consistent with what we observed in Figure \ref{fig-sime}.

We need to go beyond the convergence and stability results, using large deviation theory. More detail on the background and calculations is provided in Appendix \ref{app: LD background}. We saw above in Lemma \ref{lemma-pi0} that profits in $\mathcal{M}^0$ are asymptotically equivalent to the Nash profits. Here we provide more detail and illustration of this result.

In earlier work \citet*{ChoWilliamsandSargent02} and \citet*{Williams2019}, we adapted results from large deviation theory due to \citet*{KushnerandYin03} to characterize this probability. In particular, under mild regularity conditions on $\epsilon_{i,t}$, there exists a real-valued function rate function $S$ for sample paths that satisfies
\[
-\inf_{x\in X} S(x) = \lim_{t\rightarrow\infty}\frac{1}{t}\log \Prob\left(  (\alpha^0_{1,\tau},\alpha^0_{2,\tau}) \in X  \text{ for some $\tau \le t$} \mid \ (\alpha^0_{1,0},\alpha^0_{2,0})=(p^N,p^N) \right)
\]
We are particularly interested in exit from sets of radius $r$, so we define
\[ {\bar S}(r) = \inf_{\{x: \|x -(p^N,p^N) \|=r\}} S(x). \]
If $\epsilon_{i,t}$ is a Gaussian random variable with mean 0 and variance $\sigma^2$, we can compute $\bar S(r)$ explicitly.\footnote{The calculations here are analogous to \citet*{FreidlinandWentzell98}.} First, we suppose that the beliefs follow mean dynamics but are subject to a perturbation:
\begin{equation} \label{perturbed ODE0}
\dot \alpha^0_i(t) =  b(\alpha^{0}_j(t))-\alpha^{0}_i(t) + v_i(t).
\end{equation}
Then we stack the estimates into a vector $\alpha^0=(\alpha^0_1,\alpha^0_2)$ and define an instantaneous cost function $L$ that weights the perturbations $v$ by the shock variance:
\[  L (\alpha^0,v) = \frac{1}{2\sigma^2} v' v. \]
Here the cost function $L$ is independent of $\alpha^0$, but we keep the more general notation for later reference.
Then we solve the control problem:
\begin{equation*} 
  \bar{S}(r) = \inf_{v(\cdot), T}  \int_{0}^T
  L (\alpha^0(s),v(s))  ds
\end{equation*}
where the minimization is subject to the dynamics \eqref{perturbed ODE0}  and:
\begin{equation*}
 \alpha^0(0) =(p^N,p^N), \ \ \text{ and }  \|\alpha^0(T) -(p^N,p^N) \|=r .
 \end{equation*}
If $v \equiv 0$ then the beliefs follow the mean dynamics. The cost is zero, but the beliefs do not escape. To find the most probable escape path, we find a least cost path of
perturbations that pushes beliefs a distance $r$ from the Nash equilibrium.

As we discuss in Appendix \ref{app: LD background}, the solution of this cost minimization problem characterizes the most likely escape path from the stable stationary point. The probability of observing an escape on a bounded time interval is exponentially decreasing, at a  rate given by the minimized cost. The minimizing path from the cost function also determines the most likely escape path. With probability approaching one, escapes occur at the point determined from the solution of the cost minimization problem.

In Appendix \ref{app: LD M0} we show how to calculate the large deviation rate function $S(x)$ explicitly.  In particular, $S(x)$ is a quadratic function of $x$, with $S(p^N,p^N)=0$.  The level sets of the cost function are ellipses, with the major axis on the 45 degree line.  Thus we expect that large deviations from the equilibrium point are nearly symmetric, and are equally likely to result from correlated increases or decreases in prices.

Moreover, we can write:
\[ {\bar S}(r) =  \frac{S_0 \ r^2}{\sigma^2}, \]
for a constant $S_0>0$.  Note that as $r$ increases, the neighborhood of the stable stationary point becomes larger, which makes it more difficult for $(\alpha^0_{1,t},\alpha^0_{2,t})$ to escape.  If $\sigma^2$ decreases, the ``scope'' of exploring the prices around $b^0_{i,t}$ decreases, and it takes more explorations to escape from the neighborhood, as depicted in Figure \ref{fig-rates0}.  The comparative static properties of the large deviation rate function are central to our analysis.

\begin{figure}
\centerline{\includegraphics[width=0.7 \textwidth]{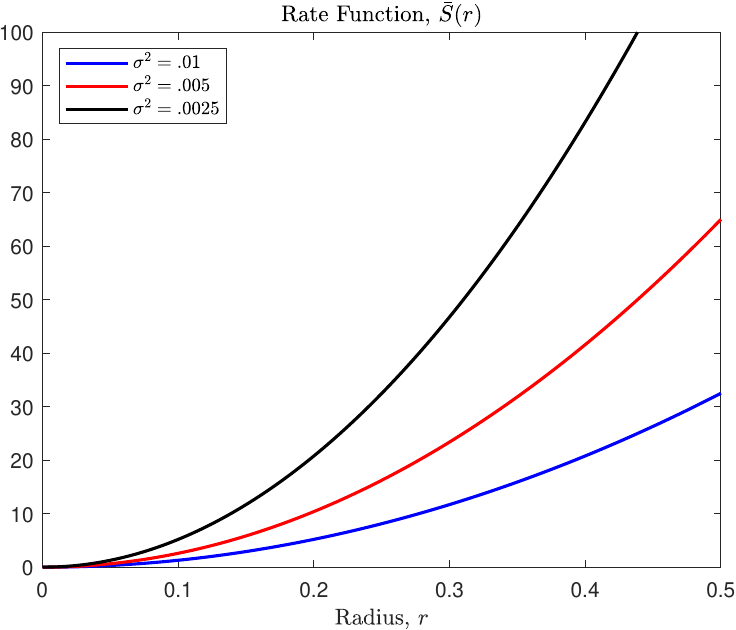}}
\caption{Solution of the cost minimization problem determining the rate function $\bar S(r)$ for different radius $r$ and different shock volatility $\sigma^2$ for the specification $\mathcal{M}^0$.}\label{fig-rates0}
\end{figure}

Overall, our results for $\mathcal{M}^0$ support what we observed in the simulations. Convergence to the equilibrium beliefs and Nash price is strong, and deviations are quite rare. When they occur, the most likely large deviations are associated with coordinated price movements. But decreases in prices are as likely to be realized as price increases, and the likelihood of significant changes in prices is very small. Furthermore, the rate function varies inversely with the variance $\sigma^2$, so for smaller shocks the likelihood of escaping the equilibrium diminishes rapidly. Many of these properties differ sharply when we allow feedback in the next section.

\subsection{Specification $\mathcal{M}^1$}

Now we suppose that each firm chooses  specification ${\mathcal M}^1$. As described above, duopolist $i$ predicts duopolist $j$ reacts to duopolist $i$'s price as a linear function (\ref{eq: perceived law 1}). Let $\alpha^1_{i,t}=(\alpha^1_{0i,t},\alpha^1_{1i,t})$ be the estimate of
firm $i$, who then chooses best response $b^1_{i,t}=b(\alpha^1_{i,t})$.
After observing $(p_{1,t},p_{2,t})$, each firm updates according to \eqref{eq: recursive least square}.

We again apply standard results in stochastic approximation to characterize the limit. Details of the calculations are in Appendix \ref{app:M1-conv}.  In particular, the mean dynamics ODEs now take the form:
\begin{eqnarray*}
\dot \alpha^1_i(t)  &=& R_i(t)^{-1} g_i(\alpha^1_i(t)) \\
\dot R_i(t) &=& M(\alpha^1_i(t) -R_{i}(t).
\end{eqnarray*}
Full expressions are given there, and in particular:
\[
  g_i(\alpha^1_i) =
\left[
    \begin{matrix}
      b(\alpha^1_{j}) -\alpha^1_{0,i} -\alpha^1_{1,i}b(\alpha^1_{i})   \\
      \left(b(\alpha^1_{j}) -\alpha^1_{0,i} -\alpha^1_{1,i}b(\alpha^1_{i})\right)b(\alpha^1_i) -\alpha^1_{1,i}\sigma^2
      \end{matrix}
    \right].
  \]
The stationary point of the ODE satisfies $g_i(\alpha^1_i)=0$, which thus requires  $b(\alpha^1_{j})=\alpha^1_{0,i}$ and $\alpha_{1i}=0$.  Therefore we have, for $i=1,2$,:
\[
\lim_{t\rightarrow\infty}(\alpha^1_{0i,t},\alpha^1_{1i,t})=(p^N,0)
\]
with probability 1, which is the Nash equilibrium outcome.   Thus, each duopolist's profit converges to $\Pi^N$.

While the equilibrium $(p^N,0)$ is stable, the mean dynamics which draw belief toward the equilibrium are ``weak'' along some directions. For example, two of the eigenvalues of the Jacobian matrix of the mean dynamics are very small, and vanish as $\sigma^2 \rightarrow 0$.  This is a manifestation of the multiplicity of self-confirming equilibria that we saw above.  Absent the price shocks, the equilibrium conditions $g_i(\alpha^1_i)=0$ do not uniquely identify the slope and intercept coefficients in the reaction function. Instead, any point $\alpha=(\alpha_0,\alpha_1)$ on the surface
\[
  \alpha_0=b(\alpha)-\alpha_{1}b(\alpha)
\]
will be an equilibrium.  With price shocks and hence $\sigma^2>0$, only $(p^N,0)$ satisfies the equilibrium conditions, but the other points along the surface come close to doing so, in that divergence of the mean dynamics from zero is of order $\sigma^2$.

We saw above that the episodes of price increases were symmetric, but there were asymmetries in the episodes when prices are cut from near the joint monopoly price $p^C$ to the Nash equilibrium price $p^N$. As discussed above, the price increases occur after an unusual sequence of correlated shocks. However, the price decreases are expected, and result from the mean dynamics leading back to the stable point. We illustrate the mean dynamics driving the price cutting episodes in Appendix  \ref{app:M1-conv}. Small differences in the realizations of shocks after the escapes determine which firm initiates the price cutting episode.

While both symmetric ${\mathcal M}^0$ and $\mathcal{M}^1$ specifications converge to the Nash equilibrium, the large deviation properties of the estimators are dramatically different.   As above, the rate function can be found by solving a cost minimization problem. The beliefs are initialized at the limit point $\alpha^1_i=(p^N,0)$ and they follow perturbed mean dynamics until they move a distance $r$.  In Appendix \ref{app: LD M1}, we show how to calculate the rate function when $\epsilon_{i,t}$ are Gaussian random variables with variance $\sigma^2$.

\begin{figure}
\centerline{\includegraphics[width=0.8 \textwidth]{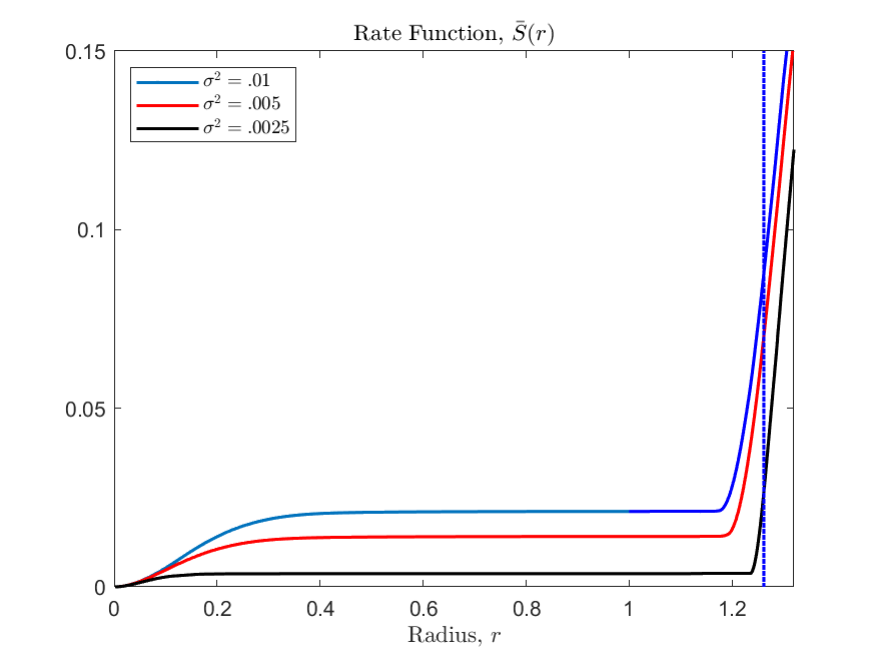}}
\caption{Solution of the cost minimization problem determining the rate function $\bar S(r)$ for different radius $r$ and different shock volatility $\sigma^2$ for the specification $\mathcal{M}^1$.
}\label{fig-rates1}
\end{figure}

The rate function $\bar S(r)$ for $\mathcal{M}^1$ is shown in Figure \ref{fig-rates1}.
We saw above that as $\sigma^2\rightarrow 0$, the probability of escaping from a small neighborhood of the stable stationary point under ${\mathcal M}^0$ decreases.    In contrast, the escape probability under ${\mathcal M}^1$ increases as $\sigma^2\rightarrow 0$.
As $\sigma^2$ converges to 0, so does $\bar S$.   For a small $\sigma^2$, it becomes very easy for $(\alpha^1_{1t},\alpha^1_{2,t})$ to escape from a ``small'' neighborhood of the Nash equilibrium outcome. This is a reflection of the weakness mean dynamics that we noted above. In order to escape from the equilibrium point, the perturbations must fight against the mean dynamics which would pull beliefs back to the equilibrium point. As the variance of the price shocks shrinks, the mean dynamics become weaker and the radius of local stability shrinks as well.

Note that ${\bar S}(r)$ is a weakly increasing function of $r$ but has a flat portion, implying that the size of the neighborhood where $(\alpha^1_{1t},\alpha^1_{2,t})$ can escape easily is in fact ``large.'' This explains the large fluctuations of the prices charged by each duopolist, which we've seen in Figures \ref{fig-sim-prices} and \ref{fig-simb} above. The shape of the rate function  ${\bar S}(r)$ means that once the beliefs escape a small distance from the equilibrium point, they are very likely to continue to escape until the inflection point where ${\bar S}$ starts to increase again.

\subsection{Relation Between Mean Dynamics and Large Deviations}

The significant escape dynamics and flat portion of the rate function are driven by features of the mean dynamics in the $\mathcal{M}^1$ specification.  Above we saw that the that there is a unique stable equilibrium which we can denote $\bar \alpha=[p^N,0]$, which is a stationary point of the mean dynamics $g(\bar \alpha)=0$ and all of the eigenvalues of the Jacobian matrix have negative real part. Thus at least locally around $\bar \alpha$ the mean dynamics point back toward $\bar \alpha$.

However, outside a neighborhood of $\bar \alpha$ the mean dynamics point away from it. We established this analytically in the proof of Lemma \ref{lemma-pi1} analytically, and we illustrate it numerically in the Appendix \ref{app: LD M1}.   This means that once the beliefs escape a neighborhood around the equilibrium, we would expect further movement away from the equilibrium.  The perturbations $v$ in the large deviation cost minimization problem are only necessary to push against the mean dynamics. If the mean dynamics point away from the equilibrium, further escapes happen ``for free'' with $v=0$.

We show in Appendix \ref{app: md-ldp} that these properties of the mean dynamics help explain the properties of the rate function $\bar S(r)$ and hence the escape dynamics.
$\bar S(r)$ is increasing for very small $r$ inside the stability radius, has a long flat portion where the mean dynamics lead the beliefs away from the stable equilibrium, and then starts increasing again when the mean dynamics point back toward the equilibrium.
For less volatile shocks with smaller $\sigma^2$, the radius of stability is smaller, so it is much easier to escape the equilibrium (the cost $S$ is much lower) and the inflection point is closer to the beliefs that support the collusive price.

As we discussed in Section \ref{sec: discussion}, the collusive outcomes arise in the absence of any opportunity for collusion. The collusive episodes are caused by  correlated realizations of independent shock processes, which are unlikely events with limit probability zero. Nonetheless, away from the limit, these collusive outcomes are a recurrent feature of the model.

\section{Conclusion}

We have  developed a model of algorithmic pricing which shuts down every channel for explicit or implicit collusion, and yet still generates collusive outcomes. Our model uses common, widely-used ingredients -- model averaging and least squares estimation -- in a simple, textbook setting -- linear Bertrand duopoly with myopic profit maximization.  In this environment, the algorithmic learning dynamics are stable, and we have asymptotic convergence to a competitive outcome. But even in such a common, stable, and well-behaved environment, recurrent collusive outcomes arise.

We show that the market experiences recurrent episodes where both firms set prices at collusive levels, followed by a period of price-cutting and reversion to competitive prices. We analytically characterize the dynamics of the model, using large deviation theory to explain the recurrent episodes of collusive outcomes. Our results suggest that collusive outcomes may be a recurrent feature of environments with complementarities, arising endogenously from interactions and adaptation, while not relying on explicit communication, tacit collusion, or punishment strategies supporting collusive outcomes. Our results thus provide a challenge for lawmakers and regulators seeking policy reforms to enhance competition in markets with algorithmic pricing. Even if outcomes appear highly collusive, an outsider cannot conclude that a firm learns to collude via an algorithm.


\bibliographystyle{chicago}
\bibliography{adaboost}


\appendix

\section{Construction of Self-Confirming Equilibria}\label{app: SCE}
As an example, consider a $({\mathcal M}^1,{\mathcal M}^1)$ pair of specifications as in (\ref{eq: perceived law 1}). The equilibrium conditions are then:
\begin{eqnarray}
  p_2 & = & \alpha_{0,1}+\alpha_{1,1}p_1 \label{eq: first 2} \\
  p_1 & = & \alpha_{0,2}+\alpha_{1,2}p_2 \label{eq: second 2} \\
  p_1 & = & \frac{A+C\alpha_{0,1}}{2(B-C\alpha_{1,1})} \label{eq: third 2} \\
  p_2 & = & \frac{A+C\alpha_{0,2}}{2(B-C\alpha_{1,2})} \label{eq: fourth 2}
\end{eqnarray}
Here \eqref{eq: first 2} and \eqref{eq: second 2} imply the perfect foresight conditions for firms 1 and 2, respectively.   \eqref{eq: third 2} and \eqref{eq: fourth 2} imply that the price of firms 1 and 2 must be best responses.
We now construct two reference self-confirming equilibria.
First, suppose that $\alpha_{1,1}=\alpha_{1,2}=0$. Then the equilibrium conditions reduce to:
\begin{eqnarray*}
  p_2 & = & \alpha_{0,1}, \ \   p_1  =  \alpha_{0,2} \\
  p_1 & = & \frac{A+C \alpha_{0,1}}{2B} , \ \
  p_2  =  \frac{A+C \alpha_{0,2}}{2B}.
\end{eqnarray*}
Which imply that the self-confirming equilibrium price is precisely the Nash equilibrium price
\[
p_1=p_2=p^N=\frac{A}{2B-C}.
\]

Alternatively suppose that
\[
\alpha_{0,1}=\alpha_{0,2}=0 \ \text{and} \ \alpha_{1,1}=\alpha_{1,2}=1.
\]
Then from (\ref{eq: third 2}) and (\ref{eq: fourth 2}) we get
\[
p_1=p_2=p^C
\]
and
\[
\Pi_1=\Pi_2=\Pi^C.
\]
Following the same logic, we can construct multiple self-confirming equilibria for $({\mathcal M}^0,{\mathcal M}^1)$ and $({\mathcal M}^1,{\mathcal M}^0)$.

\section{Proofs and Background for the Results in Section \ref{sec: analysis}}\label{app: analysis}
\subsection{Preliminaries}
It is necessary to define the basic notation for the analysis. We describe the notational convention of $\alpha^1_{i,t}=(\alpha^1_{0i,t},\alpha^1_{1i,t})$ as an example, which will be used for other endogenous variables such as $\alpha^0_{i,t}$.

Given a discrete time stochastic process $\{\alpha^1_{i,t}\}_{t=1}^\infty$, we construct a continuous time process by linearly interpolating the samples in $t$ and $t-1$ periods.
Define $\alpha^1_i(\tau)$ as the continuous time process obtained by linear interpolation. Since $\sum_{t=1}^T\lambda_t\rightarrow\infty$ as $T\rightarrow\infty$, $\forall K\in\{1,2,\ldots\}$, there is a unique $t_K$ that is the smallest $T$ where $\sum_{t=1}^T\lambda_t\ge K$.   For a real number $\tau>0$, define $m(K+\tau)$ as the smallest number $T$ satisfying $\sum_{t=t_K}^T\lambda_t\ge\tau$.   Given a continuous time process $\alpha^1_i(\tau)$, define the left shifted continuous time process as
\[
\alpha^{1,K}_i(\tau)=\alpha^1_i (K+\tau)
\]
which is essentially a right tail of the sample path of $\alpha^1_i(\tau)$ starting from
time $K$.   We will use the same convention for the other variables by adding superscript $^K$ and the real number $\tau>0$ to represent the left shift continuous time process.

\subsubsection{Exploration Probability}

$\pi_{i,t}$ is the probability that duopolist $i$ assigns to specification $\mathcal{M}^1$.  We assume that the duopolist is not committed to any particular specification.

\begin{assumption} \label{as: lower bound}
  $\exists \piubar,\pilbar\in (0,1)$ such that
  \[
0 < \pilbar \le \pi_{i,t} \le \piubar <1.
  \]
\end{assumption}

The following is the proof of Lemma \ref{lemma-infinite}.
\begin{proof}
  It suffices to show that the $H$ functional of the sample average
\[
  H({\overline x},\tau)=\lim_{\tau\rightarrow 0}
  \lim_{K\rightarrow\infty}\frac{1}{m(K+\tau)-t_K}\log\Expect \left[ e^{\alpha \sum_{t=t_K}^{m(K+\tau)}\xi_t} \ \mid \ {\overline x}_{t_K}=x\right]
\]
is 0.     We can write
  \[
{\overline x}_t={\overline x}_{t-1}+\frac{1}{t}\left( x_t-{\overline x}_{t-1} \right).
\]
Define $\xi_t=x_t-{\overline x}_{t-1}$ which is a bounded martingale difference.   Since $x_t\rightarrow 0$ with probability 1, $\forall\nu>0$, exists $T(\nu)$ such that $\forall t\ge T(\nu)$, $\abs{\xi_t}\le\nu$ with probability $1-\nu$.   Note that $\forall\alpha$, $\forall t_K\ge T(\nu)$,
\[
  e^{-\nu (m(K+\tau)-t_K)} (1-\nu) +e^{-M (m(K+\tau)-t_K)}\nu \le \Expect e^{\alpha\sum_{t=t_K}^{m(K+\tau)}
    \xi_t}  \le  e^{\nu (m(K+\tau)-t_K)} (1-\nu) +e^{M (m(K+\tau)-t_K)}\nu
\]
where $M=\sup_{t\ge 1}\abs{x_t} <\infty$ since $x_t$ is bounded.   We can choose $M'$ such that for any sufficiently small $\nu$,
\[
  e^{\nu (m(K+\tau)-t_K)} (1-\nu) +e^{M (m(K+\tau)-t_K)}\nu < e^{M' \nu( m(K+\tau)-t_K)}
\]
and therefore,
\[
-M' \nu\le \frac{1}{m(K+\tau)-t_K}\log\Expect e^{\alpha\sum_{t=t_K}^{m(K+\tau)}
\xi_t} \le M'\nu.
\]
Since $\nu$ is arbitrary, $H({\overline x},\tau)=0$ $\forall\tau>0$.
\end{proof}

Consider the left shift process $\alpha^{1,K}_i(\tau)$ initialized at  $\alpha^{1,K}_i(0)=(p^N,0)$ which is a stable stationary point of the recursive algorithm of $\alpha^1_{i,t}$.    Consider an open ball around the stable stationary point $(p^N,0)$ with radius $\mu$, ${\mathcal N}_\mu(p^N,0)$.   We are interested in the probability that $\alpha^{1,K}_i(\tau)$ escape from ${\mathcal N}_\mu(p^N,0)$, which is a rare event.
\citet*{DupuisandKushner89} proved that there exists a non-negative measurable function
$S(x,\tau,\phi)$ such that for any closed subset $F$ and open subset $G$ of ${\mathcal N}_\mu(p^N,0)$ such that
\begin{eqnarray*}
&& \limsup_{K\rightarrow\infty}a_{t_K}
   \log \Prob\left( \exists\tau>0, \ \alpha^{1,K}_i(\tau)\not\in{\mathcal N}_\mu(p^N,0) \ \mid \  \alpha^{1,K}_i(0)=(p^N,0) \right)  \nonumber \\
& \le &
        -\inf_{\substack{\phi(0)=(p^N,0), \\ \exists\tau',\phi(\tau')\not\in {\mathcal  N}_\mu(p^N,0)}}S(\alpha^{1,K}_i(0),\tau,\phi)\equiv S^*(\alpha^{1,K}_i(0),{\mathcal N}_\mu(p^Nj,0)).
  \label{eq: DK89}
\end{eqnarray*}
 We call $S^*(\cdot)$ the rate function.

\subsection{Properties of $\Pibar^0_{i,t}$}\label{app: pi0}

We know that $\alpha^0_{i,t}$ and $\alpha^1_{i,t}=(\alpha^1_{0i,t},\alpha^1_{1i,t})$ satisfy the large deviation properties (\citet*{DupuisandKushner89}).   Let $S^{\alpha,0}_i({\mathcal N}_\mu(0),\sigma^2)$ and $S^{\alpha,1}_i({\mathcal N}_\mu(0,0),\sigma^2)$ be the rate function around $\mu$ neighborhood of the stable stationary point of $\alpha^0_{i,t}$ and $\alpha^1_{i,t}$ in case $\sigma^2_{i,t}=\sigma^2$ $\forall t\ge 1$, respectively.   Recall that $b^0_{i,t}$ and $b^1_{i,t}$ are the continuous transformation of $\alpha^0_{i,t}$ and $\alpha^1_{i,t}$.    By the contraction principle of the large deviation, the large deviation rate functions of $b^0_{i,t}$ and $b^1_{i,t}$ are the continuous transformation of the large deviation rate functions of $\alpha^0_{i,t}$ and $\alpha^1_{i,t}$.   That is,
there exist continuous functions $g^0$ and $g^1$ such that
$\forall\mu>0$,
\begin{eqnarray*}
&& -\lim_{t\rightarrow\infty}\frac{1}{t}\log\Expect \Prob\left( \left| b^0_{i,t}-p^N\right|>\mu \right) \le g^0 (S^{\alpha,0}_i({\mathcal N}_\mu(0),\sigma^2)) \\
&& -\lim_{t\rightarrow\infty}\frac{1}{t}\log\Expect \Prob\left( \left| b^1_{i,t}-p^N\right|>\mu \right) \le g^1 (S^{\alpha,1}_i({\mathcal N}_\mu(0,0),\sigma^2))
\end{eqnarray*}
which imply
\begin{equation}
  \lim_{t\rightarrow\infty}b^0_{i,t}=p^N \ \ \text{and} \ \
  \lim_{t\rightarrow\infty}b^1_{i,t}=p^N   \qquad\forall i\in\{1,2\}
  \label{eq: bconverge}
\end{equation}
with probability 1.   Recall that
\[
p_{j,t}=(1-\pi_{j,t})b^0_{j,t}+\pi_{j,t}b^1_{j,t}+\epsilon_{j,t}
\]
with $\pi_{j,t}\ge \pilbar>0$ and
\begin{eqnarray}
  \alpha^0_{i,t}
  &=&\pbar_{j,t}=\frac{1}{t}\sum_{k=0}^{t-1}p_{j,k} \\
  &=&\sum_{k=0}^{t-1}(1-\pi_{j,k})b^0_{j,k}+\pi_{j,k}b^1_{j,k}+\frac{1}{t}\sum_{k=0}^{t-1} \epsilon_{j,k}.
\end{eqnarray}
Thanks to \eqref{eq: bconverge},
\[
\lim_{t\rightarrow \infty} (1-\pi_{j,t})b^0_{j,t}+\pi_{j,t}b^1_{j,t}=p^N
\]
with probability 1.   Therefore, $\forall\mu>0$,
\[
\lim_{t\rightarrow\infty}\frac{1}{t}\log\Prob\left(\left|
\frac{1}{t}\sum_{k=0}^{t-1} (1-\pi_{j,k})b^0_{j,k}+\pi_{j,k}b^1_{j,k}-p^N
\right| >\mu \right) =-\infty.
\]
Since
\[
  \frac{1}{t}\sum_{k=0}^{t-1} (1-\pi_{j,k})b^0_{j,k}+\pi_{j,k}b^1_{j,k}
\]
is exponentially equivalent to $p^N$ (\citet*{DemboandZeitouni98}), we can treat
\[
\alpha^0_{i,t}=\pbar_{j,t} =p^N +\frac{1}{t}\sum_{k=0}^{t-1}\epsilon_{j,k}  \qquad i\ne j\in\{1,2\}.
\]
Hence,
\[
b^0_{i,t}= p^N +\frac{C}{2B}\frac{1}{t}\sum_{k=0}^{t-1}\epsilon_{j,k}.
\]
Since $\epsilon_{j,t}$ satisfies the large deviation property, $\forall\mu>0$,  $\exists
S^{\epsilon}_i(\mu)$ such that
\[
  -\lim_{t\rightarrow\infty}\frac{1}{t}\log\Prob\left(\left|
b^0_{i,t}-p^N
\right| >\mu \right) \le S^{\epsilon}_i(\mu).
\]
Thus,
\[
\Pi^0_{i,t}=\max_p p(A-Bp+C\alpha^0_{i,t})=\Pi^N +D {\overline\epsilon}_{i,t}+D'({\overline\epsilon}_{i,t})^2
\]
for some constant $D, D'$.   Thus, if ${\lambda_t}=\frac{1}{t}$,
\[
\lim_{t\rightarrow \infty}\Pi^0_{i,t}=\Pi^N
\]
with probability 1.

With those preceding calculations, the following is the proof of Lemma \ref{lemma-pi0} in the text.
\begin{proof}
Since
\[
  \Pibar^0_{i,t}=\Pibar^0_{i,t-1}+{\lambda_t} \left( \Pi^0_{i,t}-\Pibar^0_{i,t-1}\right),
\]
$\Pibar^0_{i,t}$ is the average of $\Pi^0_{i,t}$ which converges to $\Pi^N$ with probability 1.  Thus, $\Pibar^0_{i,t}$ is exponentially equivalent to $\Pi^N$.
\end{proof}

\subsection{Properties of $\Pibar^1_{i,t}$}\label{app: pi1}

To examine the sample path properties of $\Pi^1_{i,t}$, we need to examine the properties of $\alpha^1_{i,t}=(\alpha^1_{0i,t},\alpha^1_{1i,t})$.  It is more convenient to replace intercept $\alpha^1_{0i,t}$ by $(\pbar_{1,t},\pbar_{2,t})$ for each $i$ to write the updating rule as
\[
p_{j,t}-\pbar_{j,t}=\alpha^1_{i,t}(p_{i,t}-\pbar_{i,t}) \qquad\forall i\ne j,
\]
from which we can recover the intercept according to
\[
\alpha^1_{0i,t}=\pbar_{j,t}-\alpha^1_{i,t}\pbar_{i,t} \qquad\forall i\ne j.
\]
Since
\[
  p_{i,t}=(1-\pi_{i,t-1}) \frac{A+C\alpha^0_{0,t-1}}{2B}+\pi_{i,t-1}
  \frac{A+C\alpha^0_{0,t-1}}{2(B-C\alpha^1_{1i,t-1})} +\epsilon_{i,t},
\]
Recall that
\[
\alpha^0_{0i,t}=\pbar_{j,t}.
\]
Instead of
\[
(\pi_{1,t},\pi_{2,t};\alpha^0_{01,t},\alpha^0_{02,t};\pbar_{1,t},\pbar_{2,t}; \alpha^1_{11,t},\alpha^1_{12,t}),
\]
we can investigate
\[
(\pi_{1,t},\pi_{2,t};\pbar_{1,t},\pbar_{2,t}; \alpha^1_{11,t},\alpha^1_{12,t}).
\]
Moreover, we know we can treat
\[
  \pbar_{i,t}=p^N + \frac{1}{t}\sum\epsilon_{i,t}.
\]
Thus, our analysis focuses on the evolution of the remaining four state variables
\[
(\pi_{1,t},\pi_{2,t}; \alpha^1_{11,t},\alpha^1_{12,t})
\]
with the initial value set at the stable stationary point.

Note
\[
\alpha^1_{i,t}=\frac{\sum (p^1_{ik}-\pbar_{ik})(p_{jk}-\pbar_{jk})}{\sum(p^1_{ik}-\pbar_{ik})^2}
\]
where
\[
  p_{jk}=(1-\pi_{jk})(b^0_{jk}-\pbar_{jk})+\pi_{jk}(b^1_{jk}-\pbar_{jk})+\epsilon_{jk}
\]
and
\[
p^1_{ik}=b^1_{ik}+\epsilon_{ik}.
\]
Since $b^0_{jk}=p^N$ and $\pbar_{jk}=p^N$,
\[
  p_{jk}=\pi_{jk}(b^1_{jk}-p^N)+\epsilon_{jk}.
\]
Around the Nash equilibrium outcome, $(\alpha^1_{0i,t},\alpha^1_{1i,t})=(p^N,0)$ and therefore, $b^1_{i,n}(p^N,0)=p^N$.   Hence,
\[
  \alpha^1_{1i,t}=\frac{\sum_k \epsilon_{1k}\sum \epsilon_{2k} }{\sum_k \epsilon^2_{ik}}
  \rightarrow 0,
\]
since $\epsilon_{1k}$ and $\epsilon_{2k}$ are mutually independent.   Hence,
\[
{\dot\alpha}^1_{1i}=-\alpha^1_{1i}
\]
confirming the stability of $\alpha^1_{1i}=0$.   Thus, an event like
\[
  \exists T\ge 1, \
(\alpha^1_{11,T},\alpha^1_{12,T})\in {\mathcal W}(\alphalbar,\betalbar)
\]
where  ${\mathcal W}(\alphalbar,\betalbar)$ is a convex hull of $\{(\alphalbar,\alphalbar), (\alphalbar,\betalbar), (\betalbar,\alphalbar)\}$ for $\betalbar > \alphalbar>0$.

\begin{lemma} \label{lm: threshold}
  $\exists\betalbar>\alphalbar>0$ such that if the initial condition
  $(\alpha^1_{11,1},\alpha^1_{12,1})$ is in ${\mathcal W}(\alphalbar,\betalbar)$, then $({\dot\alpha}^1_{11},{\dot\alpha}^1_{12})>0$.   Moreover, we can choose $\alphalbar$ and $\betalbar$ satisfying
  \[
    0 < \lim_{\sigma^2\rightarrow 0}\frac{\alphalbar}{\sigma^2} < \infty \ \ \text{and} \ \
    0 < \lim_{\sigma^2\rightarrow 0}\frac{\betalbar}{\sigma} < \infty.
  \]
\end{lemma}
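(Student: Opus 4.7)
The plan is to work in the ``slope-only'' concentrated formulation introduced earlier in Appendix \ref{app: pi1}, derive a tractable mean drift for the slope coefficients, locate the region in the slope plane where both drifts are strictly positive, and then inscribe the triangle $\mathcal{W}(\alphalbar,\betalbar)$ in that region with the prescribed $\sigma$-scaling.

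I would begin by formalizing the mean dynamics. Using the substitution $\alpha^1_{0i}=\bar p_j-\alpha^1_{1i}\bar p_i$ together with the exponential equivalence $\bar p_{i,t}\simeq p^N$ that follows from Lemma \ref{lemma-infinite}, the demeaned observations become $p_{i,k}-\bar p_{i,k}\simeq \tilde b_i+\epsilon_{i,k}$, where $\tilde b_i := b(\alpha^1_i)-p^N$. Because $\bar p_i$ is anchored at $p^N$ while the conditional mean $\Expect[p_i]=b^1_i$ differs from $p^N$ whenever $\alpha^1_{1i}\ne 0$, the ``sample'' second moments pick up the misalignment: conditional on beliefs, $\Expect[(p_i-\bar p_i)(p_j-\bar p_j)]=\tilde b_i\tilde b_j$ and $\Expect[(p_i-\bar p_i)^2]=\tilde b_i^2+\sigma^2$. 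The corresponding stochastic-approximation drift of the slope is therefore
\[
\dot \alpha^1_{1i} \propto \tilde b_i\tilde b_j - \alpha^1_{1i}\bigl(\tilde b_i^2+\sigma^2\bigr),
\]
with a strictly positive proportionality factor inherited from $R_i^{-1}$. Plugging $\alpha^1_{0i}=p^N(1-\alpha^1_{1i})$ into $b(\alpha^1_i)$ and using $A=p^N(2B-C)$ gives the closed form $\tilde b_i=Cp^N\alpha^1_{1i}/[2(B-C\alpha^1_{1i})]$, which for small $\alpha^1_{1i}>0$ linearizes as $\tilde b_i\approx k\alpha^1_{1i}$ with $k:=Cp^N/(2B)>0$.

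Substituting and canceling $\alpha^1_{1i}>0$, the sign condition $\dot\alpha^1_{1i}>0$ reduces at leading order to $\alpha^1_{1j}>(\alpha^1_{1i})^2+\sigma^2/k^2$, and symmetrically for the $j$-coordinate. The intersection of these two ``above-the-parabola'' half-spaces is a lens in the first quadrant surrounding the diagonal. To fit $\mathcal{W}(\alphalbar,\betalbar)$ inside, I note that $y-x^2$ is concave in $x$, so its minimum over the convex triangle is attained at a vertex; evaluating at $(\alphalbar,\alphalbar)$, $(\alphalbar,\betalbar)$, and $(\betalbar,\alphalbar)$ shows the binding requirement is $\alphalbar>\betalbar^2+\sigma^2/k^2$. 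Choosing
\[
\alphalbar=c_1\sigma^2,\qquad \betalbar=c_2\sigma,\qquad c_1>c_2^2+1/k^2,\ \ c_2>0,
\]
makes the binding inequality $c_1\sigma^2>c_2^2\sigma^2+\sigma^2/k^2$ hold for every $\sigma>0$, while the secondary requirement $\betalbar>\alphalbar^2+\sigma^2/k^2$ becomes $c_2\sigma>c_1^2\sigma^4+\sigma^2/k^2$, which is immediate for all small enough $\sigma$; finally, the ratios $\alphalbar/\sigma^2\to c_1\in(0,\infty)$ and $\betalbar/\sigma\to c_2\in(0,\infty)$ are built in, delivering the scaling in the lemma.

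The main obstacle is making the reduction to the concentrated slope drift rigorous, since the full RLS system couples the slope, the intercept, and the second-moment matrix $R_i$. Two subtleties need care: (i) the sample average $\bar p_{i,t}$ approaches $p^N$ only at rate $1/\sqrt t$, so its lag behind $\Expect[p_i]=b^1_i$ during an excursion must be controlled on the timescale over which $\alpha^1_{1i}$ moves---a standard two-timescale argument combined with Lemma \ref{lemma-infinite} suffices; and (ii) one must verify that multiplying $g_i$ by the positive-definite $R_i^{-1}$ preserves the sign of the slope component. The latter follows from the fact that the concentrated-regression change of coordinates effectively diagonalizes the RLS Jacobian's slope/intercept block along the manifold $\alpha^1_{0i}=p^N(1-\alpha^1_{1i})$, so the slope component of $R_i^{-1}g_i$ has the same sign as the scalar drift computed above; writing this projection out carefully closes the argument.
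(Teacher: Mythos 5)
Your approach is essentially the same as the paper's: concentrate the RLS dynamics on the slope pair $(\alpha^1_{11},\alpha^1_{12})$ with $\bar p_i\simeq p^N$, compute the mean drift $\dot\alpha^1_{1i}$, locate the region where both drifts are positive, and inscribe $\mathcal{W}(\alphalbar,\betalbar)$ with $\alphalbar\sim\sigma^2$, $\betalbar\sim\sigma$. Your concavity/extreme-point argument for reducing the triangle check to its vertices is in fact cleaner and more complete than the paper's own treatment, which only verifies the edge $\alpha^1_{11}=\alphalbar$ and relies on the monotonicity of $\dot\alpha^1_{11}-\dot\alpha^1_{12}$.

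One detail you dropped: the paper's drift carries a factor of $\pi_j$ in the covariance term, because the opponent's demeaned price is $p_j-\bar p_j=\pi_j(b^1_j-p^N)+\epsilon_j$ rather than $b^1_j+\epsilon_j$ as your derivation implicitly assumes. This is where Assumption \ref{as: lower bound} ($\pilbar\le\pi_j\le\piubar$) enters; it only rescales your $k$ and the constants $c_1,c_2$, and does not affect the $\alphalbar\sim\sigma^2$, $\betalbar\sim\sigma$ scalings, but the lemma's statement of the limits depends on this factor and you should carry it through. Your leading-order linearization $\tilde b_i\approx k\alpha^1_{1i}$ is justified at the binding vertex $(\betalbar,\alphalbar)=(c_2\sigma,c_1\sigma^2)$, since the relative error there is $O(\sigma)$ and the constraint $c_1>c_2^2+1/k^2$ is strict; the paper instead solves the exact quadratic $C^2(p^N)^2(\pi_j-\alpha)\alpha-4\sigma^2(B-C\alpha)^2=0$ for $\alphalbar$ and gets the same scaling without the approximation step.
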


\begin{proof}  Suppose that $(\alpha^1_{11,1},\alpha^1_{12,1})=(\alphalbar,\alphalbar)$ for some $\alphalbar$ whose value will be determined shortly.  Recall $\pbar_{1,t}=\pbar_{2,t}=p^N$.    Since $\pbar_{i,t}$ is an average of a function of
  $(\alpha^1_{11,1},\alpha^1_{12,1})$, $\pbar_{i,t}$ moves a slower time scale than
  $(\alpha^1_{11,1},\alpha^1_{12,1})$, and can be treated as a constant equal to $p^N$ while investigating the long run dynamics of $(\alpha^1_{11,1},\alpha^1_{12,1})$.

Note that with a small but positive, probability, $(\alpha^1_{11,t},\alpha^1_{12,t})\in {\mathcal W}(\alphalbar,\betalbar)$ can occur.   Consider an ``unusual'' sequence of $(\epsilon_{1,t},\epsilon_{2,t})$ where the two shocks are almost perfectly positively correlated.   While rare, such an event has a vanishingly small positive probability and leads to $\alpha^1_{11,t}$ and $\alpha^1_{12,t}$ increasing simultaneously at the same rate to reach a small neighborhood of $(\alphalbar,\alphalbar)$ while keeping $\pbar_{1,t}$ and $\pbar_{2,t}$ close to $p^N$.

Since $(\alpha^1_{11,t},\alpha^1_{12,t})=(\alphalbar,\alphalbar)$,
\[
b^1_{1,t} -\pbar_{1,t}=b^1_{2,t} -\pbar_{2,t}.
\]
Therefore,
\[
  \Expect (b^1_1-p^N)(b^1_2-p^N) =\Expect (b^1_1-p^N)^2.
\]
Note
\[
p_{j,t}=(1-\pi_{j,t})(b^0_{j,t}-\pbar_{j,t})+\pi_{j,t}(b^1_{j,t}-\pbar_{j,t})+\epsilon_{j,t}.
\]
Since $\pbar_{j,t}=p^N$,
\[
p_{j,t}=\pi_{j,t}(b^1_{j,t}-p^N)+\epsilon_{j,t}=\pi_{j,t}(b^1_{i,t}-p^N)+\epsilon_{j,t}.
\]
Since $\epsilon_{1,t}$ and $\epsilon_{2,t}$ are orthogonal,
\[
  \Expect (p_{1,t}-\pbar_{1,t})(p_{2,t}-\pbar_{2,t}) = \pi_j\Expect (b^1_i-p^N)^2.
\]
Similarly,
\[
  \Expect (p^1_{1,t}-\pbar_{1,t})^2 =  \Expect (b^1_1-p^N)^2+\sigma^2.
\]
Thus,
\[
  {\dot\alpha}^1_i=
  \frac{\pi_j (b^1_i-p^N)^2}{(b^1_i-p^N)^2+\sigma^2} -\alpha^1_i \qquad\forall i\ne j\in\{1,2\}
\]
which is positive if and only if
\begin{eqnarray*}
  0
  & < & (\pi_j-\alpha^1_{1i}) (C\alpha^1_{1i} p^N)^2 -4\alpha^1_{1i} \sigma^2 (B-C\alpha^1_{1i})^2 \\
  & = &\alpha^1_{1i}
  \left[C^2 (p^N)^2 (\pi_j-\alpha^1_{1i})\alpha^1_{1i} -4\sigma^2(B-C\alpha^1_{1i})^2 \right].
\end{eqnarray*}
Since $\alpha^1_{1i}>0$, the equation is positive if and only if the quadratic term inside of the bracket is positive.   Let $\alphalbar$ and $\alphaubar$ be the two solutions of the quadratic equation.
We can easily show that $\exists \sigmaubar^2$ such that $\forall\sigma^2\in (0,\sigmaubar^2)$, the two roots of the quadratic equations are real-valued, and
\[
\alphalbar\alphaubar =\frac{4\sigma^2 B^2}{C^2\left( (p^N)^2\pi_j +4\sigma^2 \right)}
\]
and
\begin{equation}
\lim_{\sigma^2\rightarrow 0}\alphaubar =\pi_j \ge \pilbar>0
\label{eq: alphaubar bound}
\end{equation}
by Assumption \ref{as: lower bound}.   Thus,
\begin{equation}
  \lim_{\sigma^2\rightarrow 0} \frac{\alphalbar}{\sigma^2}
  =\frac{4B^2}{\pi_j C^2\left( (p^N)^2\pi_j+4\sigma^2\right)} >0.
  \label{eq: alphalbar linear rate}
\end{equation}

Given the initial condition $(\alpha^1_{11},\alpha^1_{12})=(\alphalbar,\alphalbar)$, the associated ODE of $(\alpha^1_{11},\alpha^1_{12})$ is
\begin{eqnarray*}
  {\dot\alpha}^1_{11} &=&\frac{(b^1_1-p^N) (b^1_2-p^N)}{ (b^1_1-p^N)^2+\sigma^2} -\alpha^1_{11} \\
  {\dot\alpha}^1_{12} &=&\frac{(b^1_1-p^N) (b^1_2-p^N)}{(b^1_1-p^N)^2+\sigma^2} -\alpha^1_{12}.
\end{eqnarray*}
Since
\[
  b^1_i-p^N
  =\frac{A+C p^N(1-\alpha^1_{1i})}{2(B-C\alpha^1_{1i})}-p^N,
\]
\[
  {\dot\alpha}^1_{1i}=\frac{\alpha^1_{11}\alpha^1_{12}C^2(p^N)^2}{
    (\alpha^1_{1i})^2C^2(p^N)^2+\sigma^2}-\alpha^1_{1i} \qquad\forall i.
\]
We are interested in the dynamics of $(\alpha^1_{11,t},\alpha^1_{12,t})$ from the neighborhood of initial condition $(\alphalbar,\alphalbar)$ where ${\dot\alpha}^1_{1i}>0$ $\forall i$.   Since
\begin{eqnarray*}
  {\dot\alpha}^1_{11}-{\dot\alpha}^1_{12}
&  = & \frac{\alpha^1_{11}\alpha^1_{12}C^2(p^N)^2}{
    (\alpha^1_{11})^2C^2(p^N)^2+\sigma^2} -\frac{\alpha^1_{11}\alpha^1_{12}C^2(p^N)^2}{
    (\alpha^1_{1i})^2C^2(p^N)^2+\sigma^2}-\alpha^1_{1i}+(\alpha^1_{12}-\alpha^1_{11})  \\
& = & \frac{ \alpha^1_{11}\alpha^1_{12}C^2(p^N)^2}{    ((\alpha^1_{11})^2C^2(p^N)^2+\sigma^2)((\alpha^1_{12})^2C^2(p^N)^2+\sigma^2)}
        (\alpha^1_{12}-\alpha^1_{11})(\alpha^1_{12}+\alpha^1_{11}) +(\alpha^1_{12}-\alpha^1_{11}) \\
& = & (\alpha^1_{12}-\alpha^1_{11})
        \left[
\frac{ \alpha^1_{11}\alpha^1_{12}C^2(p^N)^2}{    ((\alpha^1_{11})^2C^2(p^N)^2+\sigma^2)((\alpha^1_{12})^2C^2(p^N)^2+\sigma^2)}
(\alpha^1_{12}+\alpha^1_{11}) + 1 \right],
\end{eqnarray*}
${\dot\alpha}^1_{11}-{\dot\alpha}^1_{12}>0$ if and only if
$\alpha^1_{12}-\alpha^1_{11}>0$.

Fix $\alpha^1_{11}=\alphalbar$.   We need to find out the largest value of $\alpha^1_{12}>\alphalbar$ satisfying ${\dot\alpha}^1_{12}>0$, since we know $\alpha^1_{12}>\alpha^1_{11}=\alphalbar$ implies
${\dot\alpha}^1_{11}>{\dot\alpha}^1_{12}>0$.
A simple calculation shows
\[
  \alpha^1_{12} \le \frac{\sqrt{ \alphalbar C^2 (p^N)^2 -\sigma^2}}{C p^N}
\]
and
\[
  {\dot\alpha}^1_{12}>0.
\]
  By the same token,
\[
  \alpha^1_{11} \le \frac{\sqrt{ \alphalbar C^2 (p^N)^2 -\sigma^2}}{C p^N}=\bigO(\sigma)
\]
if $\alpha^1_{12}=\alphalbar$, and $\alpha^1_{11}\ge \alphalbar$.   Define
\begin{equation}
  {\underline\beta}=\frac{\sqrt{ \alphalbar C^2 (p^N)^2 -\sigma^2}}{C p^N}
  \sim \bigO(\sigma),
\label{eq: betalbar}
\end{equation}
since $\alphalbar=\bigO(\sigma^2)$.
\end{proof}

For fixed $\sigma^2$, choose $(\alphalbar,\betalbar)$ according to Lemma \ref{lm: threshold}.   We are interested in the probability of an event
\[
\exists T\ge 1, \  (\alpha^1_{11,t},\alpha^1_{12,t})\in {\mathcal W}(\alphalbar,\betalbar)
\]
from the neighborhood of the Nash equilibrium outcome $(\alpha^1_{11,t},\alpha^1_{12,t})=(0,0)$  which is the stable stationary outcome.
Since $(0,0)\not\in {\mathcal W}(\alphalbar,\betalbar)$, and $(\epsilon_{1,t},\epsilon_{2,t})$ have a good rate function, there is a real-valued function $I^\alpha: {\mathcal W}(\alphalbar,\betalbar)\rightarrow {\mathbb R}$ such that
for any closed subset $F$ and any open subset $G$ of ${\mathcal W}(\alphalbar,\betalbar)$
\begin{eqnarray*}
&&  -S^*(\alpha^1,G,\sigma^2)\equiv  -\inf_{x\in G}I^\alpha(x) \\
  &\le & \lim_{K\rightarrow 0}\frac{1}{m(t_K+\tau)-t_K}\log\Prob \left( \exists \tau>0, \  \alpha^{1,K}_{1}(\tau)=(\alpha^{1,K}_{11}(\tau),\alpha^{1,K}_{12}(\tau))\in {\mathcal W}(\alphalbar,\betalbar) \ | \ \alpha^1_{1,1}=(0,0)\right) \\
  &\le & -\inf_{x\in F}I^\alpha(x)
\end{eqnarray*}

We now prove Lemma \ref{lemma-pi1}.

\begin{proof}
Let $t_K$ is the time when $(\alpha^1_{1,t_K},\alpha^1_{2,t_K})=(0,0)$.    We are interested in the evolution of $\alpha^1_{i,t}$ over the (fictitious) time interval $\tau$ when
\[
\alpha^1_{i,t}= \frac{\pi_j\sum_{t=t_K}^{m(t_K+\tau)}
    \epsilon_{1t}\epsilon_{2t}}{\sum_{t=t_K}^{m(t_K+\tau)}\epsilon^2_{it}}\ge\alphalbar
\]
occurs.    Since $b^1_{i,t}=p^N$, and $p_{i,t}=b^1_{i,t}+\epsilon_{i,t}$,
\[
  \alpha^{1}_{i,t} =
  \frac{\pi_j\sum_{t=t_K}^{m(t_K+\tau)}\epsilon_{1t}\epsilon_{2t}}{
    \sum_{t=t_K}^{m(t_K+\tau)}}.
\]
Ignoring the difference in period $m(t_K+\tau)$, we write
\[
  \frac{\pi_j\sum_{t=t_K}^{m(t_K+\tau)}
    \epsilon_{1t}\epsilon_{2t}}{\sum_{t=t_K}^{m(t_K+\tau)}\epsilon^2_{it}}=\alphalbar.
\]
At the same time,
\[
  \alphalbar\le
  \alpha^{1,K}_{1j,t}=\frac{\pi_i\sum_{t=t_K}^{m(t_K+\tau)}\epsilon_{1t}\epsilon_{2t}}{
    \sum_{t=t_K}^{m(t_K+\tau)}\epsilon^2_{jt}} \le\betalbar.
\]
After a tedious calculation, one can show that the event
\[
\alpha^{1,K}_i(\tau)\in  {\mathcal W}(\alphalbar,\betalbar)
\]
is equivalent to
\[
  \left\{ 1\le\frac{\pi_1 \sum_{t=t_K}^{m(t_K+\tau)}\epsilon^2_{1,t}}{\pi_2\sum_{t=t_K}^{m(t_K+\tau)}\epsilon^2_{2,t}}
    \le \frac{\betalbar}{\alphalbar} \right\}\cup
  \left\{ 1\le\frac{\pi_2 \sum_{t=1}^T\epsilon^2_{2,t}}{\pi_1\sum_{t=1}^T\epsilon^2_{1,t}} \le \frac{\betalbar}{\alphalbar} \right\}.
\]
Define
\[
{\mathcal L}_i=\left\{ 1\le\frac{\pi_i \sum_{t=t_K}^{m(t_K+\tau)}\epsilon^2_{1,t}}{\pi_j\sum_{t=t_K}^{m(t_K+\tau)}\epsilon^2_{j,t}}
    \le \frac{\betalbar}{\alphalbar} \right\} \qquad\forall i\ne j\in\{1,2\}.
\]
To simplify notation, let us calculate the probability of ${\mathcal L}_1(\alphalbar,\betalbar)$, where at time $\tau$, $\alpha^{1,K}_{1}(\tau)=\alphalbar$ and
$\alphalbar\le\alpha^{1,K}_{2}(\tau)\le \frac{\betalbar}{\alphalbar}$.   The probability of ${\mathcal L}_2(\alphalbar,\betalbar)$ can be calculated by following the same logic.

Note that the event ${\mathcal L}_1(\alphalbar,\betalbar)$
\[
  1\le\frac{\pi_1 \sum_{t=t_K}^{m(t_K+\tau)}\epsilon^2_{1,t}}{\pi_2\sum_{t=t_K}^{m(t_K+\tau)}\epsilon^2_{2,t}}
  \le \frac{\betalbar}{\alphalbar}
\]
is equivalent to
\begin{eqnarray*}
\frac{1}{m(t_K+\tau)}\sum_{t=t_K}^{m(t_K+\tau)}\left( \pi_1\epsilon^2_{1,t} -\pi_2\epsilon^2_{2,t} \right) & \ge & 0 \\
\frac{1}{m(t_K+\tau)}\sum_{t=t_K}^{m(t_K+\tau)}\left( \pi_1\epsilon^2_{1,t} -\frac{\betalbar}{\alphalbar}\pi_2\epsilon^2_{2,t} \right) & \le & 0
\end{eqnarray*}
which is, in turn, equivalent to
\begin{eqnarray*}
\frac{1}{m(t_K+\tau)}\sum_{t=t_K}^{m(t_K+\tau)}\left( \pi_1\frac{\epsilon^2_{1,t}}{\sigma^2_{t_K}} -\pi_2\frac{\epsilon^2_{2,t}}{\sigma^2_{t_K}} \right) & \ge & 0 \\
\frac{1}{m(t_K+\tau)}\sum_{t=t_K}^{m(t_K+\tau)}\left( \pi_1\frac{\epsilon^2_{1,t}}{\sigma^2_{t_K}} -\frac{\betalbar}{\alphalbar}\pi_2\frac{\epsilon^2_{2,t}}{\sigma^2_{t_K}} \right) & \le & 0.
\end{eqnarray*}
Given $\tau>0$, we want to calculate
\begin{equation}
\lim_{K\rightarrow\infty}\Prob \left( {\mathcal W}_1(\alphalbar,\betalbar)\right).
\label{eq: limit probability}
\end{equation}
Since $\forall \tau>0$,
\[
  \lim_{K\rightarrow\infty}\frac{\sigma^2_t}{\sigma^2_{t_K}}=1
  \qquad\forall t\in \{ t_K,\ldots,m(t_K+\tau)\},
\]
\[
\lim_{K\rightarrow\infty}\Prob \left( {\mathcal W}_1(\alphalbar,\betalbar)\right) \\
 = \lim_{K\rightarrow\infty}\Prob \left(
\begin{matrix}
\frac{1}{m(t_K+\tau)-t_K} \sum_{t=t_K}^{m(t_K+\tau)}\left( \pi_1\frac{\epsilon^2_{1,t}}{\sigma^2_t} -\pi_2\frac{\epsilon^2_{2,t}}{\sigma^2_t} \right)    \ge 0 \\[2pt]
\frac{1}{m(t_K+\tau)-t_K}\sum_{t=t_K}^{m(t_K+\tau)}\left( \pi_1\frac{\epsilon^2_{1,t}}{\sigma^2_t} -\frac{\betalbar}{\alphalbar}\pi_2\frac{\epsilon^2_{2,t}}{\sigma^2_t} \right)  \le  0
\end{matrix}
      \right).
\]
Note that $\forall t\in \{t_K,\ldots, m(t_K+\tau)\}$,
$\frac{\epsilon^2_{i,t}}{\sigma^2_t}$ is i.i.d.
Let $x_{i,t}$ be a random variable drawn from the same distribution as $\frac{\epsilon^1_{i,t_K}}{\sigma^2_{t_K}}$ for $t\in \{t_K,\ldots, m(t_K+\tau)\}$.
clearly, $x_{i,t}$ is i.i.d. with mean 1 and a finite variance.   Then, \eqref{eq: limit probability} is equal to
\[
     \lim_{K\rightarrow\infty}\Prob \left(
\begin{matrix}
\frac{1}{m(t_K+\tau)-t_K}\sum_{t=t_K}^{m(t_K+\tau)}\left( \pi_1x_{1,t}-\pi_2x_{2,t} \right)  \ge 0 \\[2pt]
\frac{1}{m(t_K+\tau)-t_K}\sum_{t=t_K}^{m(t_K+\tau)}\left( \pi_1x_{1,t}-\frac{\betalbar}{\alphalbar}\pi_2x_{2,t} \right)  \le  0
\end{matrix}
      \right).
\]
Since $\betalbar\sim\bigO(\sigma_{t_K})$ and $\alphalbar\sim\bigO(\sigma^2_{t_K})$,
\[
\lim_{K\rightarrow\infty}\frac{\betalbar}{\alphalbar}=\infty.
\]
Define
\[
  {\mathcal L}^*_1(\sigma)=\left\{ (x_1,x_2) \ \mid \ \pi_1 x_1 -\pi_2 x_2 \ge 0 \ \text{and} \
  \pi_1 x_1-\frac{\betalbar}{\alphalbar}\pi_2 x_2 \le 0 \right\}
\]
since $\betalbar$ and $\alphalbar$ is parameterized by $\sigma$.  Recall that $(\Expect x_{1,t},\Expect x_{2,t})=(1,1)$.    If $\pi_1 \ge \pi_2$, then
\[
(\Expect x_{1,t},\Expect x_{2,t})\in {\mathcal L}^*_1(\sigma)
\]
for any sufficiently large $K$.   Moreover, since $\pi_1$ and $\pi_2$ are independent of $\sigma^2$,
\[
  \inf_{\sigma^2>0}\lim_{K\rightarrow\infty}\Prob
  \left( {\mathcal L}_1(\alphalbar,\betalbar)\right) >0,
\]
which implies that
\[
-\lim_{\sigma^2\rightarrow 0}\lim_{\tau\rightarrow 0} \lim_{K\rightarrow\infty}\frac{1}{t_K}\log\Prob \left( {\mathcal W}_1(\alphalbar,\betalbar)\right) =0.
\]
Thus, the large deviation rate function must be 0.

If $\pi_1 < \pi_2$, then $\forall\sigma>0$,
\begin{equation}
(\Expect x_{1,t},\Expect x_{2,t})\not\in {\mathcal L}(\sigma).
\label{eq: not here}
\end{equation}
Recall that $\epsilon_{i,t}$ has a finite good rate function, and $x_{i,t}$ is a continuous function of $\epsilon_{i,t}$.   Thus, the contraction principle implies that $x_{i,t}$ has a finite good rate function, say $S^x ( {\mathcal L}^*_1(\sigma))$.   Moreover, if $\sigma'<\sigma$,
\[
  {\mathcal L}(\sigma)\subset   {\mathcal L}(\sigma').
\]
Hence,
\[
S^x ( {\mathcal L}^*_1(\sigma)) \ge S^x({\mathcal L}^*_1(\sigma')).
\]
Therefore,
\[
\limsup_{\sigma\rightarrow 0} S^x({\mathcal L}^*_1(\sigma)) <\infty.
\]
Since $S^x$ is a good rate function, for any open subset $G\subset {\mathcal L}^*_1(\sigma)$,
\[
  \inf_{\sigma^2>0}\lim_{K\rightarrow\infty}\Prob \left( {\mathcal L}_1(\alphalbar,\betalbar)\right) \ge -\inf_{G\subset {\mathcal L}^*_1(\sigma)} S^x(G) >-\infty.
\]
Combining the two cases, we conclude that
\[
-\lim_{\sigma^2\rightarrow 0}\lim_{\tau\rightarrow 0}\lim_{K\rightarrow\infty}\frac{1}{t_K}\log\Prob \left( {\mathcal L}_1(\alphalbar,\betalbar)\right) <\infty.
\]
\end{proof}

\subsection{Dynamics of $\pi_{i,t}$}\label{app: pi}
The following is the proof of Proposition \ref{pr: first main result} in the text.
\begin{proof}
We need to show that $\forall\tau>0$, $\forall\nu>0$, $\exists K(\nu,\tau)$ such that $\forall K>K(\nu,\tau)$,
\[
  \Prob \left( \frac{1}{m(t_K+\tau)-t_K}\sum_{t=t_K}^{m(t_K+\tau)} \Pi^1_{i,t}  -\Pi^N >0    \right) >1-\nu.
\]
Fix $\{\sigma^2_K\}_{K=1}^\infty$ where $\sigma^2_K\rightarrow 0$.
Fix an arbitrarily small $\tau>0$, $\nu'>0$, a decreasing sequence $\{\mu_K\}_{K=1}^\infty$ of positive numbers with
  \[
\lim_{K\rightarrow\infty}\mu_K=0
  \]
  to examine successively smaller neighborhood ${\mathcal N}_{\mu_K}(0,0)$ of $(\alpha^1_1,\alpha^1_2)=(0,0)$ but
  \[
    (\alphalbar,\alphalbar)\in{\mathcal N}_{\mu_K}(0,0) \qquad\forall K.
  \]
Let $S^*_i(\mu_K,\sigma^2_K)$ be the rate function that $(\alpha^1_{11,t},\alpha_{12,t})$ escapes from ${\mathcal N}_{\mu_K}(0,0)$ starting from the stable stationary point $(0,0)$ of the mean dynamics of $(\alpha^1_{11,t},\alpha_{12,t})$.   By the definition of the rate function, we know that $S^*(\mu_K,\sigma^2_K)$ is an increasing function of $\mu_K$ with
  \[
    \lim_{\mu_K\rightarrow 0}S^*(\mu_K,\sigma^2_K)=0.
  \]
  and for a fixed $\mu_K$,
  \[
    \limsup_{\sigma^2_K\rightarrow 0}S^*(\mu_K,\sigma^2_K) <\infty.
  \]
  Therefore,
  \[
S^*(\mu_K,\sigma^2_K)\downarrow 0
  \]
as $K\rightarrow\infty$.   For each $K\in\{1,2,\ldots\}$, choose $t_K$ according to
  \[
    t_K =\left\lceil -\frac{\log\nu'}{S^*(\mu_k,\sigma^2_K)} \right\rceil.
  \]
so that
  \[
e^{-t_K S^*(\mu_k)}=\nu'
  \]
ignoring the integer problem. As $K\rightarrow\infty$, $t_K\rightarrow\infty$.
Since ${\lambda_t}=1/t$, $m(t_K+\tau)-t_K\rightarrow\infty$ as $K\rightarrow\infty$.

Let us consider random variable $\Pi^1_{i,t}$.   The probability that $(\alpha^1_{11,t},\alpha^1_{12,t})$ exists ${\mathcal N}_{\mu_K}$ between $t_K$ and $m(t_K+\tau)$ converges to $e^{-t_K S^*(\mu_K,\sigma^2_K)}$ (\citet*{DupuisandKushner89}).  Upon exiting ${\mathcal N}_{\mu_K}(0,0)$, the mean dynamics of $(\alpha^1_{11,t},\alpha^1_{12,t})$ is pointing away from $(0,0)$ but to $(1,1)$ so that $\Pi^1_{i,t}$ strictly increases.   Thus, with probability $e^{-t_K S^*(\mu_K,\sigma^2_K)}$, $\Pi^1_{i,t}=\Pi^N+ \zeta(\tau)$ where $\zeta(\tau)>0$ with $\zeta(0)=0$.
Thus, $\forall t\in \{t_K,\ldots,m(t_K+\tau)\}$,
\[
  \Expect \Pi^1_{i,t}-\Pi^N \ge e^{-t_K S^*(\mu_K,\sigma^2_K)}(\Pi^N+\zeta(\tau)-\Pi^N)
  -\bigO(\mu_K) \left( 1-e^{-t_K S^*(\mu_K,\sigma^2_K)}\right)
\]
under the assumption that within ${\mathcal N}_{\mu_K}$, $\Pi^1_{i,t} \le\Pi^N$, but the difference must be bounded by $\bigO(\mu_K)$.

Since $e^{-t_K S^*(\mu_K,\sigma^2_K)}=\nu'$, and $\bigO(\mu_K)\rightarrow 0$ as $K\rightarrow 0$, $\exists K'$ such that $\forall K\ge K'$, $\forall t\in \{t_K,\ldots,m(t_K+\tau)\}$,
\[
  \Expect \Pi^1_{i,t} -\Pi^N \ge \frac{1}{2}\nu'\zeta(\tau).
\]
Thus, $\forall \nu\in \left( 0, \frac{1}{4}\nu'\zeta(\tau)\right)$, $\exists K(\nu,\tau)$ such that
\[
  \Prob\left(
    \frac{1}{m(t_K+\tau)-t_K}\sum_{t=t_K}^{m(t_K+\tau)}\Pi^1_{i,t}-\Pi^N >\nu
  \right) > 1-\nu
\]
from which the conclusion follows.
\end{proof}

\section{Convergence and Stability Calculations}
\subsection{Convergence and Stability under $\mathcal{M}^0$}\label{app:M0-conv}

In the specification $\mathcal{M}^0$, each firm $i$ forecasts its opponent's price $p_{j,t}$ by forming a discounted average of past prices. We denote the current estimate $\alpha^0_{i,t}$, and the best response function which determines the expected price is
\[
  b^0(\alpha^0_i)= \frac{A+C\alpha^0_i}{2B}.
\]
Everything is symmetric in the model, so the best response functions are the same, only the beliefs differ. We stack the beliefs into the vector $\alpha^0_t=[\alpha^0_{1,t},\alpha^0_{2,t}]'$ whose evolution follows:
\begin{eqnarray}
  \left[
    \begin{matrix}
      \alpha^0_{1,t}\\
      \alpha^0_{2,t}
      \end{matrix}
    \right] & = &
      \left[
    \begin{matrix}
      \alpha^0_{1,t-1}\\
      \alpha^0_{2,t-1}
      \end{matrix}
    \right] +\lambda_t
      \left[
    \begin{matrix}
      p_{2,t}-\alpha^0_{1,t-1} \\
      p_{1,t}-\alpha^0_{2,t-1}
      \end{matrix}
    \right]  \nonumber \\
    & = &
      \left[
    \begin{matrix}
      \alpha^0_{1,t-1}\\
      \alpha^0_{2,t-1}
      \end{matrix}
    \right] +\lambda_t
      \left[
    \begin{matrix}
      b^0(\alpha^0_{2,t-1})-\alpha^0_{1,t-1} + \epsilon_{2,t} \\
      b^0(\alpha^0_{1,t-1})-\alpha^0_{2,t-1} +  \epsilon_{1,t}
      \end{matrix}
    \right]  \label{S0-update}
\end{eqnarray}
This last equation says that the increment in beliefs $\alpha^0_{t}-\alpha^0_{t-1}$ is equal to the gain $\lambda_t$ multiplied by a stochastic update term.  We can then think of $\lambda_t$ as the notional time between observations and analyze the limit as $\lambda_t  \rightarrow 0$. Along this limit the increment between observations converges to the time derivative, and since the impact of the shocks $\epsilon_{i,t}$ is of order $\lambda$ they become asymptotically negligible. Thus as $\lambda_t \rightarrow 0$ the stochastic belief process
$\{\alpha^0_t\}$ converges to the stable stationary solution of the ordinary differential equation:
\[
  \dot \alpha^0 (t) = g(\alpha^0(t)) \equiv \left[\begin{matrix}
      b^0(\alpha^0_{2}(t))-\alpha^0_{1}(t)  \\
      b^0(\alpha^0_{1}(t))-\alpha^0_{2}(t)
      \end{matrix}
    \right]
\]
This linear differential equation has a unique equilibrium point $\bar \alpha^0$ which is symmetric and where beliefs are the Nash equilibrium prices:
\[  \dot \alpha^0 (t) =  0 \Rightarrow  \left[\begin{matrix}
      b^0(\bar \alpha^0_{2})-\bar \alpha^0_{1}  \\
      b^0(\bar \alpha^0_{1})-\bar \alpha^0_{2}
      \end{matrix}
    \right] =0
    \Rightarrow \bar \alpha^0_{1} = \bar \alpha^0_{2} \Rightarrow  \bar \alpha^0_i  = b(\bar \alpha^0_i) = \frac{A}{2B-C} = p^N.
\]
Moreover, since the differential equation is linear, the global stability of the equilibrium is determined by the eigenvalues of the Jacobian matrix:
\[ g'(\bar \alpha) = \left[\begin{array}{rr}
      -1 & \frac{C}{2B}  \\
      \frac{C}{2B} & -1
      \end{array}\right], \]
which has eigenvalues equal to $-1 \pm \frac{C}{2B}$. Since we have already assumed that $C<B$, both eigenvalues are negative, and the vector $\bar a$ is globally stable equilibrium point of the ODE.
\section{Convergence and Stability under $\mathcal{M}^1$}\label{app:M1-conv}

The analysis of beliefs under specification $\mathcal{M}^1$ is conceptually very similar, but now the dimensionality of the problem grows. Moreover, the analysis becomes slightly more complex because the dependence of the key equations on beliefs and shocks is nonlinear.   In particular, under $\mathcal{M}^1$ each firm estimates
a vector $\alpha^1_{i,t}=[\alpha^1_{0i,t}, \alpha^1_{1i,t}]'$ of the intercept and slope parameters in the reaction function \eqref{eq: perceived law 1}. Furthermore, in the recursive least squares algorithm \eqref{eq: recursive least square}, each firm also estimates the second moment matrix $R_{i,t}$.
This matrix is symmetric, and so we can economize by only tracking only the two distinct time-varying elements, which we collect in a vector $r_{it} = [R_{i,t}(1,2), R_{i,t}(2,2) ]'$.  Thus for each firm, we can stack the
beliefs into the $4\times 1$ column vector
\[
  \theta_{i,t} = [(\alpha^1_{i,t})',r_{i,t}']',
\]
and similarly can stack the beliefs of the two firms as
\[
  \alpha^1_t=[(\alpha^1_{1,t})',(\alpha^1_{2,t})']'
\]
and
\[
  \theta_t = [\theta_{1,t}',\theta_{2,t}']'.
\]
As above, we can write the best response of a firm given its current beliefs $\alpha^1_{i}$ as:
\[
  b(\alpha^1_{i}) = \frac{A+C \alpha^1_{0i}}{2(B-C\alpha^1_{1i})},
\]
which is now nonlinear in $\alpha^1_i$.

Using the best response functions the price realizations, we can rewrite the dynamics of each firm's beliefs from (\ref{eq: recursive least square}) as:
\begin{eqnarray}
\alpha^1_{i,t}
& = & \alpha^1_{i,t-1} +\lambda R^{-1}_{i,t-1}
      \left[
    \begin{matrix}
       b(\alpha^1_{j,t-1}) -\alpha^1_{0i,t-1} + \epsilon_{j,t} -\alpha^1_{1i,t-1}(b(\alpha^1_{i,t-1})+\epsilon_{i,t})  \\
       \left[b(\alpha^1_{j,t-1}) -\alpha^1_{0i,t-1} + \epsilon_{j,t} -\alpha^1_{1i,t-1}(b(\alpha^1_{i,t-1})+\epsilon_{i,t})\right](b(\alpha^1_{i,t-1})+\epsilon_{i,t})
      \end{matrix}
    \right] \nonumber
  \\
  R_{i,t}& = & R_{i,t-1}+\lambda \left( \left[ \begin{matrix}
1   &   b(\alpha^1_{i,t-1})+\epsilon_{i,t} \\
b(\alpha^1_{i,t-1})+\epsilon_{i,t} &  (b(\alpha^1_{i,t-1})+\epsilon_{i,t})^2
\end{matrix}
    \right] -R_{i,t-1} \right). \label{S1-update}
\end{eqnarray}
In addition to being nonlinear, the belief dynamics are now a quadratic function of the shocks.

For the calculations below, it will be convenient to define a vector function capturing the update terms in the $\alpha^1_{i,t}$ dynamics:
\[ G_i(\alpha^1,\epsilon_{i,t},\epsilon_{j,t}) =
\left[
    \begin{matrix}
      G_{0i}(\alpha^1) + \epsilon_{j,t} -\alpha^1_{1i}\epsilon_{i,t}  \\
      G_{0i}(\alpha^1) b(\alpha^1_i) + \left[G_{0i}(\alpha^1)-\alpha^1_{1i}b(\alpha^1_i)\right] \epsilon_{i,t} + b(\alpha^1_i)\epsilon_{j,t} + \epsilon_{i,t}\epsilon_{j,t} -\alpha^1_{1i}\epsilon_{i,t}^2
      \end{matrix}
    \right]
\]
where
\[
  G_{0i}(\alpha^1) = b(\alpha^1_{j}) -\alpha^1_{0i} -\alpha^1_{1i}b(\alpha^1_{i}).
\]
Then the first equation in the belief updating can be written:
\[
  \alpha^1_{i,t} =  \alpha^1_{i,t-1} +\lambda R^{-1}_{i,t-1}  G_i(\alpha^1_t,\epsilon_{i,t},\epsilon_{j,t}).
\]
Everything is symmetric for firm $j$, appropriately switching the roles of own and opponent's beliefs and shocks.   Using the independence of the shocks and that $E \epsilon_{it}^2 = \sigma^2$, we can then define the expected update function as:
\[ g_i(\alpha^1) = E[ G_i(\alpha^1,\epsilon_{i,t},\epsilon_{j,t})] =
\left[
    \begin{matrix}
      G_{0i}(\alpha^1)   \\
      G_{0i}(\alpha^1) b(\alpha^1_i) -\alpha^1_{1i}\sigma^2
      \end{matrix}
    \right]
\]
We can then use the same logic as above to study the small gain limit as $\lambda_t \rightarrow 0$. Relying on the same general stochastic approximation results, as the gain converges to zero the firms' beliefs converge weakly to the solution of the differential equations:
\begin{eqnarray*}
\dot \alpha^1_i(t)  &=& R_i(t)^{-1} g_i(\alpha^1(t)) \\
\dot R_i(t) &=& \left[ \begin{matrix}
1   &   b(\alpha^1_{i}(t) ) \\
b(\alpha^1_{i}(t)) &  b(\alpha^1_{i}(t))^2 + \sigma^2
\end{matrix}
    \right] -R_{i}(t).
\end{eqnarray*}
It is straightforward to see that the equilibrium of this specification leads to the same outcomes as the case of ${\mathcal M}^0$. In particular, since $R_i(t)$ is nonsingular, an equilibrium point $\bar \alpha^1$ satisfies:
\begin{equation}\label{eq cond}
\dot \alpha^1_i = 0 \Rightarrow g_i(\bar \alpha^1) = 0 \Rightarrow
      G_{0i}(\bar \alpha^1) =0  \text{ and } G_{0i}(\bar \alpha^1) b(\bar \alpha^1_i) -\bar \alpha^1_{1i}\sigma^2 = 0.
\end{equation}
The only way to satisfy both of the last two conditions is to have $\alpha^1_{1i}=0$.
Since this holds for $i=1,2$, we can then reduce $G_{0i}(\bar \alpha^1) = b(\bar \alpha^1_{j}) -\bar \alpha^1_{0i}$, so $[G_{01}(\bar \alpha^1),G_{02}(\bar \alpha^1)]=g([\bar \alpha^1_{01},\alpha^1_{02}]')$, the same function as in $\mathcal{M}^0$ above that defines
an equilibrium in that specification. Thus the
unique equilibrium is symmetric and satisfies $\bar \alpha^1_i = [p^N,0]'$, $i=1,2$.  Given $\bar \alpha^1$, the second moment matrices $R_i(t)$ converge
to the same limit point:
\[ \bar R = \left[ \begin{matrix}
1   &   p^N \\
p^N &  (p^N)^2 + \sigma^2
\end{matrix}
    \right].
\]
Verifying the local stability\footnote{since differential equation is nonlinear} is straightforward, if tedious. In particular, for local analysis we can decouple the equations in $\alpha^1$ and $R$. The stability of $\bar R$ is immediate, and while the local stability of $\bar \alpha^1$ depends on the eigenvalues of the Jacobian matrix of $g(\alpha^1)$. Via simple calculations we see that this block symmetric matrix can be written:
\[ g_\alpha (\bar \alpha^1) = \left[ \begin{matrix}
-1 & -2p^N & \frac{C}{2B} & 2 p^N \\
-p^N & -2(p^N)^2 - \sigma^2 & p^N \frac{C}{2B} & 2 (p^N)^2 \\
\frac{C}{2B} & 2p^N & -1 & -2p^N \\
p^N\frac{C}{2B} & 2(p^N)^2 & -p^N & -2(p^N)^2 - \sigma^2 \end{matrix} \right].
\]
Simple calculations show that two of the eigenvalues of this matrix are given by $-1 + \frac{C}{2B}$ and $-\sigma^2$ with the other two being:
\[
 -\frac{1}{2}\left(1+ \frac{C}{2B} + 4 (p^N)^2 + \sigma^2\right)\pm \frac{1}{2}\sqrt{\left(1+ \frac{C}{2B} + 4 (p^N)^2 + \sigma^2\right)^2-4\left(1+ \frac{C}{2B}\right)\sigma^2}.
\]
Clearly all are negative.\footnote{We use the result that for a block symmetric matrix $\left[\begin{matrix} A & B \\ B & A \end{matrix}\right]$ its eigenvalues are
the union of the eigenvalues of $A+B$ and $A-B$.}

While the equilibrium, which supports that Nash equilibrium outcomes, is stable, the mean dynamics which draw belief toward the equilibrium are weak
along some directions. For example, two of the eigenvalues of the Jacobian matrix are very small, and vanish as $\sigma^2 \rightarrow 0$.  This is a reflection of the singularity of $g_\alpha (\bar \alpha^1)$ for $\sigma^2=0$, which in turn is evident in the equilibrium conditions (\ref{eq cond}). As $\sigma^2 \rightarrow 0$ the two equilibrium restrictions in the last expression in (\ref{eq cond}) reduce to only $G_{0i}(\alpha^1)=0$.

\begin{figure}
\centerline{\includegraphics[width=0.8 \textwidth]{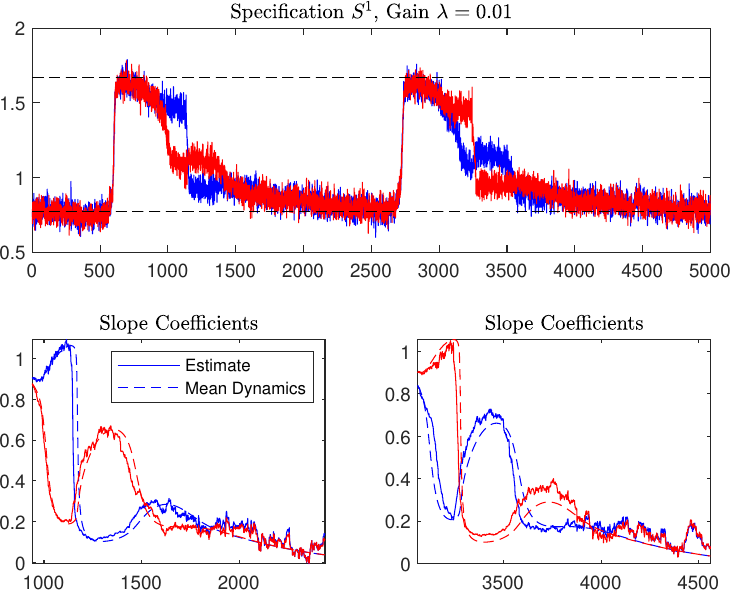}}
\caption{Simulated prices and beliefs in the duopoly model with specification $\mathcal{M}^1$. Top panel: simulated prices. Bottom panels: estimated slope coefficients and mean dynamics.}\label{fig-meand1}
\end{figure}

We illustrate the mean dynamics and convergence in Figure \ref{fig-meand1}. In the simulations above we saw asymmetries in the price cutting episodes, which lead firms to eventually cut their prices from near the joint monopoly price $p^C$ to the Nash equilibrium price $p^N$. The top panel of Figure \ref{fig-meand1} shows a similar time series of prices. The escapes from the equilibrium appear symmetric but the convergence process is asymmetric. In the first price-cutting round, starting around period 1,000 in the figure, firm 2 (in red) initially undercuts firm 1 (in blue), who later drops prices further.  In the second episode, starting around period 3,100, the roles are reversed. These differences in behavior are driven by small initial differences in beliefs, which in turn are caused by specific shock realizations. The bottom panels of the figure show the dynamics of the simulated reaction function slope coefficients for the two firms in each of the two episodes, along with the solution of the mean dynamics differential equations initialized at particular points in the simulations. The dynamics of the stochastic simulations closely follow the deterministic path of the mean dynamics. Small asymmetries in the initial conditions, driven by the small differences in the realizations of shocks after the escapes, lead to very marked differences in the dynamics of beliefs and actions.

\section{Large Deviations for a Fixed Specification}

\subsection{General Large Deviation Results}\label{app: LD background}

Here we present some general large deviation results for a fixed belief specification, drawn from our earlier work \citet*{ChoWilliamsandSargent02} and \citet*{Williams2019}.
We start by writing the belief updating equations in the general form:
\[
  \theta_{t} = \theta_{t-1} + \lambda \Psi(\theta_t,\epsilon_t),
\]
where $\theta$ is the composite belief vector for the two firms and  $\epsilon=[\epsilon_1,\epsilon_2]$ the composite shock vector.   In specification $\mathcal{M}^0$, $\theta_t = \alpha^0_t$ and  (\ref{S0-update}) shows that $\psi$ is linear in $\epsilon_t$.  In specification $\mathcal{M}^1$, we have already seen that $\theta_t$ is the stacked vector summarizing $(\alpha^1_{it},R_{it})$ for the two firms, and (\ref{S1-update}) shows that $\psi$ is a quadratic function of $\epsilon_t$. Then we can write the mean dynamics ODEs governing convergence as:
\[
  \dot \theta(t) = \psi(\theta(t) \equiv E[\Psi(\theta(t),\epsilon_t)] .
\]
We  develop a deterministic control problem whose solution determines the frequency of large deviations as well the most likely path beliefs follow when they exit a region containing the equilibrium point.  This control problem represents the beliefs as following a perturbed version of the mean dynamics:
\begin{equation}
\dot \theta = \psi(\theta) + v .\label{perturbed ODE}
\end{equation}
The ``cost'' of a particular perturbation $v$ depends on its  size relative to the volatility of beliefs, as bigger perturbations will more naturally occur with more volatility.  If the beliefs were normally distributed, then their variance would be a natural measure of volatility.  However since the belief dynamics are quadratic, we need a measure of its volatility which appropriately captures the tail behavior of beliefs.  For a general i.i.d.\ random variable, we can summarize its distribution by its moment generating function, or equivalently the log of it, which is also known as the cumulant generating function. Thus for the i.i.d\ random vector $\epsilon_{t}$, given beliefs $\theta \in \mathbb{R}^{n_\theta}$ and a fixed vector  $\beta \in \mathbb{R}^{n_\theta}$,
the log moment generating function of the beliefs (or belief increments) as:
\begin{eqnarray}
H(\theta,\beta)&=&\log E \exp \left\langle \beta ,\Psi(\theta,\epsilon_t)\right\rangle.  \label{H-static} \\
&=&  \left\langle \beta ,\psi(\theta)\right\rangle  + \log E \exp \left\langle \beta ,v_t \right\rangle, \nonumber
\end{eqnarray}\unskip
where the second line defines the perturbation as
\[
  v_t=\Psi(\theta,\epsilon_t)-\psi(\theta).
\]
We explicitly evaluate this function in the two specifications below.   Then we define the Legendre transform of the $H$  function as:
\begin{eqnarray}
L (\theta,v)&=&\sup_{\beta}
 \left[ \langle \beta, \psi(\theta) + v \rangle
  -H (\theta,\beta) \right] \label{LeGendre} \\
  &=& \sup_{\beta}
 \left[ \langle \beta,  v \rangle -  \log E \exp \left\langle \beta ,v \right\rangle \right] \nonumber.
\end{eqnarray}\unskip
The function $L$ plays the role of the instantaneous cost function for the belief perturbations $v$. As \citet*{DemboandZeitouni98} emphasize, it captures the tail behavior of the distribution of $\Psi(\theta,\epsilon)$, which is crucial for analyzing the large deviations. We calculate $L$ separately in the two specifications.

We now analyze the rate at which the beliefs escape from some set containing the equilibrium $\bar \theta$ in some given finite horizon.   Thus we fix a set $G$ with $\bar \theta \in G$, and $\bar T < \infty$.   We then choose the perturbations $v$
in \eqref{perturbed ODE} which push beliefs to the boundary $\partial G$ in the most cost effective way, where the instantaneous cost is measured by the $L$ function.  Thus we solve the control problem:
\begin{equation} \label{Scrit}
  \bar{S} = \inf_{v(\cdot), T}  \int_{0}^T
  L (\theta(s),v(s))  ds
\end{equation}
where the minimization is subject to the dynamics
(\ref{perturbed ODE})  and:
\begin{eqnarray}
 \theta(0) &=&\bar{\theta},\text{ }\theta(T)\in \partial G%
               \text{ for some }0<T\le\bar{T}.  \label{bounds}
 \end{eqnarray}\unskip
If $v \equiv 0$ then the beliefs follow the mean dynamics. The cost is zero, but the beliefs do not escape. To find the most probable escape path, we find a least cost path of
perturbations that pushes beliefs from $\bar{\theta}$ to the boundary of $G$.

To emphasize the dependence of the stochastic beliefs on the gain $\lambda$, we write the belief process as $\{ \theta_t^\lambda\}$. We now define the set of escape paths $\Gamma^\lambda (G,\bar t)$ as the set of all beliefs sequences that start sufficiently close to the equilibrium, $\|\theta_0^\lambda-\bar \theta\|<\delta$ and then exit the set $G$ at some finite time $\tau$, $\theta_\tau \notin G$ for $\tau \le \bar t = \bar T/\lambda$.  Note that by fixing the continuous time horizon $\bar T$, the discrete horizon $\bar t$ increases as the gain shrinks. Finally, let $\theta^\lambda(t)$ be the
continuous interpolation of $\{x_t^\lambda\}$, so an escape will exit the set $G$ at some time $\tau^\lambda$ with $x^\lambda(\tau^\lambda) \in \partial G$.

Then the following theorem, adapted from \citet*{Williams2019} and \citet*{KushnerandYin03} provides the large deviation principle characterizing the escapes. A similar result was stated in \citet*{ChoWilliamsandSargent02}.

\begin{theorem}\label{thm-ld}
Let $\theta^\lambda(\cdot)$ be the piecewise linear interpolation of $\{\theta_t \}$ for fixed gain $\lambda$, and let
\[
  \theta(\cdot):[0,\bar{T}] \rightarrow \mathbb{R}^{n_\theta}
\]
solve \eqref{Scrit}. Then, for any $\theta^{\lambda }(\tau ^{\lambda })$ and $\delta > 0,$ we have the following:
\begin{eqnarray*}
\limsup_{\lambda \rightarrow 0}\varepsilon \log P \left(\theta^{\lambda }(t)\notin G \ \text{ for some } \ 0<t\leq \bar{T}|\theta^{\varepsilon }(0)=\bar x \right) &=& -\bar{S},
\\
 \lim_{\lambda \rightarrow 0} \lambda
\log E(\tau ^{\lambda }) &=& \bar S, \\
\text{ } \lim_{\lambda \rightarrow 0} P \left(  \left. \| \theta^{\lambda }(\tau ^{\lambda })-\theta(T) \| <\delta \right|
\{\theta_t^\lambda\} \in \Gamma^\lambda(G,\bar t) \right) &=& 1.
\end{eqnarray*}
\end{theorem}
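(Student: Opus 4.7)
The plan is to derive all three conclusions from a single sample-path large deviation principle (LDP) for the interpolated process $\theta^\lambda(\cdot)$ on $C([0,\bar T],\mathbb{R}^{n_\theta})$, adapting the Freidlin--Wentzell construction to the stochastic approximation setting along the lines of Kushner--Yin. The first task is to verify the hypotheses that produce the LDP, namely that the cumulant generating function $H(\theta,\beta)$ in (\ref{H-static}) is finite and smooth in a neighborhood of $\beta=0$, uniformly for $\theta$ in a working compact set $K$ containing $G$ and the equilibrium $\bar\theta$, and that $\psi$ is locally Lipschitz on $K$. In $\mathcal{M}^0$ the update $\Psi$ is affine in $\epsilon$, so $H$ is a quadratic and everything is immediate. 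In $\mathcal{M}^1$ the update is quadratic in $\epsilon$, but for Gaussian shocks $H(\theta,\beta)$ is still a Gaussian integral, finite and $C^\infty$ on a neighborhood of $\beta=0$; restricting to $K$ gives the required uniform finiteness and uniform bounds on derivatives.

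Once the hypotheses are in place, the LDP produces the rate function
\[
S_{\bar T}(\phi)=\int_0^{\bar T} L\!\left(\phi(s),\dot\phi(s)-\psi(\phi(s))\right) ds
\]
on absolutely continuous paths with $\phi(0)=\bar\theta$ (and $+\infty$ otherwise), where $L$ is the Legendre transform in (\ref{LeGendre}). The first claim is then obtained by applying the LDP upper bound to the closed set of paths exiting $G$ before time $\bar T$ and the lower bound to a small $C^0$-neighborhood of a near-minimizer of (\ref{Scrit}); both bounds converge to $-\bar S$. The third claim follows by contradiction: if the probability of exiting outside a $\delta$-ball of $\theta(T)$ remained asymptotically comparable to the escape probability, the LDP upper bound would force the infimum cost to exit through the complement of that ball to equal $\bar S$, contradicting uniqueness (or essential uniqueness) of the minimizer and the coercivity of $S_{\bar T}$ on its level sets, which yields a strict gap between the two infima.

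The second claim, on the expected exit time, is finite-horizon in nature and requires a renewal/block argument. Partition time into consecutive windows of continuous length $\bar T$. Inside $G$ the mean dynamics are attracted to $\bar\theta$, so between unsuccessful attempts the beliefs return to a small neighborhood of $\bar\theta$ in $O(1)$ continuous time with probability tending to one; on the Markovian approximation error, one uses that the deviation from the mean dynamics is $o(1)$ off the rare escape events. Conditional on entering a window near $\bar\theta$, the escape probability in that window is $e^{-\bar S/\lambda+o(1/\lambda)}$ by the first claim, so the number of attempts before success is asymptotically geometric with that rate and $\lambda \log \Expect(\tau^\lambda) \to \bar S$ follows from standard renewal bounds.

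The main obstacle is the LDP verification in $\mathcal{M}^1$, because $H(\theta,\beta)$ is finite only for $\beta$ in a $\theta$-dependent neighborhood of the origin, which formally violates the classical Cram\'er-type hypotheses. The remedy is a localization: introduce a stopping time at which $\theta^\lambda$ first leaves a large compact set $K\supset G$, establish the LDP for the stopped process using uniform bounds on $H$ and its derivatives over $K$, and then show that the probability the stopped and unstopped processes disagree before $\bar T$ is $e^{-c/\lambda}$ with $c\gg \bar S$ by choosing $K$ large enough. A secondary but nontrivial point is ensuring the essential uniqueness of the minimizing escape path needed for the third claim; this follows from strict convexity of $v\mapsto L(\theta,v)$ (a consequence of essential smoothness of $H(\theta,\cdot)$) together with a mild smoothness assumption on $\partial G$ at the optimal exit point, which rules out tangential flat directions that could otherwise support a continuum of minimizers.
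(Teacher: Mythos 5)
The paper does not actually prove Theorem~\ref{thm-ld}. Immediately before the statement it says the result is ``adapted from \citet*{Williams2019} and \citet*{KushnerandYin03}'' with a similar statement in \citet*{ChoWilliamsandSargent02}; the proof lives in those references, and the appendix gives no argument for the theorem itself (only for the computation of $H$, $L$, and the Hamiltonian characterization of the minimizer). So there is no in-paper proof to compare against; what you are really reconstructing is the cited Freidlin--Wentzell-type argument for constant-gain stochastic approximation.

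With that caveat, your outline matches the structure of those references, and the identification of the localization issue for $\mathcal{M}^1$ is a real point: the expression in \eqref{H-s1} requires $V_{11}+I\succ 0$, so $H(\theta,\cdot)$ is only finite on a $\theta$-dependent neighborhood of the origin, which the paper's appendix glosses over. Two places where the sketch is thinner than what the references actually do. First, the exit-time claim: the renewal/block argument needs the strong Markov property of the constant-gain chain plus a quantitative ``no lingering'' estimate --- that from any point in $G$ the process either exits or re-enters a small neighborhood of $\bar\theta$ within a bounded continuous time with probability $1-o(1)$. This is Freidlin--Wentzell Ch.~4 (Lemma 4.2-type estimates), and ``deviation from the mean dynamics is $o(1)$ off rare events'' is a heuristic, not the estimate; without it the geometric-trial argument for $\lambda\log\Expect(\tau^\lambda)\to\bar S$ does not close. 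Second, the matching of the upper and lower bounds for the first claim requires that the infimum of $S_{\bar T}$ over the closed set of exiting paths agree with the infimum over an open neighborhood of the near-minimizer; this is a regularity condition on $G$ and on the action functional (the paper uses balls, so $\partial G$ is smooth, but it should be stated). Finally, note the theorem as written asserts a $\limsup$ equal to $-\bar S$, which implicitly packages both the upper and lower bound; your proof correctly supplies both directions, which is what is actually needed.
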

The first result shows that the probability of observing an escape on a bounded time interval is exponentially decreasing in the gain $\lambda$, with the rate given by the minimized cost. The second line shows that for small $\lambda$ the log mean  escape times from the becomes close to $\exp(\bar{S})/\lambda)$. The last result shows that with
probability approaching one, escapes occur at the point determined from the solution of the cost minimization problem.

Following standard Hamiltonian-type methods, and using the convex duality
of $H$ and $L$, the solution of the cost minimization problem satisfies the differential equations:
\begin{eqnarray}
\dot{\theta} &=& H_\beta(\theta,\beta) \label{ODE} \\
\dot{\beta}&=& -H_\theta(\theta,\beta) \nonumber
\end{eqnarray}\unskip
subject to the boundary conditions (\ref{bounds}).

\subsection{Large Deviation Calculations for $\mathcal{M}^0$}
\label{app: LD M0}

In specification $\mathcal{M}^0$, calculation of $H$ and $L$ is immediate. Since $\theta=\alpha^0$, and $\Psi$ is linear, $\psi=g(\theta)$, and $v_t=\epsilon_t$, we can write:
\begin{eqnarray*}
H(\theta,\beta) &=&  \left\langle \beta ,g(\theta)\right\rangle  + \log E \exp \left\langle \beta ,\epsilon_t \right\rangle  \\
&=& \left\langle \beta ,g(\theta)\right\rangle  + \frac{\sigma^2}{2} \beta'\beta.
\end{eqnarray*}
where we use that the distributions of $\epsilon_{it}$ are identical, and the shock vector $\epsilon_t$ is normally distributed  with mean zero and variance $\sigma^2 I$. Then we can evaluate:
\[
  L (\theta,v) = \frac{1}{2\sigma^2} v' v.
\]
Thus cost function $L$ weights the perturbations $v$ by the shock variance, just as we discussed above.

The large deviation results for specification $\mathcal{M}^0$ can be calculated explicitly and analytically. First, we modify the consider a slightly modified version of the cost minimization problem (\ref{Scrit}) where we change the terminal boundary condition in (\ref{bounds}) to $\theta(T)=x$.
This defines the least cost path from $\bar \theta$ to $x$, with cost $S(x)$,
and therefore $\bar S = \min_{x \in \partial G} S(x)$.  Via standard methods,
as in \citet*{DemboandZeitouni98}, in the limit as $\bar T \rightarrow \infty$
this function satisfies the HJB equation:
\[
 0 = \min_v \left\{ -\langle S'(x),g(x)+v \rangle  +   \frac{1}{2\sigma^2} v' v \right\} .
\]
The optimal choice of $v={\sigma^2} S'(x)$, when substituted
back into the HJB equation then implies:
\[ S'(\theta) = -\frac{2}{\sigma^2} g(\theta). \]
In turn, this implies that the least-cost escape path  can be found by initializing at $\theta(T) = x$ and then integrating backward in time the following ODE:
\[
  \dot \theta = g(\theta) + v =  g(\theta) +{\sigma^2} S'(\theta) = -g(\theta).
\]
In this case, to find the least cost escape from $\bar \theta$ to $x$, we simply reverse the mean dynamics and follow the convergent path from $x$ to $\bar \theta$.  We also see that by letting $\beta=S'(\theta)$, this solution satisfies the Hamiltonian ODE system \eqref{ODE} above:
\begin{eqnarray*}
\dot{\theta} &=& g(\theta) + \sigma^2 \beta =  g(\theta) - \sigma^2  \frac{2}{\sigma^2} g(\theta) = - g(\theta) \\
\dot{\beta} &=& -\langle \beta , g'(\theta) \rangle =  -\frac{2}{\sigma^2} \langle g'(\theta), \dot{\theta} \rangle = \frac{d}{dt} S'(\theta(t) ).
\end{eqnarray*}
Finally, given that $g(\theta)$ is linear, from the solution for $S'(\theta)$ we can solve explicitly for $S(\theta)=S(a_1,a_2)$:
\[ S(a_{01},a_{02}) = -\frac{2}{\sigma^2} \left( \frac{A}{2B} (a_{01} + a_{02}) + \frac{C}{2B} a_{01} a_{02} - \frac{1}{2}(a_{01}^2 + a_{02}^2) - \frac{A^2}{2B(2B-C)} \right) ,\]
where the constant ensures that $S(\bar a) =0$.

\subsection{Large Deviation Calculations for $\mathcal{M}^1$}\label{app: LD M1}

In this section, we analyze the large deviation properties of the specification $\mathcal{M}^1$.  While under $\mathcal{M}^0$ we could calculate the solution analytically, here we must solve the cost minimization problem (\ref{Scrit}) numerically.

First, we calculate the key elements of the cost minimization problem.   Since in the case of $\mathcal{M}^1$, the belief vector $\theta$ is a larger $8 \times 1$ vector, and the belief update function $\Psi$ is a quadratic form, the calculations are slightly more involved. First, the key terms in calculation of $H$ can be written:
\[
  \langle \beta, \Psi(\theta, \epsilon_t) \rangle =
  V_{00} - V_{01}\epsilon_{t}  -\frac{1}{2}\epsilon_{t}'V_{11}\epsilon_{t}
\]
for a constant $V_{00}$, vector $V_{01}$, and symmetric matrix $V_{11}$.  All of the $V_{ij}$ terms depend on the beliefs $\theta$ as well as the vector $\beta$. As in \citet*{Williams2019}, we can use this expression to evaluate:
\begin{equation}
 H(\theta ,\beta)= V_{00}  -\frac{1}{2}\log|V_{11}+I| + \frac{1}{2}V_{01}(V_{11}+I)^{-1}V_{01}'   \label{H-s1}
\end{equation}
The $H$ function in this case given by a relatively simple expression depending on the $V_{ii}$ matrices, but this can imply some rather complex dependence on the underlying beliefs $\theta$.

Then we analyze escapes from sets defined as closed balls of radius $\rho$ in the belief space:
\[
  G(\rho) = \{ \theta \ | \ \|\alpha^1-\bar \alpha^1\| \le \rho \}.
\]
We focus on the dynamics of the reaction function parameters $\alpha^1_{it}$, rather than the second moment matrix $R_{it}$ whose components $r_{it}$ are also in the $\theta$ vector. For a fixed $G(\rho)$, to solve for $\bar S(\rho)$ from (\ref{Scrit}), we initialize the full parameter vector at the equilibrium $\theta(0)= \bar \theta$ choose an initial vector $\beta(0)=\beta_0$. Then, using the explicit form for $H$ in \eqref{H-s1}, we solve the ODEs \eqref{ODE} from the Hamiltonian system. This requires calculation of the derivatives of $H$, which can be evaluated from \eqref{H-s1}. Note that in the process of finding $\dot \theta$ we find $v(t)= \psi(\theta(t))-\dot \theta(t)$. We supplement the ODE system with another differential equation giving that cumulates the cost :
\[ \dot S(t) = L(\theta(t),v(t)) \]
with $S(0)=0$.
We stop the solution of the ODEs when the beliefs hit the boundary of the set: $\theta(T)=\partial G(\rho)$. This provides a particular exit path, whose cost  $S(T;\beta_0)$ is dependent on the initial condition $\beta_0$. We find the minimum cost path by minimizing over $\beta_0$, which also determines the minimized cost $\bar S$.

\subsection{Relation Between Mean Dynamics and Large Deviations}\label{app: md-ldp}

\begin{figure}
\centerline{\includegraphics[width=0.9 \textwidth]{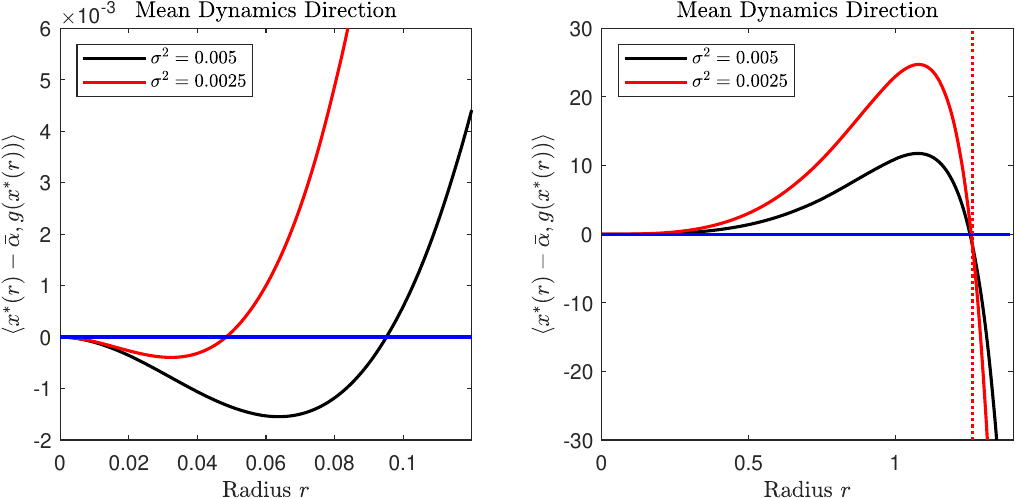}}
\caption{The direction of the mean dynamics $g(x^*(r))$ along a parameterized curve $x^*(r)$ with radius $r$ starting at the stable equilibrium.}\label{fig-escapes12}
\end{figure}

In the text we discussed the relation between the mean dynamics and escape dynamics.  Figure \ref{fig-escapes12} illustrates the directions of the mean dynamics  for two different levels of the shock standard deviation $\sigma$. The model has a unique stable equilibrium, but as $\sigma^2 \rightarrow 0$ there is a continuum of equilibria lying along a surface. Figure  \ref{fig-escapes12} shows that for any positive $\sigma^2 >0$, the equilibrium $\bar \alpha =(0,p^N)$ is locally stable, but the radius of stability shrinks with the noise. In particular, the figure shows the ``direction'' of the mean dynamics $\langle x^*(r)-\bar \alpha, g(x^*(r))\rangle$ along a line $x^*(r)$ connecting the equilibrium $\bar \alpha=(p^N,0)$ and the beliefs $(0,1)$ that support the collusive price $p^C$. When this direction is negative, the mean dynamics point toward the equilibrium, and when it is positive, the mean dynamics point away from the equilibrium.

The left panel focuses on small radii $r$, and the radius of stability of the equilibrium $\bar \alpha$ is given by the intersection with the zero line. The radius of stability is small, and shrinks with the noise $\sigma^2$. The right panel shows the direction of
the mean dynamics over a larger region. Outside the small radius of stability (which is hardly visible in this panel), the mean dynamics point away from the equilibrium for quite some distance until finally turning back. The upper intersection with zero, where the direction hits zero from above, gives a radius $r^e$, which in turn fixes the location of the belief vector $\alpha^e=x^*(r^e)$. With more volatile shocks, the mean dynamics stop short of the beliefs $(0,1)$ supporting the joint monopoly price, whose radius is shown with a vertical dotted line. But as $\sigma^2 \rightarrow 0$ the $\alpha^e$ converges to $(0,1)$.

\begin{figure}
\centerline{\includegraphics[width=0.6 \textwidth]{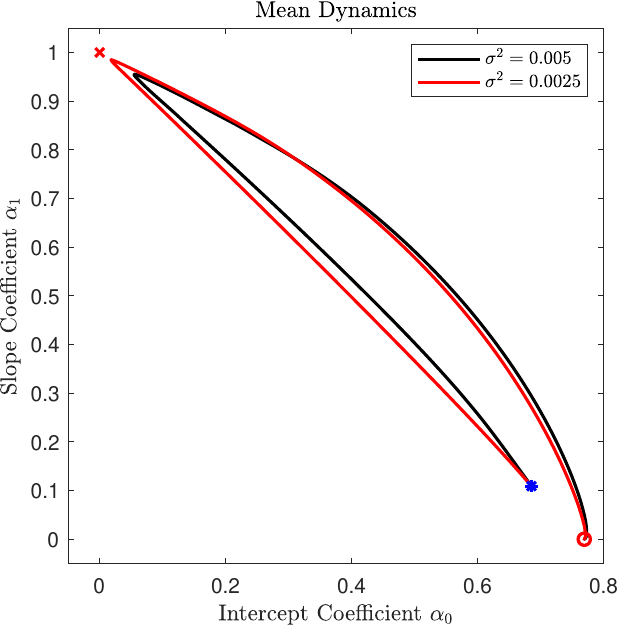}}
\caption{Mean dynamics $g(\alpha)$ from a point near the equilibrium
but outside the stability radius  for two different standard deviations $\sigma$.}\label{fig-escapes13}
\end{figure}

The mean dynamics are illustrated in Figure \ref{fig-escapes13} for the same two levels of the shock standard deviation $\sigma$. The figure shows the solution of the mean dynamics ODE initialized at a point is just outside the stability radius. The initial condition of the ODE is shown with a blue asterisk, while the equilibrium value $\bar \alpha=(p^N,0)$ is shown with a red circle, and the joint monopoly optimal belief $(1,0)$ with a red x. From an initial condition very close the equilibrium, the mean dynamics lead on an outward trajectory in the direction of the joint monopoly.  The point along these trajectories which is farthest from the equilibrium is what we identified as $\alpha^e$ above, after which the mean dynamics point back toward the equilibrium $\bar \alpha$. While the beliefs do not transit all the way to the joint monopoly, when the noise is smaller, they approach it more closely.

\end{document}